\newif\ifconf
\def\R{\mathcal{R}}
\def\A{\mathcal{A}}
\def\D{\mathcal{D}}
\def\reference{\rho}
\def\ID{\mathrm{ID}}
\newcommand{\bool}{\mathtt{bool}}
\newcommand{\ind}[1]{\mathtt{ind}\left(#1\right)}
\def\TV{{\rm TV}}
\def\Bern{{\rm Bernoulli}}
\def\dfn{:=}
\newcommand{\nut}{\tilde{\nu}}
\def\PrivUnit{\mathtt{PrivUnit}}
\def\decode{\mathtt{decode}}
\def\PrivHemi{\mathtt{PrivHS}}
\def\PrivHS{\mathtt{PrivHS}}
\def\PrivUnitOpt{\mathtt{PrivUnit Optimized}}
\newcommand*{\citet}[1]{\AtNextCite{\AtEachCitekey{\defcounter{maxnames}{2}}} \textcite{#1}}
\newcommand*{\citep}[1]{\cite{#1}}
\title{Lossless Compression of Efficient Private Local Randomizers}
\author{
  Vitaly Feldman\\
  Apple\\
  \and Kunal Talwar\\
  Apple\\
}
\date{}
\begin{document}

\maketitle

\begin{abstract}
Locally Differentially Private (LDP) Reports are commonly used for collection of statistics and machine learning in the federated setting. In many cases the best known LDP algorithms require sending prohibitively large messages from the client device to the server (such as when constructing histograms over large domain or learning a high-dimensional model). This has led to significant efforts on reducing the communication cost of LDP algorithms.

At the same time LDP reports are known to have relatively little information about the user's data due to randomization. Several schemes are known that exploit this fact to design low-communication versions of LDP algorithm but all of them do so at the expense of a significant loss in utility. Here we demonstrate a general approach that, under standard cryptographic assumptions, compresses every efficient LDP algorithm with negligible loss in privacy and utility guarantees. The practical implication of our result is that in typical applications the message can be compressed to the size of the server's pseudo-random generator seed. More generally, we relate the properties of an LDP randomizer to the power of a pseudo-random generator that suffices for compressing the LDP randomizer. From this general approach we derive  low-communication algorithms for the problems of frequency estimation and high-dimensional mean estimation. Our algorithms are simpler and more accurate than existing low-communication LDP algorithms for these well-studied problems.



\end{abstract}

\section{Introduction}
We consider the problem of collecting statistics and machine learning in the setting where data is held on a large number of user devices. The data held on devices in this \emph{federated} setting is often sensitive and thus needs to be analyzed with privacy preserving techniques. One of the key approaches to private federated data analysis relies on the use of locally differentially private (LDP) algorithms to ensure that the report sent by a user's device reveals little information about that user's data. Specifically, a randomized algorithm $\R\colon X \to Y$ is an $\eps$-DP local randomizer if for every possible output $y \in Y$, and any two possible values of user data $x_1,x_2 \in X$, $\pr[\R(x_1) =y]$ and $\pr[\R(x_2) =y]$ are within a factor of $e^\eps$  (where the probability is taken solely with respect to the randomness of the algorithm $\R$).

The concept of a local randomizer dates back to the work of \citet{Warner65} where it was used to encourage truthfulness in surveys. In the context of modern data analysis it was introduced by \citet{EvfimievskiGS03} and then related to differential privacy in the seminal work of \citet{DworkMNS:06}. Local randomizers are also used for collection of statistics and machine learning in several industrial applications \cite{erlingsson2014rappor,appledp,DKY17-Microsoft}. Practical applications such as building a histogram over a large domain or training a model with millions of parameters \citep{McMahanRT018}, require applying the randomizer to high dimensional data. Many of the standard and most accurate ways to randomize such data result in reports whose size scales linearly with the dimension of the problem. Communication from the user devices is often significantly constrained in practical applications. This limits the scope of problems in which we can achieve the best known utility-privacy trade-off and motivates significant research interest in designing communication-efficient LDP algorithms.

\subsection{Our contribution}
In this work, we explore practical and theoretical aspects of compressing outputs of LDP mechanisms. We focus on the $\eps > 1$ regime, that has been motivated by recent privacy amplification techniques based on anonymization and shuffling of LDP reports~\citep{BittauEMMR17,ErlingssonFMRTT19,CheuSUZZ19,BalleBGN19,FeldmanMT20}. It has long been noted that, by design, the output of an LDP randomizer contains a limited amount of information about the input data. Thus it should be compressible using standard tools from information theory. However, standard compression algorithms do not necessarily preserve privacy. \citet{bassily2015local} and \citet{bun2019heavy} describe general privacy preserving techniques for compressing LDP protocols. Unfortunately, their techniques result either in the loss of accuracy (as only a fraction of the user population ends up contributing a report) or an increase in $\eps$ by a constant factor which makes the approach impractical in the $\eps > 1$ regime. We remark that it is crucial to use a compression technique that preserves the privacy guarantee since in some problems accuracy scales as $e^{-\eps/2}$ when $\eps >1$. In addition, the central privacy guarantees resulting from amplification by shuffling also scale as $e^{-\eps/2}$.


We propose a general approach to compressing an arbitrary local randomizer that preserves both the privacy and accuracy (or utility) of the randomizer. At a high level it is based on replacing the true random bits used to generate the output with pseudo-random bits that can be described using a short seed. For a randomizer $\R \colon X\to Y$, we do this by first picking a fixed reference distribution $\rho$ that is data-independent and $\eps$-close (in the standard sense of differential privacy) to the output distributions of $\R$ for all possible inputs $x\in X$. Existence of such reference distribution is exactly the definition of the deletion version of local differential privacy \citep{ErlingssonFMRSTT2020} and thus our results are easiest to describe in this model.
A sample from $\rho$ typically requires many random bits to generate but, by replacing random bits with pseudo-randomly generated ones, we will obtain a distribution over values in $Y$ that can be described using a short seed. In addition, under standard cryptographic assumptions, a random sample from this distribution is computationally indistinguishable from $\rho$. Given an input $x$ we can now emulate $\R(x)$ by performing rejection sampling relative to pseudo-random samples from $\rho$. A special case of this idea appears in the work of \citet{MishraS} who apply it the problem of estimating sets of counting queries.

A crucial question is whether this scheme satisfies $\eps$ differential privacy. We show that the answer is yes if the pseudo-random generator (PRG) used is strong enough to fool a certain test that looks at the ratio of the output density of $\R(x)$ to $\rho$. This ratio is typically efficiently computable whenever the randomizer itself is efficiently computable. Thus under standard cryptographic assumptions, the privacy is preserved (up to a negligible loss). Similarly, when the processing of the reports on the server side is done by an efficient algorithm the utility will be preserved. See Theorem~\ref{thm:main} for a formal statement. Asymptotically, this result implies that if we assume that there exists an exponentially strong PRG, then the number of bits that needs to be communicated is logarithmic in the running time of the rejection sampler we defined. An immediate practical implication of this result is that in most applications the output of the local randomizer can be compressed to the size of the seed of the system (PRG) without any observable effect on utility or privacy. This size is typically less than 1024 bits. We remark that when implementing a randomizer in practice, true randomness is replaced with pseudo-randomly generated bits with an (implicit) assumption that this does not affect privacy or utility guarantees. Thus the assumptions underlying our analysis are similar to those that are already present in practical implementations of differentially private algorithms.

We demonstrate that this approach also extends to the (more common) replacement notion of local differential privacy and also to $(\eps,\delta)$-DP randomizers. In the latter case the randomizer needs to be modified to allow subsampling via simple truncation. This step adds $\delta$ to both privacy and utility guarantees of the algorithm. For replacement DP this version also requires a more delicate analysis and a stronger set of tests for the PRG. A detailed description of these results is given in Section~\ref{sec:prg}.

An important property of our analysis is that we do not need to follow the general recipe for specific randomizers. Firstly, for some randomizers it is possible to directly sample from the desired distribution over seeds instead of using rejection sampling that requires $e^\eps$ trials (in expectation). In addition, it may be possible to ensure that privacy and utility are preserved without appealing to general cryptographically secure PRGs and associated computational assumptions. In particular, one can leverage a variety of sophisticated results from complexity theory, such as $k$-wise independent PRGs and PRGs for bounded-space computation \citep{Nis92}, to achieve unconditional and more efficient compression.

We apply this fine-grained approach to the problem of frequency estimation over a discrete domain. In this problem the domain $X = [k]$ and the goal is to estimate the frequency of each element $j \in [k]$ in the dataset. This is one of the central and most well-studied problems in  private (federated) data analysis. However, for $\eps > 1$, existing approaches either require communication on the order of $k$ bits, or do not achieve the best known accuracy in some important regimes (see Sec.~\ref{sec:related} for an overview).

The best accuracy is achieved for this problem is achieved by the (asymmetric) RAPPOR algorithm \citep{erlingsson2014rappor} (which has two versions depending on whether it is used with replacement or deletion privacy) and also by the closely related Subset Selection algorithm \citep{wang2016mutual,ye2018optimal}. We observe that a pairwise-independent PRG suffices to fool both the privacy and utility conditions for this randomizer. Thus we can compress RAPPOR to $O(\log k +\eps)$ bits losslessly and unconditionally using a standard construction of a pairwise-independent PRG \citep{luby2006pairwise}. The structure of the PRG also allows us to sample the seeds efficiently without rejection sampling.
The details of this construction appear in Section~\ref{sec:prg}.

As an additional application of our techniques we consider the problem of estimating the mean of $d$-dimensional vectors in $\ell_2$-norm. This problem is a key part of various machine learning algorithms, most notably stochastic gradient descent. In the $\eps > 1$ regime, the first low-communication (specifically, $\lceil \eps \rceil \log_2 d$ bits) and asymptotically optimal algorithm was recently given by~\citet{chen2020breaking}. It is however less accurate empirically and more involved than the algorithm of \citet{bhowmick2019protection} that communicates a $d$ dimensional vector. Using our general result we can losslessly compress the algorithm from \citep{bhowmick2019protection} to $O(\log d + \eps)$ bits. One limitation of this approach is the $O(e^\eps d)$ complexity of rejection sampling in this case which can be prohibitive for large $\eps$. However we show a simple reduction of the $\eps > 1$ case to $\eps < 1$ which increases communication but a factor of $\lceil \eps \rceil$. This general reduction allows us to reduce the running time to $O( \lceil \eps \rceil d)$ and also use a simple and low-communication randomizer that is (asymptotically) optimal only when $\eps < 1$ \citep{duchi2018minimax,ErlingssonFMRSTT2020}. The details of these results and empirical comparisons appear in Section~\ref{sec:mean}.


\subsection{Related Work}
\label{sec:related}
As mentioned, the closest in spirit to our work is the use of rejection sampling in the work of \citet{MishraS}. Their analysis can be seen as a special case of ours but they only prove that the resulting algorithm satisfies $2\eps$-DP. Rejection sampling on a sample from the reference distribution is also used in existing compression schemes \citep{bassily2015local,bun2019heavy} as well as earlier work on private compression in the two-party setting~\citep{McGregorMPRTV}. These approaches assume that the sample is shared between the client and the server, namely, it requires shared randomness. Shared randomness is incompatible with the setting where the report is anonymized and is not directly linked to the user that generated it. As pointed out in \citep{bassily2015local}, a simple way to overcome this problem is to include a seed to a PRG in the output of the randomizer and have the server generate the same sample from the reference distribution as the client. While superficially this approach seems similar to ours, its analysis and properties are different. For example, in our setting only the seed for a single sample that passes rejection sampling is revealed to the server, whereas in  \citep{bassily2015local,bun2019heavy} all samples from the reference distribution are known to the server and privacy analysis does not depend on the strength of the PRG. More importantly, unlike previous approaches our compression scheme is essentially lossless (although at the cost of requiring assumptions for the privacy analysis).

Computational Differential Privacy (CDP) \citep{MironovPRV} is a notion of privacy that defends against computationally bounded adversaries. Our compression algorithm can be easily shown to satisfy the strongest \textsc{Sim}-CDP definition. At the same time,   our privacy bounds also hold for computationally unbounded adversaries as long as the LDP algorithm itself does not lead to a distinguisher. This distinction allows us to remove computational assumptions for specific LDP randomizers.

For both deletion and replacement privacy the best results for frequency estimation are achieved by variants of the RAPPOR algorithm \cite{erlingsson2014rappor} and also by a closely-related Subset Selection algorithm \citep{wang2016mutual,ye2018optimal}.
Unfortunately, both RAPPOR and Subset Selection have very high communication cost of $\approx k H(1/(e^\eps+1))$, where $H$ is the binary entropy function. This has led to numerous and still ongoing efforts to design low-communication protocols for the problem \citep{hsu2012distributed,erlingsson2014rappor,bassily2015local,kairouz2016discrete,wang2016mutual,WangBLJ:17,ye2018optimal,Acharya:2019,acharya2019communication,bun2019heavy,bassily2020practical,chen2020breaking}.

A number of low-communication algorithms that achieve asymptotically optimal bounds in the $\eps < 1$ regime are known \citep{bassily2015local,WangBLJ:17,Acharya:2019,acharya2019communication,bassily2020practical,chen2020breaking}. The first low-communication algorithm that achieves asymptotically optimal bounds in the $\eps > 1$ regime is given in \citep{WangBLJ:17}. It communicates $O(\eps)$ bits and relies on shared randomness. However, it matches the bounds achieved by RAPPOR only when $e^\eps$ is an integer.
\citet{acharya2019communication} and \citet{chen2020breaking} give closely related approaches that are asymptotically optimal and use $\log_2 k$ bits of communication (without shared randomness). However both the theoretical bounds and empirical results for these algorithms are noticeably worse than those of (asymmetric) RAPPOR and Subset Selection (e.g.~plots in \citep{chen2020breaking} show that these algorithms are $\approx$15-20\% worse for $\eps =5$ than Subset Selection\footnote{The error of asymmetric RAPPOR (namely 0 and 1 are flipped with different probabilities) is essentially identical to that of the Subset Selection randomizer. Comparisons with RAPPOR often use the symmetric RAPPOR  which is  substantially worse than the asymmetric version for the replacement notion of differential privacy. See Section~\ref{sec:rappor} for details.}). The constructions in \citep{acharya2019communication,chen2020breaking} and their analysis are also substantially more involved than RAPPOR.

A closely related problem is finding ``heavy hitters'', namely all elements $j \in [k]$ with counts higher than some given threshold. In this problem the goal is to avoid linear runtime dependence on $k$ that would result from doing frequency estimation and then checking all the estimates. This problem is typically solved using a ``frequency oracle'' which is an algorithm that for a given $j\in [k]$ returns an estimate of the number of $j$'s held by users (typically without computing the entire histogram) \citep{bassily2015local,bassily2020practical,bun2019heavy}. Frequency estimation is also closely related to the discrete distribution estimation problem in which inputs are sampled from some distribution over $[k]$ and the goal is to estimate the distribution \citep{ye2018optimal,Acharya:2019,acharya2019communication}. Indeed, bounds for frequency estimation can be translated directly to bounds on distribution estimation by adding the sampling error.

Mean estimation has attracted a lot of attention in recent years as it is an important subroutine in differentially private (stochastic) gradient descent algorithms~\cite{BassilyST14a, AbadiCGMMTZ:16} used in private federated learning~\cite{kairouz2019advances}.
Indeed, private federated optimization algorithms aggregate updates to the model coming from each client in a batch of clients by getting a private estimate of the average update. When the models are large, the dimensionality of the update $d$ leads to significant communication cost. Thus reducing the communication cost of mean estimation has been studied in many works with~\cite{AgarwalSYKM18, girgis2020shuffled, chen2020breaking, GKMM19} or without privacy~\citep{AlistrahGLTV17,FaghriTMARR20,SYKM17,GKMM19,MT20}.

In the absence of communication constraints and $\eps < d$, the optimal $\eps$-LDP protocols for this problem achieve an expected squared $\ell_2$ error of $\Theta(\frac{d}{n \min(\eps, \eps^2)})$ \citep{duchi2018minimax,DuchiR19}.  When $\eps \leq 1$, the randomizer of~\citet{duchi2018minimax} also achieves the optimal $O(\frac{d}{n\eps^2})$ bound. Recent work of~\citet{ErlingssonFMRSTT2020}  gives a low-communication version of this algorithm. Building on the approach in~\citep{duchi2018minimax}, \citet{bhowmick2019protection} describe the $\PrivUnit$ algorithm that achieves the asymptotically optimal accuracy also when $\eps > 1$ but has communication cost of $\Omega(d)$.

An alternative approach in the $\eps < 1$ regime was given by~\citet{FeldmanGV:15} who show that the mean estimation problem can be solved by having each client answer a single counting query. This approach is based on Kashin's representation that maps vectors in the unit $d$-dimensional ball to vectors in $[-1,1]^{O(d)}$ ~\citep{Lyubarskii:2010}. Their work does not explicitly discuss the communication cost and assumes that the server can pick the randomizer used at each client. However it is easy to see that a single bit suffices to answer a counting query and therefore an equivalent randomizer can be implemented using $\lceil \log_2 d \rceil +1$ bits of communication (or just 1 bit if shared randomness is used). \citet{chen2020breaking} give a randomizer based on the same idea that also achieves the asymptotically optimal bound in the $d > \eps > 1$ regime. Their approach uses $\lceil \eps \rceil \log_2 d$ bits of communication. Computing Kashin's representation is more involved than algorithms in \citep{duchi2018minimax,bhowmick2019protection}. In addition, as we demonstrate empirically\footnote{Plots in \citep{chen2020breaking} also compare their algorithm with $\PrivUnit$ yet as their code at \citep{KashinImplement2021} shows and was confirmed by the authors, they implemented the algorithm from \citep{duchi2018minimax} instead of $\PrivUnit$ which is much worse than $\PrivUnit$ for $\eps =5$. The authors also confirmed that parameters stated in their figures are incorrect so cannot be directly compared to our results.}, the variance of the estimate resulting from this approach is nearly a factor of $5\times$ larger for typical parameters of interest.


\eat{
Dart throwing goes back to at least~\citet{Broder}, and has been used in complexity theory in~\citet{Holenstein}.
Unfortunately, both RAPPOR and Subset Selection have very high communication cost of $\approx k H(1/(e^\eps+1))$, where $H$ is the binary entropy function. This has lead to numerous and still ongoing efforts to design low-communication protocols for the problem \citep{hsu2012distributed,erlingsson2014rappor,bassily2015local,kairouz2016discrete,wang2016mutual,WangBLJ:17,ye2018optimal,Acharya:2019,acharya2019communication,bun2019heavy,bassily2020practical,chen2020breaking}.
}

\section{Preliminaries}
For a positive integer $k$ we denote $[k] = \{1,2\ldots,k\}$. For an arbitrary set $S$ we use $x\sim S$ to mean that $x$ is chosen randomly and uniformly from $S$.

Differential privacy (DP) is a measure of stability of a randomized algorithm. It bounds the change in the distribution on the outputs when one of the inputs is either removed or replaced with an arbitrary other element. The most common way to measure the change in the output distribution is via approximate infinity divergence. More formally, we say that two probability distributions $\mu$ and $\nu$ over (finite) domain $Y$ are $(\eps, \delta)$-\emph{close} if for all $E \subset Y$, \[e^{-\eps} (\mu(E) - \delta) \le \nu(E) \le e^{\eps} \mu(E) +\delta.\]
This condition is equivalent to $\sum_{y\in Y} |\mu(y) - e^\eps \nu(y)|_+ \leq \delta$ and $\sum_{y\in Y} |\nu(y) - e^\eps \mu(y)|_+ \leq \delta$, where $|a|_+ \dfn \max\{a,0\}$ \citep{DworkRoth:14}.
We also say that two random variables $P$ and $Q$ are $(\eps, \delta)$-close if their probability distributions are $(\eps,\delta)$-close. We abbreviate $(\eps,0)$-close to $\eps$-close.


Algorithms in the local model of differential privacy and federated data analysis rely on the notion of {\em local randomizer}.
\begin{defn}
An algorithm $\R\colon X\to Y$ is an $(\eps, \delta)$-DP \emph{local randomizer}  if
for all pairs $x,x'\in \D$, $\R(x)$ and $\R(x')$ are $(\eps, \delta)$-close.
\end{defn}
We will also use the add/delete variant of differential privacy which was defined for local randomizers in \citep{ErlingssonFMRSTT2020}.
\begin{defn}
An algorithm $\R\colon X\to Y$ is a {\em deletion $(\eps, \delta)$-DP local randomizer} if
there exists a reference distribution $\reference$ such that for all data points $x\in X$, $\R(x)$ and $\reference$ are $(\eps, \delta)$-close.
\end{defn}
It is easy to see that a replacement $(\eps,\delta)$-DP algorithm is also a deletion $(\eps,\delta)$-DP algorithm, and that
a deletion $(\eps,\delta)$-DP algorithm is also a replacement $(2\eps,2\delta)$-DP algorithm.

\paragraph{Fooling and Pseudorandomness:}
The notion of pseudorandomness relies on ability to distinguish between the output of the generator and true randomness using a family of tests, where a test is a boolean function (or algorithm).

\begin{defn}
Let $\D$ be a family of boolean functions over some domain $Y$. We say that two random variables $P$ and $Q$ over $Y$ are $(\D,\beta)$-indistinguishable if for all $D\in \D$,
\[\left| \pr[D(P) =1] - \pr[D(Q) =1] \right| \leq \beta .\]
We say that $P$ and $Q$ are $(T,\beta)$-computationally indistinguishable if $P$ and $Q$ are $(\D,\beta)$-indistinguishable with $\D$ being all tests that can be computed in time $T$ (for some fixed computational model such as boolean circuits).
\end{defn}

We now give a definition of a pseudo-random number generator.
\begin{defn}[Pseudo-random generator]\label{def:prg}
We say that an algorithm $G\colon \zo^n \to \zo^m$ where $m \gg n$, $\beta$-fools a family of tests $\D$ if $G(s)$ for $s \sim \zo^n$ is $(\D,\beta)$-indistinguishable from $r$ for $r\sim \zo^m$. We refer to such an algorithm as $(\D,\beta)$-PRG and also use $(T,\beta)$-PRG to refer to $G$ that $\beta$-fools all tests running in time $T$.
\end{defn}

Standard cryptographic assumptions (namely that one-way functions exist) imply that for any $m$ and $T$ that are polynomial in $n$ there exists an efficiently computable $(\beta,T)$-PRG $G$, for negligible $\beta$ (namely, $\beta = 1/n^{\omega(1)}$). For a number of standard approaches to cryptographically-secure PRGs, no tests are known that can  distinguish the output of the PRG from true randomness with $\beta = 2^{-o(n)}$ in time $T= 2^{o(n)}$. For example finding such a test for a PRG based on SHA-3 would be a major breakthrough. To make the assumption that such a test does not exist we refer to a $(\beta,T)$-PRG for $\beta = 2^{-\Omega(n)}$ and $T= 2^{\Omega(n)}$ as an {\em exponentially strong} PRG.

\eat{
The simulation-based notion of computational differential privacy (CDP) is defined as computational indistinguishability from an algorithm that is differentially private. This notion was defined in the context of algorithms operating on a dataset.

\begin{defn}[Simulation-based computational DP]\label{def:cdp}
An algorithm $\A \colon X^n \to Y$ is $(T,\beta)$-computationally $(\eps,\delta)$-DP if there exists an $(\eps,\delta)$-DP $\A' \colon X^n \to Y$ such that for all datasets $S \in X^n$, $\A(S)$ is $(T,\beta)$-computationally indistinguishable from $A'(S)$.
\end{defn}
}

\section{Local Pseudo-Randomizers}
\label{sec:prg}
In this section we describe a general way to compress LDP randomizers that relies on the complexity of the randomizer and subsequent processing. We will first describe the result for deletion $\eps$-DP and then give the versions for replacement DP and $(\eps,\delta)$-DP.

For the purpose of this result we first need to quantify how much randomness a local randomizer needs. We will say that a randomizer $\R\colon X\to Y$ is $t$-samplable if there exists a deterministic algorithm $\R_\emptyset \colon \zo^t \to Y$ such that for $r$ chosen randomly and uniformly from $\zo^t$, $\R_\emptyset(r)$ is distributed according to the reference distribution of $\R$ (denoted by $\reference$).
Typically, for efficiently computable randomizers, $t$ is polynomial in the size of the output $\log(|Y|)$ and $\eps$. Note that every value $y$ in the support of $\rho$ is equal to $\R_\emptyset(r)$ for some $r$. Thus every element that can be output by $\R$ can be represented by some $r\in \zo^t$.

Our goal is to compress the communication by restricting the output from all those that can be represented by $r\in  \zo^t$ to all those values in $Y$ that can be represented by a $t$-bit string generated from a seed of length $\ell \ll t$ using some PRG $G\colon \zo^\ell \to \zo^t$. We could then send the seed to the server and let the server first generate the full $t$-bit string using $G$ and then run $\R_\emptyset$ on it. The challenge is to do this efficiently while preserving the privacy and utility guarantees of $\R$.

Our approach is based on the fact that we can easily sample from the pseudo-random version of the reference distribution $\rho$ by outputting  $\R_\emptyset(G(s))$ for a random and uniform seed $s$. This leads to a natural way to define a distribution over seeds on a input $x$: a seed $s$ is output with probability that is proportional to $\frac{\pr[\R(x) =\R_\emptyset(G(s))]}{\pr_{r\sim \zo^t}[\R_\emptyset(r) = \R_\emptyset(G(s))]}$. Specifically we define the desired randomizer as follows.
\begin{defn}
\label{def:rg}
For a $t$-samplable deletion DP local randomizer $\R\colon X\to Y$ and a function $G \colon \zo^\ell \to \zo^t$ let $\R[G]$ denote the local randomizer that given $x\in X$, outputs $s\in \zo^t$ with probability proportional to $\frac{\pr[\R(x) = \R_\emptyset(G(s))]}{\pr_{r\sim \zo^t}[\R_\emptyset(r) = \R_\emptyset(G(s))]}$.
\end{defn}

For some combinations of a randomizer $\R$ and PRG $G$ there is an efficient way to implement $\R[G]$ directly (as we show in one of our applications). In the general case, when such algorithm may not exist we can sample from $R[G](x)$ by applying rejection sampling to uniformly generated seeds. A special case of this approach is implicit in the work of \citet{MishraS} (albeit with a weaker analysis). Rejection sampling only requires an efficient algorithm for computing the ratio of densities above to sufficiently high accuracy. We describe the resulting algorithm below.

\begin{algorithm}[htb]
	\caption{$\R[G,\gamma]$: PRG compression of $\R$}\label{alg:compressv1}
	\begin{algorithmic}[1]
		\REQUIRE $x\in X$, $\eps,\gamma >0$; seeded PRG $G \colon \zo^\ell \to \zo^t$; $t$-samplable $\eps$-DP randomizer $\R$.
        \STATE $J = e^\eps \ln(1/\gamma)$
        \FOR {$j=1,\ldots,J$}
		\STATE Sample a random seed $s \in \zo^\ell$.
		    \STATE $y = \R_\emptyset(G(s))$
		    \STATE Sample $b$ from  $\Bern\left(\frac{\pr[\R(x) = y]}{e^\eps \pr_{r\sim \zo^t}[\R_\emptyset(r) = y]}\right)$
            \IF{$b == 1$}
                \STATE BREAK
            \ENDIF
		\ENDFOR
		\STATE Send $s$
\end{algorithmic}
\end{algorithm}

Naturally, the output of this randomizer can be decompressed by applying $G\circ \R_\emptyset$ to it. It is also clear that the communication cost of the algorithm is $\ell$ bits.

Next we describe the general condition on the PRG $G$ that suffices for ensuring that the algorithm that outputs a random seed with correct probability is differentially private.
\begin{lem}
\label{lem:prg-privacy}
For  a $t$-samplable deletion $\eps$-DP local randomizer $\R\colon X\to Y$ and $G \colon \zo^\ell \to \zo^t$, let $\D$ denote the following family of tests which take $r'\in \zo^t$ as an input:
\ifconf
 \[ \left\{\left. \ind{\frac{\pr[\R(x) = \R_\emptyset(r')]}{\pr\limits_{r\sim \zo^t}[\R_\emptyset(r) = \R_\emptyset(r')]} \geq \theta}\ \right|\begin{array}{l}x\in X,\\ \theta \in [0,e^\eps]\end{array}\right\} , \]

\else
 \[ \D \dfn \left\{\left. \ind{\frac{\pr[\R(x) = \R_\emptyset(r')]}{\pr_{r\sim \zo^t}[\R_\emptyset(r) = \R_\emptyset(r')]} \geq \theta}\ \right|\ x\in X,\theta \in [0,e^\eps]\right\} , \]
 \fi
 where $\mathtt{ind}$ denotes the $\zo$ indicator function of a condition.
 If $G$ $\beta$-fools $\D$ for $\beta < 1/(2e^\eps)$ then $R[G]$ is a deletion $(\eps+ 2 e^\eps \beta)$-DP local randomizer. Furthermore, for every $\gamma > 0$, $\R[G,\gamma]$ is a deletion $(\eps+ 2 e^\eps \beta)$-DP local randomizer.
\end{lem}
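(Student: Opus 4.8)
The plan is to take the uniform distribution on $\zo^\ell$ as the common reference distribution for both $\R[G]$ and $\R[G,\gamma]$, and to deduce deletion $(\eps+2e^\eps\beta)$-DP from the stronger pointwise statement that for every $x\in X$ and every seed $s$,
\[
e^{-(\eps+2e^\eps\beta)}\,2^{-\ell}\ \le\ \pr[\R[G](x)=s]\ \le\ e^{\eps+2e^\eps\beta}\,2^{-\ell};
\]
summing over an arbitrary event $E\subseteq\zo^\ell$ then yields $(\eps+2e^\eps\beta)$-closeness to uniform. Write $y_s\dfn\R_\emptyset(G(s))$, $w_x(s)\dfn\frac{\pr[\R(x)=y_s]}{\pr_{r\sim\zo^t}[\R_\emptyset(r)=y_s]}$, and $Z_x\dfn\sum_{s\in\zo^\ell}w_x(s)$, so $\pr[\R[G](x)=s]=w_x(s)/Z_x$. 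By $t$-samplability, $\pr_{r\sim\zo^t}[\R_\emptyset(r)=y]=\rho(y)$ and each $y_s$ satisfies $\rho(y_s)\ge 2^{-t}>0$, so applying the deletion $\eps$-DP guarantee of $\R$ to the singleton $\{y_s\}$ gives $w_x(s)\in[e^{-\eps},e^\eps]$ for every $s$. The crude consequence $Z_x\in[2^\ell e^{-\eps},2^\ell e^\eps]$ only gives $\pr[\R[G](x)=s]/2^{-\ell}\in[e^{-2\eps},e^{2\eps}]$, i.e. $2\eps$-DP; the whole point is to control $Z_x$ far more tightly using the PRG.

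This is where the test family $\D$ comes in. Set $h_x(r')\dfn\frac{\pr[\R(x)=\R_\emptyset(r')]}{\pr_{r\sim\zo^t}[\R_\emptyset(r)=\R_\emptyset(r')]}$, a map $\zo^t\to[0,e^\eps]$ whose threshold tests $\ind{h_x(\cdot)\ge\theta}$, $\theta\in[0,e^\eps]$, are exactly the members of $\D$. Then $Z_x/2^\ell=\mathbb{E}_{s\sim\zo^\ell}[h_x(G(s))]$ while $\mathbb{E}_{r\sim\zo^t}[h_x(r)]=\sum_y\rho(y)\cdot\frac{\pr[\R(x)=y]}{\rho(y)}=1$. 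Since $h_x$ is nonnegative and bounded by $e^\eps$, the layer-cake identity $\mathbb{E}[W]=\int_0^{e^\eps}\pr[W\ge\theta]\,d\theta$ writes each of these expectations as an integral over $\theta\in[0,e^\eps]$ of tail probabilities of tests in $\D$; as $G$ $\beta$-fools $\D$, the integrands differ by at most $\beta$ at every $\theta$, so $\left|Z_x/2^\ell-1\right|\le\int_0^{e^\eps}\beta\,d\theta=e^\eps\beta$, that is $Z_x\in\big[2^\ell(1-e^\eps\beta),\,2^\ell(1+e^\eps\beta)\big]$. Combined with $w_x(s)\in[e^{-\eps},e^\eps]$ this gives $\frac{\pr[\R[G](x)=s]}{2^{-\ell}}=\frac{2^\ell w_x(s)}{Z_x}\in\big[\frac{e^{-\eps}}{1+e^\eps\beta},\,\frac{e^\eps}{1-e^\eps\beta}\big]$, and since $\beta<1/(2e^\eps)$ places $u\dfn e^\eps\beta$ in $[0,\tfrac12)$, the elementary inequalities $\tfrac1{1+u}\ge e^{-2u}$ and $\tfrac1{1-u}\le e^{2u}$ show this interval is contained in $[e^{-(\eps+2e^\eps\beta)},e^{\eps+2e^\eps\beta}]$. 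This proves the claim for $\R[G]$; note also $w_x(s)/e^\eps\in[0,1]$, so the Bernoulli parameter in Algorithm~\ref{alg:compressv1} is legitimate.

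For $\R[G,\gamma]$ I would show its output law is a mixture of $\mathrm{Unif}(\zo^\ell)$ and $\R[G](x)$, which suffices since $(\eps',0)$-closeness to a fixed distribution is preserved under mixtures (the pointwise density ratio of a convex combination lies between those of its components). In a single round of Algorithm~\ref{alg:compressv1} a uniform seed is accepted with probability $w_x(s)/e^\eps$, so the per-round acceptance probability is $p_x\dfn Z_x/(2^\ell e^\eps)\in(0,1]$ and the seed conditioned on acceptance has law $s\mapsto\frac{2^{-\ell}w_x(s)/e^\eps}{p_x}=w_x(s)/Z_x$, which is precisely $\R[G](x)$, independently of the round in which acceptance occurs. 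Hence conditioned on at least one of the $J$ rounds accepting (probability $1-(1-p_x)^J$) the emitted seed has law $\R[G](x)$, while conditioned on all $J$ rounds failing it has the rejection law $\nu$ with $\nu(s)\propto 2^{-\ell}(1-w_x(s)/e^\eps)$, which satisfies $\mathrm{Unif}(\zo^\ell)=p_x\,\R[G](x)+(1-p_x)\,\nu$. Substituting out $\nu$ and letting the geometric factors telescope gives
\[
\R[G,\gamma](x)\ =\ (1-p_x)^{J-1}\,\mathrm{Unif}(\zo^\ell)\ +\ \big(1-(1-p_x)^{J-1}\big)\,\R[G](x),
\]
a mixture of two distributions both $(\eps+2e^\eps\beta)$-close to $\mathrm{Unif}(\zo^\ell)$ (the first trivially, the second by the previous paragraph), hence itself $(\eps+2e^\eps\beta)$-close to it.

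The genuine content is the layer-cake step converting $\beta$-fooling of $\D$ into $|Z_x/2^\ell-1|\le e^\eps\beta$; everything after that is bookkeeping plus the two one-line exponential estimates. The step I expect to look like an obstacle is the behavior of Algorithm~\ref{alg:compressv1} when every trial is rejected and the last seed is sent regardless --- a priori this could inject an uncontrolled, $\gamma$-dependent distortion into the output distribution. It does not, by the mixture identity above: the failure branch merely places extra weight $(1-p_x)^{J-1}$ on $\mathrm{Unif}(\zo^\ell)$ itself, which cannot violate closeness to $\mathrm{Unif}(\zo^\ell)$. (Alternatively one can bypass the mixture argument and compute $\pr[\R[G,\gamma](x)=s]=2^{-\ell}\big((1-p_x)^{J-1}+\tfrac{w_x(s)}{e^\eps}\cdot\tfrac{1-(1-p_x)^{J-1}}{p_x}\big)$ directly, then bound it via $w_x(s)/e^\eps\in[e^{-2\eps},1]$ and $p_x\in[\tfrac{1-e^\eps\beta}{e^\eps},\tfrac{1+e^\eps\beta}{e^\eps}]$.)
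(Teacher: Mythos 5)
Your proof is correct and follows essentially the same route as the paper's: take the uniform distribution over $\zo^\ell$ as the reference, use the layer-cake identity to convert $\beta$-fooling of the threshold tests in $\D$ into the bound $\left|\E_{s\sim\zo^\ell}[\pi_x(\R_\emptyset(G(s)))]-1\right|\le e^\eps\beta$ on the normalizer, and combine with the pointwise bound $\pi_x\in[e^{-\eps},e^\eps]$; your direct estimate versus the paper's argument by contradiction, and your $1/(1\pm u)$ versus the paper's $e^{\pm 2u}$ bookkeeping, are cosmetic differences. One place where you are actually more careful than the paper: for $\R[G,\gamma]$ the paper asserts that upon rejecting all $J$ samples the output is distributed according to $\rho_G$, which is not literally true of the conditional law of the last seed; your exact mixture identity $\R[G,\gamma](x)=(1-p_x)^{J-1}\rho_G+\bigl(1-(1-p_x)^{J-1}\bigr)\R[G](x)$ shows why the conclusion nevertheless holds, and is a cleaner justification of the same quasi-convexity step.
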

\ifconf\else
\begin{proof}
We demonstrate that if $\R[G]$ is not a deletion $(\eps+ 2 e^\eps \beta)$-DP randomizer then there exists a test in $\D$ that distinguishes the output of $G$ from true randomness that succeeds with probability at least $\beta$.
To analyze the privacy guarantees of $\R[G]$ we let the reference distribution $\rho_G$ be the uniform distribution over $\zo^\ell$.
For brevity, for $y\in Y$ we denote the density ratio of $\R(x)$ to $\rho$ at $y$ by
\[ \pi_x(y) \dfn \frac{\pr[\R(x) = y]}{\pr_{r\sim \zo^t}[\R_\emptyset(r) = y]} .\]
Then, $\R[G]$ outputs a seed $s$ with probability:
\[ \mu_x(s) \dfn \frac{\pi_x(\R_\emptyset(G(s)))}{\sum_{s' \in \zo^\ell}\pi_x(\R_\emptyset(G(s')))} .\]
By definition of our reference distribution, $\rho_G(s) = 2^{-\ell}$ for all $s$. Therefore
\[\frac{\mu_x(s)}{\rho_G(s)} = \frac{ \mu_x(s)}{2^{-\ell}} =  \frac{\pi_x(\R_\emptyset(G(s)))}{\E_{s' \sim \zo^\ell}[\pi_x(\R_\emptyset(G(s')))]} .\]

We observe that, by the fact that $\R(x)$ is $\eps$-DP we have that $\pi_x(\R_\emptyset(G(s))) \in [e^{-\eps},e^\eps]$. Therefore, to show that $\R[G]$ is $(\eps +2 e^\eps \beta)$-DP, it suffices to show that the denominator is in the range $[e^{-2 e^\eps \beta},e^{2 e^\eps \beta}]$. To show this, we assume for the sake of contradiction that it is not true. Namely, that either
\[ \E_{s' \sim \zo^\ell}[\pi_x(\R_\emptyset(G(s')))]  > e^{2 e^\eps \beta} > 1+ e^\eps \beta \] or
\[ \E_{s' \sim \zo^\ell}[\pi_x(\R_\emptyset(G(s')))] < e^{-2 e^\eps \beta} <1- e^\eps \beta, \]
where we used the assumption that $ \beta <1/(2 e^\eps)$ in the second inequality.

We first deal with the case when $\E_{s' \sim \zo^\ell}[\pi_x(\R_\emptyset(G(s')))] > 1+e^\eps \beta$ (as the other case will be essentially identical). Observe that for true randomness we have:
\[\E_{r' \sim \zo^t} \lb\pi_x(\R_\emptyset(r')) \rb =  \E_{r' \sim \zo^t} \lb \frac{\pr[\R(x) = \R_\emptyset(r')]}{\pr_{r\sim \zo^t}[\R_\emptyset(r) = \R_\emptyset(r')]}\rb = 1 .\]
Using the fact that $\pi_x(y) \in [0,e^\eps]$ we have that
\[ \E_{s' \sim \zo^\ell}[\pi_x(\R_\emptyset(G(s')))] = \int_{0}^{e^\eps} \pr_{s' \sim \zo^\ell}[\pi_x(\R_\emptyset(G(s'))) \geq \theta] d\theta \] and, similarly,
\[\E_{r' \sim \zo^t} \lb\pi_x(\R_\emptyset(r')) \rb = \int_{0}^{e^\eps} \pr_{r' \sim \zo^t} \lb\pi_x(\R_\emptyset(r')) \geq \theta \rb d\theta .\]
Thus, by our assumption,
\[\int_{0}^{e^\eps} \lp \pr_{s' \sim \zo^\ell}[\pi_x(\R_\emptyset(G(s'))) \geq \theta] - \pr_{r' \sim \zo^t} \lb\pi_x(\R_\emptyset(r')) \geq \theta \rb \rp d\theta > e^\eps \beta .\]
This implies that there exists $\theta \in [0,e^\eps]$ such that
\[\pr_{s' \sim \zo^\ell}[\pi_x(\R_\emptyset(G(s'))) \geq \theta] - \pr_{r' \sim \zo^t} \lb\pi_x(\R_\emptyset(r')) \geq \theta \rb > \beta .\]
Note that $\ind{\pi_x(\R_\emptyset(r')) \geq \theta} \in \D$, for all $x \in X$ and $\theta \in [0,e^\eps]$ contradicting our assumption on $G$. Thus we obtain that $\E_{s' \sim \zo^\ell}[\pi_x(\R_\emptyset(G(s')))] \leq 1+e^\eps \beta$. We can arrive at a contradiction in the case when $\E_{s' \sim \zo^\ell}[\pi_x(\R_\emptyset(G(s')))] < 1 - e^\eps \beta$ in exactly the same way.

To show that $\R[G,\gamma]$ is a deletion $(\eps+ 2 e^\eps \beta)$-DP local randomizer we observe that for every $x$, conditioned on accepting one of the samples, $\R[G,\gamma](x)$ outputs a sample distributed exactly according to $\R[G](x)$. If $\R[G,\gamma](x)$ does not accept any samples than it samples from the reference distribution $\rho_G$. Thus given that $\R[G](x)$ is $(\eps+ 2 e^\eps \beta)$-close to $\rho_G$ we have that the output distribution $\R[G,\gamma](x)$ is also $(\eps+ 2 e^\eps \beta)$-close to $\rho_G$.
\end{proof}
\fi
Unlike the preservation of privacy, conditions on the PRG under which we can ensure that the utility of $\R$ is preserved depend on the application. Here we describe a general result that relies only on the efficiency of the randomizer to establish computational indistinguishability of the output of our compressed randomizer from the output of the original one.


\ifconf\else
As the first step, we show that when used with the identity $G$, the resulting randomizer is $\gamma$-close in total variation distance to $\R$.
\begin{lem}
\label{lem:rejection-randomizer}
Let $\R$ be a deletion $\eps$-DP $t$-samplable local randomizer. Then for the identity function $\ID_t \colon \zo^t \to \zo^t$ and any $\gamma > 0$ we have that $\R[\ID_t,\gamma]$ is a deletion $\eps$-DP local randomizer and for every $x\in \X$, $\TV(\R_\emptyset(\R[\ID_t,\gamma](x)),\R(x)) \leq \gamma$.
\end{lem}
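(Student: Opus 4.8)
The privacy claim will follow immediately from Lemma~\ref{lem:prg-privacy}. The identity map satisfies $\ID_t(s) = s$, so for $s\sim\zo^t$ the string $\ID_t(s)$ is exactly uniform on $\zo^t$; hence $\ID_t$ $0$-fools every family of tests, in particular the family $\D$ associated with $\R$ in Lemma~\ref{lem:prg-privacy}. Applying that lemma with $\beta = 0$ gives that $\R[\ID_t,\gamma]$ is a deletion $(\eps + 2e^\eps\cdot 0)$-DP, i.e.\ deletion $\eps$-DP, local randomizer.

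For the total variation bound I would analyze one iteration of the rejection-sampling loop of Algorithm~\ref{alg:compressv1} instantiated with $G = \ID_t$ (so $\ell = t$ and the sampled seed $s$ is just a uniform $r\in\zo^t$). Recall from the proof of Lemma~\ref{lem:prg-privacy} the density ratio $\pi_x(y) \dfn \pr[\R(x)=y]/\pr_{r\sim\zo^t}[\R_\emptyset(r)=y]$. An iteration draws $r\sim\zo^t$, sets $y = \R_\emptyset(r)$, and accepts with probability $\pi_x(y)/e^\eps$, which lies in $[0,1]$ since $\pi_x(y)\in[e^{-\eps},e^\eps]$ by the $\eps$-DP guarantee of $\R$. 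Two facts drive the argument: (i) the per-iteration acceptance probability is $\E_{r\sim\zo^t}\left[\pi_x(\R_\emptyset(r))\right]/e^\eps = e^{-\eps}$, using that $\E_{r\sim\zo^t}\left[\pi_x(\R_\emptyset(r))\right] = 1$; and (ii) conditioned on an iteration accepting, $y = \R_\emptyset(r)$ has exactly the law of $\R(x)$, because for every $y_0$ in the support the probability that the iteration accepts with $y = y_0$ equals $\pr_{r\sim\zo^t}[\R_\emptyset(r) = y_0]\cdot \pi_x(y_0)/e^\eps = \pr[\R(x) = y_0]/e^\eps$. Both identities come out by summing over the preimages of each $y_0$ under $\R_\emptyset$, which is also how the possibly many-to-one nature of $\R_\emptyset$ gets handled. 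Since the $J$ iterations are independent and $\R[\ID_t,\gamma]$ returns the seed of the first accepting iteration, conditioning on the event $A$ that some iteration accepts yields an output $s$ for which $\R_\emptyset(s)$ is distributed exactly as $\R(x)$.

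It then remains to control $\pr[\neg A]$. By (i) and independence, $\pr[\neg A] = (1 - e^{-\eps})^{J}$, and substituting $J = e^\eps\ln(1/\gamma)$ and using $1 - u \le e^{-u}$ gives $\pr[\neg A] \le e^{-e^{-\eps}J} = \gamma$. Finally, writing $P$ for the law of $\R_\emptyset(\R[\ID_t,\gamma](x))$ and $Q$ for that of $\R(x)$, the decomposition $P = \pr[A]\,(P\mid A) + \pr[\neg A]\,(P\mid\neg A)$ with $(P\mid A) = Q$ gives $\TV(P,Q) = \pr[\neg A]\cdot\TV(P\mid\neg A, Q) \le \pr[\neg A] \le \gamma$; note this holds no matter how the output is defined on the branch where no iteration accepts, so the side remark about outputting from $\rho_G$ in that case plays no role here. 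I do not expect a genuine obstacle: once (i) and (ii) are in place the lemma is essentially bookkeeping, the only mild subtleties being the many-to-one issue noted above and the fact that the $\neg A$ branch is immaterial to the stated bound.
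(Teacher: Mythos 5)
Your proposal is correct and follows essentially the same route as the paper: the TV bound is obtained exactly as in the paper's proof (per-step acceptance probability $e^{-\eps}$ via $\E_{r\sim\zo^t}[\pi_x(\R_\emptyset(r))]=1$, the conditional-on-acceptance law being exactly $\R(x)$, and $(1-e^{-\eps})^J\leq\gamma$), with your explicit summation over preimages of $\R_\emptyset$ and the remark that the non-accepting branch is immaterial being just slightly more detailed bookkeeping. The privacy claim via Lemma~\ref{lem:prg-privacy} with $\beta=0$ is also the intended argument, which the paper leaves implicit.
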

\ifconf\else
\begin{proof}
When applied with $G= \ID_t$, $y$ is distributed according to the reference distribution of $\R$. Thus
the algorithm performs standard rejection sampling until it accepts a sample or exceeds the bound $J$ on the number of steps. Note that deletion DP implies that $\frac{\pr[\R(x) = y]}{e^\eps \pr_{r\sim \zo^t}[\R_\emptyset(r) = y]} \leq 1$.  At each step, conditioned on success the algorithm samples $s$ such that $\R_\emptyset(s)$ is distributed identically to $\R(x)$.  Further, the acceptance probability at each step is
 \[ \E_{y \sim \rho} \lb \frac{\pr[\R(x) = y]}{e^\eps \pr_{r\sim \zo^t}[\R_\emptyset(r) = y]} \rb =  \sum_{y\in Y} \frac{\pr[\R(x) = y]}{e^\eps}  = \frac{1}{e^\eps} .\]
Thus the probability that all steps reject is $\leq (1- e^{-\eps})^J \leq \gamma$. It follows that $\TV(\R_\emptyset(\R[\ID_t,\gamma](x)),\R(x))$ is bounded by  $\gamma$ as claimed.
\end{proof}
\fi
We can now state the implications of using a sufficiently strong PRG on the output of the randomizer.
\fi
\begin{lem}
\label{lem:prg-compute}
Let $\R$ be a deletion $\eps$-DP $t$-samplable local randomizer, let $G \colon \zo^\ell \to \zo^t$ be $(T,\beta)$-PRG.
Let $T(\R,G,\gamma)$ denote the running time of $\R[G,\gamma]$ and assume that $T > T(\R,G,\gamma)$.
Then for all $x \in X$, $\R_\emptyset(G(\R[G,\gamma](x)))$ is $(T',\beta')$-computationally indistinguishable from $\R(x)$, where $\beta' = \gamma + e^\eps \ln(1/\gamma) \beta$ and $T' = T - T(\R,G,\gamma)$.
\end{lem}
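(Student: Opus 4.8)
The plan is to build a bridge from $\R(x)$ to $\R_\emptyset(G(\R[G,\gamma](x)))$ in two stages: \Cref{lem:rejection-randomizer} already bounds the first stage, $\TV\big(\R(x),\R_\emptyset(\R[\ID_t,\gamma](x))\big)\le\gamma$, and a hybrid argument over the $J=e^\eps\ln(1/\gamma)$ iterations of the rejection loop handles the second stage, replacing truly random $t$-bit strings by $G$-outputs one iteration at a time. Fix $x\in X$ and let $D\colon Y\to\zo$ be any test computable in time $T'=T-T(\R,G,\gamma)$. The key observation is that both $\R[G,\gamma](x)$ decompressed via $s\mapsto\R_\emptyset(G(s))$ and $\R[\ID_t,\gamma](x)$ decompressed via $s\mapsto\R_\emptyset(s)$ arise from the \emph{same} randomized post-processing $\Phi$ of a length-$J$ sequence of values: in step $j$ one obtains a value $y_j\in Y$ (equal to $\R_\emptyset(G(s_j))$ for a fresh $s_j\sim\zo^\ell$ in the PRG case, and to $\R_\emptyset(r_j)$ for a fresh $r_j\sim\zo^t$ in the true-randomness case), accepts it with probability $\pr[\R(x)=y_j]/(e^\eps\,\pr_{r\sim\zo^t}[\R_\emptyset(r)=y_j])$, and $\Phi$ returns $y_{j^\star}$ for the first accepting index $j^\star$, or $y_J$ if no step accepts. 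Since the acceptance decisions and the output of $\Phi$ depend on the underlying $t$-bit strings only through the $y_j$'s, $D\circ\Phi$ is a fixed randomized computation of the sequence of $t$-bit strings, doing no more work than $\R[G,\gamma]$ plus one evaluation of $D$.

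Next I would define hybrids $H_0,\dots,H_J$, where $H_i$ is the output distribution of $\Phi$ when steps $1,\dots,i$ use $G$ on fresh seeds and steps $i+1,\dots,J$ use fresh truly random strings; thus $H_J=\R_\emptyset(G(\R[G,\gamma](x)))$ and $H_0=\R_\emptyset(\R[\ID_t,\gamma](x))$. For adjacent hybrids, which differ only in whether step $i$ uses a fresh $r_i\sim\zo^t$ or $G(s_i)$, construct the PRG-distinguisher $D'$ that on input $w\in\zo^t$ hardwires $x$, simulates steps $1,\dots,i-1$ using its own fresh seeds and invocations of $G$, runs step $i$ with $y_i=\R_\emptyset(w)$, simulates steps $i+1,\dots,J$ with its own fresh random strings, and outputs $D$ applied to the value produced by $\Phi$; by an averaging argument we may fix its internal coins to the most distinguishing value without increasing its circuit size. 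Then $D'$ is computable in time at most $T(\R,G,\gamma)+T'=T$, so the $(T,\beta)$-PRG guarantee gives $|\pr[D(H_i)=1]-\pr[D(H_{i-1})=1]|=|\pr[D'(G(s))=1]-\pr[D'(r)=1]|\le\beta$, and summing over $i=1,\dots,J$ yields $|\pr[D(H_J)=1]-\pr[D(H_0)=1]|\le J\beta=e^\eps\ln(1/\gamma)\,\beta$.

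Finally, \Cref{lem:rejection-randomizer} gives $\TV(H_0,\R(x))\le\gamma$, hence $|\pr[D(H_0)=1]-\pr[D(\R(x))=1]|\le\gamma$, and the triangle inequality yields $|\pr[D(\R_\emptyset(G(\R[G,\gamma](x))))=1]-\pr[D(\R(x))=1]|\le\gamma+e^\eps\ln(1/\gamma)\,\beta=\beta'$. As $D$ was an arbitrary $T'$-time test, this is precisely the claimed $(T',\beta')$-indistinguishability. The step I expect to be the main obstacle — really the only place any care is needed — is the running-time bookkeeping for $D'$: one must check that simulating all $J$ iterations of the loop (including the at most $J$ evaluations of $G$ and of $\R_\emptyset$, the density-ratio computations, and the Bernoulli draws) together with a single call to $D$ fits within the budget $T$, which is exactly where the hypothesis $T>T(\R,G,\gamma)$ and the choice $T'=T-T(\R,G,\gamma)$ are used; the remainder is a routine hybrid and triangle-inequality argument.
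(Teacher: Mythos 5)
Your proof is correct and follows essentially the same route as the paper: reduce to comparing $\R_\emptyset(G(\R[G,\gamma](x)))$ with $\R_\emptyset(\R[\ID_t,\gamma](x))$ via Lemma~\ref{lem:rejection-randomizer}, then run a hybrid argument over the $J=e^\eps\ln(1/\gamma)$ uses of $G$, charging the simulation to $T(\R,G,\gamma)$ so the PRG distinguisher fits in time $T$. You merely spell out the hybrids and the distinguisher construction more explicitly than the paper's terse "standard hybrid argument," which is fine.
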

\ifconf\else \begin{proof}
By Lemma~\ref{lem:rejection-randomizer}, $\TV(\R_\emptyset(\R[\ID_t,\gamma](x)),\R(x)) \leq \gamma$ and thus it suffices to prove that $\R_\emptyset(G(\R[G,\gamma](x)))$ is $(T', e^\eps \ln(1/\gamma) \beta)$-computationally indistinguishable from $\R_\emptyset(\R[\ID_t,\gamma](x))$.
Towards a contradiction, suppose that there exists a test $D'$ running in time $T'$ such that for some $x$,
\begin{align*}
   \big|  \pr[&D'(\R_\emptyset(G(\R[G,\gamma](x)))) =1] -  \\  & \pr[D'(\R_\emptyset(\R[\ID_t,\gamma](x))) =1] \big| \geq e^\eps \ln(1/\gamma) \beta .
\end{align*}
We claim that there exists a test for distinguishing $G(s)$ for $s \sim \zo^\ell$ from a truly random seed $r \sim \zo^t$. Note that
$\R_\emptyset(G(\R[G,\gamma]))$ can be seen as $\R[G,\gamma]$ that outputs directly $y=\R_\emptyset(G(s))$ instead of $s$ itself. Next we observe that $\R_\emptyset(G(\R[G,\gamma]))$ uses the output of $G$ at most $J= e^\eps \ln(1/\gamma)$ times in place of truly random $t$-bit string used by $\R_\emptyset(\R[\ID_t,\gamma])$. Thus, by the standard hybrid argument, one of those applications can be used to test $G$ with success probability at least $e^\eps \ln(1/\gamma) \beta/J = \beta$. This test requires running a hybrid between $\R_\emptyset(G(\R[G,\gamma]))$ and $\R_\emptyset(\R[\ID_t,\gamma])$ in addition to $D'$ itself. The resulting test runs in time $T'+T(\R, G, \gamma) = T$.
Thus we obtain a contradiction to $G$ being a $(T,\beta)$-PRG.
\end{proof}\fi

As a direct corollary of Lemmas~\ref{lem:prg-privacy} and \ref{lem:prg-compute} we obtain a general way to compress efficient LDP randomizers.
\begin{thm}
\label{thm:main}
Let $\R$ be a deletion $\eps$-DP $t$-samplable local randomizer, let $G \colon \zo^\ell \to \zo^t$ be $(T,\beta)$-PRG for $\beta < 1/(2e^\eps)$. Let $T(\R,G,\gamma)$ be the running time of $\R[G,\gamma]$ and assume that $T > T(\R,G,\gamma)$.
Then  $\R[G,\gamma]$ is a deletion $(\eps+ 2 e^\eps \beta)$-DP local randomizer and for all $x \in X$, $\R_\emptyset(G(\R[G,\gamma](x)))$ is $(T',\beta')$-computationally indistinguishable from $\R(x)$, where $\beta' = \gamma + e^\eps \ln(1/\gamma) \beta$ and $T' = T - T(\R,G,\gamma)$.
\end{thm}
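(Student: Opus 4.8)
The plan is to obtain Theorem~\ref{thm:main} by stitching together the two preceding lemmas, so the only real work is checking that the single hypothesis ``$G$ is a $(T,\beta)$-PRG'' is strong enough to invoke each of them.

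First I would handle the privacy claim via Lemma~\ref{lem:prg-privacy}. That lemma does not ask for a generic PRG but rather for $G$ to $\beta$-fool the specific family $\D$ of threshold tests $\ind{\pi_x(\R_\emptyset(r')) \geq \theta}$, so the key observation is that every test in $\D$ is computable within time $T$. Evaluating such a test on input $r'$ amounts to running $\R_\emptyset$ on $r'$, computing the two densities $\pr[\R(x) = \R_\emptyset(r')]$ and $\pr_{r\sim\zo^t}[\R_\emptyset(r) = \R_\emptyset(r')]$, and performing a single comparison against $\theta$ --- all of which is already subsumed by one iteration of the inner loop of $\R[G,\gamma]$ (line~5 of Algorithm~\ref{alg:compressv1}), hence bounded by $T(\R,G,\gamma) < T$. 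Therefore a $(T,\beta)$-PRG in particular $\beta$-fools $\D$, and since $\beta < 1/(2e^\eps)$ by assumption, Lemma~\ref{lem:prg-privacy} yields that $\R[G,\gamma]$ is a deletion $(\eps + 2 e^\eps \beta)$-DP local randomizer.

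Next I would invoke Lemma~\ref{lem:prg-compute} essentially verbatim: its hypotheses ($\R$ a deletion $\eps$-DP $t$-samplable randomizer, $G$ a $(T,\beta)$-PRG, and $T > T(\R,G,\gamma)$) coincide exactly with those of Theorem~\ref{thm:main}, so it immediately gives that for every $x\in X$, $\R_\emptyset(G(\R[G,\gamma](x)))$ is $(T',\beta')$-computationally indistinguishable from $\R(x)$ with $\beta' = \gamma + e^\eps \ln(1/\gamma)\beta$ and $T' = T - T(\R,G,\gamma)$. Conjoining this with the conclusion of the previous paragraph finishes the proof.

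The only point that needs care --- the \emph{main obstacle}, such as it is --- is the reduction in the first paragraph: one must fix the computational model precisely enough that ``the tests in $\D$ are computable in time $T(\R,G,\gamma)$'' is literally true, including the cost of evaluating the density ratio $\pi_x$ to whatever accuracy is implicit in the threshold comparison. Everything else is bookkeeping; no new approximation or probabilistic argument beyond what already lives inside Lemmas~\ref{lem:prg-privacy} and~\ref{lem:prg-compute} is required, which is precisely why the statement is billed as a direct corollary.
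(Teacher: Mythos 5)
Your proposal is correct and follows the paper's own argument essentially verbatim: the indistinguishability claim is exactly Lemma~\ref{lem:prg-compute}, and the privacy claim follows by observing that the density-ratio tests in $\D$ are computed as part of one iteration of $\R[G,\gamma]$, hence run in time at most $T(\R,G,\gamma) < T$, so the $(T,\beta)$-PRG $\beta$-fools $\D$ and Lemma~\ref{lem:prg-privacy} applies. No gaps beyond the same model-of-computation bookkeeping the paper itself glosses over.
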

\ifconf\else \begin{proof}
The second part of the claim is exactly Lemma~\ref{lem:prg-compute}. To see the first part of the claim note that by our assumption $T > T(\R,G,\gamma)$ and computation of the ratio of densities $\frac{\pr[\R(x) = \R_\emptyset(r')]}{e^\eps \pr_{r\sim \zo^t}[\R_\emptyset(r) = \R_\emptyset(r')]}$ for any $r' \in \zo^t$ is part of $\R[G,\gamma]$. This implies that the test family $\D$ defined in Lemma~\ref{lem:prg-privacy} can be computed in time $T$. Now applying Lemma~\ref{lem:prg-privacy} gives us the privacy claim.
\end{proof}\fi

By plugging an exponentially strong PRG $G$ into Theorem~\ref{thm:main} we obtain that if an LDP protocol based on $\R$ runs in time $T$ then its communication can be compressed to $O(\log(T+T(\R,G,\gamma))$ with negligible effect on privacy and utility. We also remark that even without making any assumptions on $G$, $\R[G,\gamma]$ satisfies $2\eps$-DP. In other words, failure of the PRG does not lead to a significant privacy violation, beyond the degradation of the privacy parameter $\eps$ by a factor of two.
\begin{lem}
\label{lem:basic-privacy}
Let $\R$ be a deletion $\eps$-DP $t$-samplable local randomizer, let $G \colon \zo^\ell \to \zo^t$ be an arbitrary function.
Then  $\R[G,\gamma]$ is a deletion $2\eps$-DP local randomizer.
\end{lem}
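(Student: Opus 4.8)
The plan is to exhibit a single \emph{data-independent} reference distribution and show that $\R[G,\gamma](x)$ is $2\eps$-close to it for every $x$, under no assumption on $G$. As in the proof of Lemma~\ref{lem:prg-privacy} I would take this reference to be $\rho_G$, the uniform distribution on $\zo^\ell$, and reuse the abbreviation $\pi_x(y) = \pr[\R(x) = y]/\pr_{r\sim\zo^t}[\R_\emptyset(r) = y]$. Deletion $\eps$-DP of $\R$ gives $\pi_x(y)\in[e^{-\eps},e^\eps]$ for every $x$ and every $y$ in the support of $\R$'s reference distribution $\rho$, and that support contains every value of the form $\R_\emptyset(G(s))$ since $\rho$ is the law of $\R_\emptyset(r)$ for uniform $r\in\zo^t$. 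The first (and only conceptual) step is to note that the idealized randomizer $\R[G]$ is already $2\eps$-close to $\rho_G$: its density $\mu_x(s) = \pi_x(\R_\emptyset(G(s)))/\sum_{s'\in\zo^\ell}\pi_x(\R_\emptyset(G(s')))$ satisfies $\mu_x(s)/\rho_G(s) = \pi_x(\R_\emptyset(G(s)))/\E_{s'\sim\zo^\ell}[\pi_x(\R_\emptyset(G(s')))]$, a ratio of a number in $[e^{-\eps},e^\eps]$ to an average of numbers in $[e^{-\eps},e^\eps]$, hence in $[e^{-2\eps},e^{2\eps}]$. This is exactly where the extra factor of two comes from: unlike in Lemma~\ref{lem:prg-privacy} we make no attempt to recover a bound close to $\eps$, so we simply absorb the uncontrolled normalizer $\E_{s'\sim\zo^\ell}[\pi_x(\R_\emptyset(G(s')))]\in[e^{-\eps},e^\eps]$.

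Next I would reduce the truncated sampler $\R[G,\gamma]$ to $\R[G]$ by writing out its output density. Let $s_1,\ldots,s_J$ be the i.i.d.\ uniform seeds drawn in Algorithm~\ref{alg:compressv1}, let $b_j$ be the acceptance coin of trial $j$, and set $j^* = \min(\{j\in[J]:b_j=1\}\cup\{J\})$, so the sent seed is $s_{j^*}$; conditioning on which drawn seed equals $s$ gives $\pr[\R[G,\gamma](x) = s] = 2^{-\ell}\sum_{j=1}^{J}\pr[j^* = j \mid s_j = s]$. Writing $p(s) = \pi_x(\R_\emptyset(G(s)))/e^\eps \in [e^{-2\eps},1]$ for the per-trial acceptance probability conditioned on the drawn seed being $s$ (this is precisely the Bernoulli parameter in the algorithm) and $a = \E_{s'\sim\zo^\ell}[p(s')]$ for the unconditional one, independence of the trials gives $\pr[j^* = j \mid s_j = s] = p(s)(1-a)^{j-1}$ for $j < J$ and $\pr[j^* = J \mid s_J = s] = (1-a)^{J-1}$. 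Summing the geometric series and putting $q = (1-a)^{J-1}\in[0,1]$ yields
\[ \pr[\R[G,\gamma](x) = s] \;=\; 2^{-\ell}\left(p(s)\frac{1-q}{a} + q\right) \;=\; (1-q)\,\mu_x(s) + q\,\rho_G(s), \]
that is, $\R[G,\gamma](x)$ is literally the mixture $(1-q)\,\R[G](x) + q\,\rho_G$ (here $q=q_x$ may depend on $x$, which is harmless).

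It remains to combine the two observations. The relation of being $2\eps$-close to a \emph{fixed} distribution is preserved under mixtures: if $\mu_1,\mu_2$ are each $2\eps$-close to $\rho_G$, then for any $\alpha\in[0,1]$, averaging the inequalities $e^{-2\eps}\rho_G(E)\le\mu_i(E)\le e^{2\eps}\rho_G(E)$ shows $\alpha\mu_1(E)+(1-\alpha)\mu_2(E)$ obeys the same bounds for every $E$. By the first step $\R[G](x)$ is $2\eps$-close to $\rho_G$, and $\rho_G$ is trivially $2\eps$-close to itself, so $\R[G,\gamma](x)$ is $2\eps$-close to the data-independent $\rho_G$ for every $x$; hence $\R[G,\gamma]$ is a deletion $2\eps$-DP local randomizer. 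I do not expect a genuine obstacle here: the one point that needs care is that the algorithm sends the $J$-th seed whether or not it is accepted, so one should \emph{not} split the analysis into ``accepted'' and ``rejected everything'' events (the latter branch is not distributed as $\rho_G$) but instead carry out the density computation above, which reorganizes by itself into the clean mixture of $\R[G](x)$ with $\rho_G$.
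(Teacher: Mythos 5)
Your proposal is correct and follows essentially the same route as the paper: take the uniform distribution over seeds as the reference $\rho_G$, bound the density ratio of $\R[G](x)$ by $e^{\pm 2\eps}$ because both the numerator $\pi_x(\R_\emptyset(G(s)))$ and the normalizing average lie in $[e^{-\eps},e^{\eps}]$, and then absorb the truncated sampler by viewing $\R[G,\gamma](x)$ as a mixture of $\R[G](x)$ with $\rho_G$. Your explicit computation of the mixture weights is just a more careful rendering of the paper's appeal to conditioning on acceptance plus quasi-convexity, so the two arguments coincide in substance.
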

\ifconf\else \begin{proof}
As in the proof of Lemma~\ref{lem:prg-privacy} we observe that if we take the reference distribution to be uniform over $\zo^\ell$ we will get that, conditioned on accepting a sample, the seed $s$ is output with probability $\mu_x(s)$ such that
\[\frac{\mu_x(s)}{\rho_G(s)} = \frac{ \mu_x(s)}{2^{-\ell}} =  \frac{\pi_x(\R_\emptyset(G(s)))}{\E_{s' \sim \zo^\ell}[\pi_x(\R_\emptyset(G(s')))]} .\]
By the fact that $\R(x)$ is $\eps$-DP we have that for every $s' \in \zo^\ell$, $\pi_x(\R_\emptyset(G(s'))) \in [e^{-\eps},e^{\eps}]$ and thus
$\frac{\mu_x(s)}{\rho_G(s)} \in [e^{-2\eps},e^{2\eps}]$.
\end{proof}\fi

\ifconf
Our approach extends in a natural way to $(\eps,\delta)$-DP as well as to the replacement model of differential privacy. We defer the proof of the following to SM.
\begin{thm}
Let $\R$ be a deletion (replacement) $(\eps,\delta)$-DP $t$-samplable local randomizer, let $G \colon \zo^\ell \to \zo^t$ be $(T,\beta)$-PRG for $\beta < 1/(2e^\eps)$. Let $T(\R,G,\gamma)$ is the running time of $\R[G,\gamma]$ and assume that $T > T(\R,G,\gamma)$.
Then  $\R[G,\gamma]$ is a deletion (resp.~replacement) $(\eps+ 2 e^\eps \beta, e^{O(\eps)}\delta)$-DP local randomizer.
\end{thm}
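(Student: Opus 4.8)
The plan is to reduce both statements to the pure-$\eps$ argument of Lemma~\ref{lem:prg-privacy}, paying an extra $e^{O(\eps)}\delta$ for truncation, while taking care that the replacement case does not incur the usual factor-of-two loss in $\eps$. Fix a deletion-DP reference distribution $\reference$ for $\R$ (one exists in the replacement case as well, since replacement $(\eps,\delta)$-DP implies deletion $(\eps,\delta)$-DP), write $\pi_x(y) \dfn \pr[\R(x) = y]/\pr_{r\sim\zo^t}[\R_\emptyset(r) = y]$, and modify Algorithm~\ref{alg:compressv1} to use the truncated acceptance probability $\min\{1, \pi_x(y)/e^\eps\}$ in its Bernoulli step (a no-op when $\R$ is pure $\eps$-DP). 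Equivalently, the idealized output of $\R[G,\gamma](x)$ places mass $\nu_x^G(s) \propto \tilde\pi_x(\R_\emptyset(G(s)))$ on seed $s$, where $\tilde\pi_x(y) \dfn \min\{\pi_x(y), e^\eps\} \in [0, e^\eps]$; taking $\rho_G$ uniform on $\zo^\ell$ as the reference for $\R[G,\gamma]$ (as in Lemma~\ref{lem:prg-privacy}), $\R[G,\gamma](x)$ is the mixture $(1 - p_x)\nu_x^G + p_x\rho_G$, where the failure probability $p_x$ is at most $(1 - D_x^G/e^\eps)^J$ with $D_x^G \dfn \E_{s\sim\zo^\ell}[\tilde\pi_x(\R_\emptyset(G(s)))]$. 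Since $D_x^G/e^\eps \ge e^{-\eps}/2$ (shown next), choosing $J = \Theta(e^\eps\log(1/(\gamma\delta)))$ guarantees $p_x \le \min\{\gamma, \delta\}$.

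The first ingredient is to control $D_x^G$. Since $\{\tilde\pi_x(\R_\emptyset(\cdot)) \ge \theta\} = \{\pi_x(\R_\emptyset(\cdot)) \ge \theta\}$ for every $\theta \in [0, e^\eps]$, the level-set/integral argument in the proof of Lemma~\ref{lem:prg-privacy} applies verbatim and yields $|D_x^G - Z_x| \le e^\eps\beta$, where $Z_x \dfn \E_{r\sim\zo^t}[\tilde\pi_x(\R_\emptyset(r))] = \sum_y \min\{\pr[\R(x) = y], e^\eps\reference(y)\}$. The characterization of deletion $(\eps,\delta)$-closeness of $\R(x)$ and $\reference$ gives $Z_x \in [1 - \delta, 1]$, hence $D_x^G \in [1 - \delta - e^\eps\beta, 1 + e^\eps\beta] \subseteq [e^{-2e^\eps\beta - O(\delta)}, e^{2e^\eps\beta}]$, using $\beta < 1/(2e^\eps)$.

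For the \textbf{deletion} case I would check the two one-sided conditions characterizing $(\eps', \delta')$-closeness of $\nu_x^G$ to $\rho_G$, with $\eps' = \eps + 2e^\eps\beta$. The condition $\sum_s |\nu_x^G(s) - e^{\eps'}\rho_G(s)|_+ \le |e^\eps/D_x^G - e^{\eps'}|_+ \le e^{O(\eps)}\delta$ is immediate from $\tilde\pi_x \le e^\eps$ and the lower bound on $D_x^G$. For $\sum_s |\rho_G(s) - e^{\eps'}\nu_x^G(s)|_+ = \E_{s\sim\zo^\ell}[|1 - e^{\eps'}\tilde\pi_x(\R_\emptyset(G(s)))/D_x^G|_+]$, I would express the integrand as an integral over $\theta$ of the boolean tests $\ind{\tilde\pi_x(\R_\emptyset(\cdot)) \le \theta}$, $\beta$-fool each term to pass to true randomness (total cost $\le \beta$), and then invoke the pointwise bound $\reference(y)\,|1 - e^{\eps'}\tilde\pi_x(y)/D_x^G|_+ \le |\reference(y) - e^\eps\pr[\R(x) = y]|_+$ (a short case split on whether $\pr[\R(x) = y] \ge e^\eps\reference(y)$, using $D_x^G \le e^{\eps'-\eps}$) together with deletion $(\eps,\delta)$-closeness of $\R(x)$, concluding this sum is $\le \delta + \beta$. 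Convexity of $(\eps',\delta')$-closeness to the fixed reference $\rho_G$ transfers these bounds from $\nu_x^G$ to the mixture $\R[G,\gamma](x)$, giving deletion $(\eps', e^{O(\eps)}\delta)$-DP (absorbing the lower-order $\beta$ contributions, which are negligible in the regime of interest).

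The \textbf{replacement} case is the crux: since going through the deletion result would double $\eps$, I would bound $\sum_s |\nu_x^G(s) - e^{\eps'}\nu_{x'}^G(s)|_+$ head-on. Writing $\nu_x^G(s)/\nu_{x'}^G(s) = (D_{x'}^G/D_x^G)\cdot \tilde\pi_x(\R_\emptyset(G(s)))/\tilde\pi_{x'}(\R_\emptyset(G(s)))$ and using $D_{x'}^G/D_x^G = e^{\pm(2e^\eps\beta + O(\delta))}$, after absorbing the normalizer mismatch into an additive $O(e^\eps\delta)$ it suffices to bound $\E_{s\sim\zo^\ell}[|\tilde\pi_x(\R_\emptyset(G(s))) - e^\eps\tilde\pi_{x'}(\R_\emptyset(G(s)))|_+]$. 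This $[0, e^\eps]$-valued function I would write as an integral over $\theta$ of the boolean tests $\ind{\tilde\pi_x(\R_\emptyset(\cdot)) - e^\eps\tilde\pi_{x'}(\R_\emptyset(\cdot)) \ge \theta}$ — precisely the enlarged, two-input family of tests the PRG must now $\beta$-fool — and fool term by term (cost $\le e^\eps\beta$) to pass to true randomness, where $\sum_y \reference(y)\,|\tilde\pi_x(y) - e^\eps\tilde\pi_{x'}(y)|_+ \le \sum_y |\pr[\R(x) = y] - e^\eps\pr[\R(x') = y]|_+ \le \delta$ by replacement $(\eps,\delta)$-closeness of $\R(x)$ and $\R(x')$ (the $y$ with $\pr[\R(x') = y] \ge e^\eps\reference(y)$ contribute $0$, and otherwise truncation only shrinks the positive part). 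Hence $\nu_x^G$ and $\nu_{x'}^G$ are $(\eps', e^{O(\eps)}\delta)$-close; finally, since $\R[G,\gamma](x)$ and $\R[G,\gamma](x')$ are within total variation $p_x, p_{x'} \le \delta$ of $\nu_x^G, \nu_{x'}^G$, composing $(\eps,\delta)$-closeness along the chain upgrades this to $(\eps', e^{O(\eps)}\delta)$-closeness of $\R[G,\gamma](x)$ and $\R[G,\gamma](x')$. I expect the main obstacle to be this replacement analysis: identifying the right two-input test family, and verifying that neither the truncation, the mismatch between $D_x^G$ and $D_{x'}^G$, nor the rejection-sampling failure branch turns the additive $O(e^\eps\delta)$ loss into a multiplicative loss in $\eps$.
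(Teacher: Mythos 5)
Your proposal follows essentially the same route as the paper's own (supplementary) proof: the truncated rejection step with $\nut_x(y)=\min(\pr[\R(x)=y],e^\eps\rho(y))$, the level-set/fooling argument showing the normalizer $\E_{s}[\nut_x(\R_\emptyset(G(s)))/\rho(\R_\emptyset(G(s)))]\in[1-\delta-e^\eps\beta,\,1+e^\eps\beta]$ and $\E_s[|1-e^\eps\pi_x|_+]\le\delta+\beta$ for the deletion side, and for replacement exactly the paper's two-input test family together with the observation that truncation only shrinks $\sum_y|\nut_x(y)-e^\eps\nut_{x'}(y)|_+$, so it is bounded by $\delta$ via replacement closeness. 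The only cosmetic differences are parameter bookkeeping (the paper charges the $\delta$- and normalizer-mismatch losses to the privacy parameter, $\eps+2\delta+3e^\eps\beta$, rather than to the additive term) and your treatment of the rejection-failure branch in the replacement case via an enlarged $J$ and a total-variation chain, where the paper keeps the original $J$ and simply appeals twice to quasi-convexity of $(\eps,\delta)$-closeness.
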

\else
\subsection{Replacement DP}
We now show that the same approach can be used to compress a replacement $\eps_r$-DP randomizer $\R$. To do this we first let 
$\rho$ be some reference distribution relative to which $\R$ is deletion $\eps$-DP for some $\eps \leq \eps_r$. One possible way to define $\rho$ is to pick some fixed $x_0 \in X$ and let $\rho$ be the distribution of $\R(x_0)$. In this case $\eps = \eps_r$. But other choices of $\rho$ are possible that give an easy to sample distribution and $\eps < \eps_r$. In fact, for some standard randomizers such as addition of Laplace noise we will get $\eps = \eps_r/2$.

Now assuming that $\rho$ is $t$-samplable and given a PRG $G\colon \zo^\ell \to \zo^t$ we define $\R[G]$ as in Def.~\ref{def:rg} and $\R[G,\gamma]$ us in Algorithm~\ref{alg:compressv1}. The randomizer $\R$ is deletion $\eps$-DP so all the results we proved apply to it as well (with the deletion $\eps$ and not the replacement $\eps_r$). In addition we show that replacement privacy is preserved as well.

\begin{lem}
\label{lem:prg-privacy-replace}
For  a $t$-samplable deletion $\eps$-DP and replacement $\eps_r$-DP local randomizer $\R\colon X\to Y$ and $G \colon \zo^\ell \to \zo^t$, let $\D$ denote the following family of tests which take $r'\in \zo^t$ as an input:
 \[ \D \dfn \left\{\left. \ind{\frac{\pr[\R(x) = \R_\emptyset(r')]}{\pr_{r\sim \zo^t}[\R_\emptyset(r) = \R_\emptyset(r')]} \geq \theta}\ \right|\ x\in X,\theta \in [0,e^\eps]\right\}. \]
 If $G$ $\beta$-fools $\D$ for $\beta < 1/(2e^\eps)$ then $R[G]$ is a replacement $(\eps_r + 4 e^\eps \beta)$-DP local randomizer. Furthermore, for every $\gamma > 0$, $\R[G,\gamma]$ is a replacement $(\eps_r+ 4 e^\eps \beta)$-DP local randomizer.
\end{lem}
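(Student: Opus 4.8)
The plan is to mirror the proof of Lemma~\ref{lem:prg-privacy}, but instead of controlling the ratio of the output density of $\R[G](x)$ to the uniform reference $\rho_G$, I will directly control the ratio of the output densities of $\R[G](x)$ and $\R[G](x')$ for an arbitrary pair $x,x'\in X$. Recall the notation $\pi_x(y)\dfn \pr[\R(x)=y]/\pr_{r\sim\zo^t}[\R_\emptyset(r)=y]$ and that $\R[G](x)$ outputs a seed $s$ with probability $\mu_x(s)=\pi_x(\R_\emptyset(G(s)))/\sum_{s'\in\zo^\ell}\pi_x(\R_\emptyset(G(s')))$. Writing $\eps'\dfn \eps_r+4e^\eps\beta$, the goal of the first part is to show $\mu_x(s)/\mu_{x'}(s)\in[e^{-\eps'},e^{\eps'}]$ for every $x,x'\in X$ and every seed $s$ in the common support, which is exactly replacement $\eps'$-DP of $\R[G]$.

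To do this I would factor
\[ \frac{\mu_x(s)}{\mu_{x'}(s)} = \frac{\pi_x(\R_\emptyset(G(s)))}{\pi_{x'}(\R_\emptyset(G(s)))}\cdot\frac{\E_{s'\sim\zo^\ell}[\pi_{x'}(\R_\emptyset(G(s')))]}{\E_{s'\sim\zo^\ell}[\pi_x(\R_\emptyset(G(s')))]}. \]
In the first factor the common denominator $\pr_{r\sim\zo^t}[\R_\emptyset(r)=\cdot]$ cancels, leaving $\pr[\R(x)=y]/\pr[\R(x')=y]$ with $y=\R_\emptyset(G(s))$, which lies in $[e^{-\eps_r},e^{\eps_r}]$ by replacement $\eps_r$-DP of $\R$ (applied to the event $\{y\}$; pure replacement DP also forces the supports of $\R(x)$ and $\R(x')$, hence of $\mu_x$ and $\mu_{x'}$, to coincide). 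For the second factor I would invoke the argument already carried out in the proof of Lemma~\ref{lem:prg-privacy}: since $G$ $\beta$-fools $\D$ with $\beta<1/(2e^\eps)$ and $\pi_x,\pi_{x'}\in[e^{-\eps},e^\eps]$ by deletion $\eps$-DP, that argument shows $\E_{s'\sim\zo^\ell}[\pi_x(\R_\emptyset(G(s')))]\in[e^{-2e^\eps\beta},e^{2e^\eps\beta}]$ for every input, in particular for both $x$ and $x'$; hence their ratio is in $[e^{-4e^\eps\beta},e^{4e^\eps\beta}]$. Multiplying the two bounds gives $\mu_x(s)/\mu_{x'}(s)\in[e^{-\eps'},e^{\eps'}]$.

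For $\R[G,\gamma]$ I would use that on input $x$ the algorithm outputs a sample from $\mu_x$ conditioned on accepting within the $J$ trials and a sample from the uniform distribution $\rho_G$ otherwise, so $\R[G,\gamma](x)=p_x\mu_x+(1-p_x)\rho_G$ for some $p_x\in[0,1]$. From the first part, $\mu_x$ and $\mu_{x'}$ are $\eps'$-close; from Lemma~\ref{lem:prg-privacy}, $\mu_x$ and $\rho_G$ (likewise $\mu_{x'}$ and $\rho_G$) are $(\eps+2e^\eps\beta)$-close, and $\eps+2e^\eps\beta\le\eps_r+4e^\eps\beta=\eps'$ since $\eps\le\eps_r$. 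Then I would apply two elementary facts about pure-DP closeness, both immediate from the pointwise density characterization of $\eps'$-closeness: (i) a distribution that is $\eps'$-close to each of $Q_1,Q_2$ is $\eps'$-close to every convex combination $\lambda Q_1+(1-\lambda)Q_2$; and (ii) if each of $P_1,P_2$ is $\eps'$-close to $Q$ then every convex combination $\lambda P_1+(1-\lambda)P_2$ is $\eps'$-close to $Q$. By (i), both $\mu_x$ and $\rho_G$ are $\eps'$-close to $\R[G,\gamma](x')=p_{x'}\mu_{x'}+(1-p_{x'})\rho_G$; by (ii), $\R[G,\gamma](x)=p_x\mu_x+(1-p_x)\rho_G$ is $\eps'$-close to $\R[G,\gamma](x')$, which is the claim.

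I expect the mixture step for $\R[G,\gamma]$ to be the only genuine obstacle: because the acceptance probabilities $p_x$ and $p_{x'}$ generally differ, one cannot compare the two mixtures component by component, which is why I route through facts (i)–(ii) rather than arguing directly. The remaining points — the cancellation of denominators, the reuse of the normalizer bound from Lemma~\ref{lem:prg-privacy}, and the edge cases where $\pr[\R(x')=y]=0$ (which by replacement $\eps_r$-DP forces $\pr[\R(x)=y]=0$, so both densities vanish there) — are routine and handled exactly as in the earlier proofs.
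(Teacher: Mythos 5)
Your proposal is correct and follows essentially the same route as the paper's proof: the same factorization of $\mu_x(s)/\mu_{x'}(s)$ into the replacement-DP ratio times the ratio of normalizers bounded via Lemma~\ref{lem:prg-privacy}, and the same quasi-convexity argument (which you merely spell out as your facts (i)--(ii)) to pass from $\R[G]$ to the mixture $\R[G,\gamma]$. Your explicit handling of the differing acceptance probabilities and of the support/zero-density edge cases only makes precise what the paper leaves implicit.
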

\begin{proof}
As in the proof of Lemma~\ref{lem:prg-privacy}, for $y\in Y$, we denote the density ratio of $\R(x)$ to $\rho$ at $y$ by
\[ \pi_x(y) \dfn \frac{\pr[\R(x) = y]}{\pr_{r\sim \zo^t}[\R_\emptyset(r) = y]} \]
and note that $\R[G]$ outputs a seed $s$ with probability:
\[ \mu_x(s) \dfn \frac{\pi(\R_\emptyset(G(s)))}{\sum_{s' \in \zo^\ell}\pi(\R_\emptyset(G(s')))} .\]
Thus for two inputs $x,x' \in X$ and any $s \in \zo^\ell$ we have that
\alequn{ \frac{\mu_x(s)}{\mu_{x'}(s)} &=  \frac{\pi_x(\R_\emptyset(G(s)))}{\pi_{x'}(\R_\emptyset(G(s)))} \cdot \frac{\sum_{s' \in \zo^\ell}\pi_{x'}(\R_\emptyset(G(s')))}{\sum_{s' \in \zo^\ell}\pi_x(\R_\emptyset(G(s')))} \\
& = \frac{\pr[\R(x) = \R_\emptyset(G(s'))]}{\pr[\R(x') = \R_\emptyset(G(s'))]} \cdot \frac{\E_{s' \sim \zo^\ell}[\pi_{x'}(\R_\emptyset(G(s')))]}{\E_{s' \sim \zo^\ell}[\pi_x(\R_\emptyset(G(s')))]}
}
Now $\R$ is $\eps_r$-replacement-DP and therefore the first term satisfies:
\[ \frac{\pr[\R(x) = \R_\emptyset(G(s'))]}{\pr[\R(x') = \R_\emptyset(G(s'))]} \in \lb e^{-\eps_r}, e^{\eps_r} \rb. \]
At the same time, we showed in Lemma~\ref{lem:prg-privacy} that $\E_{s' \sim \zo^\ell}[\pi_x(\R_\emptyset(G(s')))] \in [e^{-2 e^\eps \beta},e^{2 e^\eps \beta}]$ and also $\E_{s' \sim \zo^\ell}[\pi_{x'}(\R_\emptyset(G(s')))] \in [e^{-2 e^\eps \beta},e^{2 e^\eps \beta}]$.
Therefore $ \frac{\mu_x(s)}{\mu_{x'}(s)} \in [e^{-\eps_r-4 e^\eps \beta},e^{\eps_r + 4 e^\eps \beta}]$.

To show that $\R[G,\gamma]$ is a replacement $(\eps_r+ 4 e^\eps \beta)$-DP local randomizer we observe that for every $x$, $\R[G,\gamma](x)$ is a mixture of $\R[G](x)$ and $\rho_G$. As we showed, $\R[G](x)$ is $(\eps_r+ 4 e^\eps \beta)$-close to $\R[G](x')$ and we also know from Lemma~\ref{lem:prg-privacy} that $\rho_G$ is $(\eps+ 2 e^\eps \beta)$-close to $\R[G](x')$. By quasi-convexity we obtain that $\R[G,\gamma](x)$ is $(\eps_r+ 4 e^\eps \beta)$-close to $\R[G](x')$. We also know that $\R[G,\gamma](x)$ is $(\eps+ 2 e^\eps \beta)$-close to $\rho_G$. Appealing to quasi-convexity again, we obtain that $\R[G,\gamma](x)$ is $(\eps_r+ 4 e^\eps \beta)$-close to $\R[G,\gamma](x')$.
\end{proof}

\subsection{Extension to $(\eps,\delta)$-DP}
We next extend our approach to $(\eps, \delta)$-DP randomizers. The approach here is similar, except that we for some outputs $y=\R_\emptyset(G(s))$, the prescribed ``rejection probability'' in the original approach would be larger than one. To handle this, we simply truncate this ratio at $1$ to get a probability. Algorithm~\ref{alg:compressv2} is identical to Algorithm~\ref{alg:compressv1} except for this truncation in the step where we sample $b$.

\begin{algorithm}[htb]
	\caption{$\R[G,\gamma]$: PRG compression of deletion $(\eps,\delta)$-DP $\R$}\label{alg:compressv2}
	\begin{algorithmic}[1]
		\REQUIRE $x\in X$, $\eps,\gamma >0$; seeded PRG $G \colon \zo^\ell \to \zo^t$; $t$-samplable $\eps$-DP randomizer $\R$.
        \STATE $J= e^\eps \ln(1/\gamma)/(1-\delta)$
        \FOR {$j=1, \ldots, J$}
     		\STATE Sample a random seed $s \in \zo^\ell$.
		    \STATE $y = \R_\emptyset(G(s))$
		    \STATE Sample $b$ from $\Bern\lp\min \lp 1, \frac{\pr[\R(x) = y]}{e^\eps \pr_{r\sim \zo^t}[\R_\emptyset(r) = y]}\rp\rp$
            \IF{$b == 1$}
                \STATE BREAK
            \ENDIF
		\ENDFOR
		\STATE Send $s$
\end{algorithmic}
\end{algorithm}

The proof is fairly similar to that for the pure DP randomizer. We start with a lemma that relates the properties of the PRG to the properties of the randomizer that need to be preserved in order to ensure that it satisfies deletion $(\eps',\delta')$-LDP.
\begin{lem}
\label{lem:prg-ed-deletion}
For  a $t$-samplable deletion $(\eps, \delta)$-DP local randomizer $\R\colon X\to Y$ and $G \colon \zo^\ell \to \zo^t$, let $\D$ denote the following family of tests which take $r'\in \zo^t$ as an input:
 \[ \D \dfn \left\{\left. \ind{\frac{\pr[\R(x) = \R_\emptyset(r')]}{\pr_{r\sim \zo^t}[\R_\emptyset(r) = \R_\emptyset(r')]} \geq \theta}\ \right|\ x\in X,\theta \in [0,e^\eps]\right\} . \]
Suppose that $G$ $\beta$-fools $\D$ and let $ \pi_x(y) \dfn \frac{\min(\pr[\R(x) = y], e^{\eps}\pr[\R(\emptyset) = y])}{\pr_{r\sim \zo^t}[\R_\emptyset(r) = y]}$.
Then \[\E_{s' \sim \zo^\ell}[\pi_x(\R_\emptyset(G(s')))] \in [1-\delta-e^\eps\beta, 1+e^\eps\beta]\] and 
\[\E_{s' \sim \zo^\ell} \lb \left| 1- e^\eps \pi_x(\R_\emptyset(G(s'))) \right|_+ \rb \leq \delta + \beta .\]
\end{lem}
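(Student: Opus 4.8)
The plan is to reduce both claims to the fooling hypothesis on $G$ via a ``layer-cake'' (Cavalieri) representation of the relevant nonnegative bounded functions of $\R_\emptyset(r')$, after first establishing the corresponding identities under true randomness from the two one-sided forms of $(\eps,\delta)$-closeness of $\R(x)$ and $\rho$. Throughout write $\rho(y) \dfn \pr_{r\sim\zo^t}[\R_\emptyset(r)=y] = \pr[\R(\emptyset)=y]$, so that $\pi_x(y) = \min\!\left(\frac{\pr[\R(x)=y]}{\rho(y)},\, e^\eps\right)$ and in particular $\pi_x(\R_\emptyset(r')) \in [0,e^\eps]$ always. The observation that links $\pi_x$ to the tests in $\D$ is that truncation at $e^\eps$ is invisible to any threshold $\theta \le e^\eps$: for every such $\theta$, $\{r' : \pi_x(\R_\emptyset(r')) \ge \theta\} = \{r' : \pr[\R(x)=\R_\emptyset(r')]/\rho(\R_\emptyset(r')) \ge \theta\}$, so $\ind{\pi_x(\R_\emptyset(r'))\ge\theta} \in \D$.

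For the first claim I would use $\pi_x(y) = \int_0^{e^\eps}\ind{\pi_x(y)\ge\theta}\,d\theta$. Under true randomness,
\[ \E_{r'\sim\zo^t}[\pi_x(\R_\emptyset(r'))] = \sum_y \min(\pr[\R(x)=y],\, e^\eps\rho(y)) = 1 - \sum_y |\pr[\R(x)=y] - e^\eps\rho(y)|_+ , \]
which lies in $[1-\delta,\,1]$ by the $(\eps,\delta)$-closeness of $\R(x)$ and $\rho$. Then I would integrate the fooling inequality $|\pr_{s'}[\pi_x(\R_\emptyset(G(s')))\ge\theta] - \pr_{r'}[\pi_x(\R_\emptyset(r'))\ge\theta]| \le \beta$, valid for each $\theta\in[0,e^\eps]$ by the observation above, over $\theta\in[0,e^\eps]$; this transfers the true-randomness expectation to the pseudorandom one with additive error at most $e^\eps\beta$, giving $\E_{s'}[\pi_x(\R_\emptyset(G(s')))]\in[1-\delta-e^\eps\beta,\,1+e^\eps\beta]$. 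This is the same computation as in the proof of Lemma~\ref{lem:prg-privacy}, now accounting for the truncation and for $\delta$.

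For the second claim I would apply the same scheme to the test function $y \mapsto |1-e^\eps\pi_x(y)|_+ \in [0,1]$. Since $\pi_x \ge 0$, this function is nonzero precisely when $\pi_x(y) < e^{-\eps}$, where $\pi_x(y)$ equals the untruncated ratio, and a change of variables in the layer-cake representation yields $|1-e^\eps\pi_x(y)|_+ = 1 - e^\eps\int_0^{e^{-\eps}}\ind{\pi_x(y)\ge\theta}\,d\theta$. Under true randomness,
\[ \E_{r'}[\,|1-e^\eps\pi_x(\R_\emptyset(r'))|_+\,] = \sum_y |\rho(y) - e^\eps\pr[\R(x)=y]|_+ \le \delta \]
by the other direction of $(\eps,\delta)$-closeness. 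Transferring to $G$ via the fooling inequality, now integrated only over $\theta\in[0,e^{-\eps}]$ and weighted by $e^\eps$, costs an additive $e^\eps\cdot e^{-\eps}\cdot\beta = \beta$, which gives the claimed bound $\delta+\beta$.

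The proof is short; the step to handle carefully is the elementary algebra translating the two one-sided forms of $(\eps,\delta)$-closeness into the two true-randomness identities above (getting the $\min$ / positive-part bookkeeping right), together with verifying that every indicator actually used belongs to $\D$ — i.e.\ that the thresholds invoked are at most $e^\eps$, which holds over $[0,e^\eps]$ for the first claim and a fortiori over $[0,e^{-\eps}]$ for the second.
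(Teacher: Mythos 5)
Your proposal is correct and follows essentially the same route as the paper: reduce the true-randomness expectations to the positive-part form of $(\eps,\delta)$-closeness, observe that truncation at $e^\eps$ does not change the threshold events so all indicators used lie in $\D$, and transfer to the pseudorandom distribution via the layer-cake representation, paying $e^\eps\beta$ over $[0,e^\eps]$ for the first claim and $\beta$ for the second. The only differences are cosmetic (you invoke the stated equivalent condition $\sum_y|\mu(y)-e^\eps\nu(y)|_+\le\delta$ directly where the paper rederives it, and you integrate over $[0,e^{-\eps}]$ with weight $e^\eps$ where the paper substitutes $\theta/e^\eps$ over $[0,1]$).
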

\begin{proof}
Let $\rho_G$ be the uniform distribution over $\zo^\ell$. Let $\nu_x(y) \dfn \pr[\R(x)=y]$ and let $\nut_x(y) \dfn \min(\nu_x(y), e^{\eps}\pr[\R(\emptyset) = y])$. Note that $\nut_x(\cdot)$ does not necessarily define a probability distribution. For $S=\{y: \nut_x(y) < \nu_x(y)\}$, we have
\begin{align*}
    \nu_x(S) &= \sum_{y \in S} \nu_x(y)\\
    &=  \sum_{y \in S} \nut_x(y) + \sum_{y \in S} (\nu_x(y) - \nut_x(y))\\
    &= \sum_{y \in S} e^{\eps}\rho(y) + \sum_{y} (\nu_x(y) - \nut_x(y))\\
    &= e^{\eps} \rho(S) + (1 -\sum_{y} \nut_x(y)).
\end{align*}
Then deletion $(\eps, \delta)$-DP of $\R$ implies that $\sum_y \nut_x(y) \geq 1-\delta$.
Observe that this implies that for true randomness we have:
\begin{align*}
    \E_{r' \sim \zo^t} \lb\pi_x(\R_\emptyset(r')) \rb &=  \E_{r' \sim \zo^t} \lb \frac{\nut_x(\R_\emptyset(r'))}{\pr_{r\sim \zo^t}[\R_\emptyset(r) = \R_\emptyset(r')]}\rb\\
    &=  \E_{r' \sim \zo^t} \lb \frac{\nut_x(\R_\emptyset(r'))}{\rho(\R_\emptyset(r'))}\rb\\
    &= \E_{y \sim \rho} \lb \frac{\nut_x(y)}{\rho(y)}\rb\\
    &= \sum_{y \in Y} \rho(y) \cdot \frac{\nut_x(y)}{\rho(y)}\\
    &= \sum_{y\in Y} \nut_x(y) \;\;\in\;\; [1-\delta, 1].
\end{align*}
Using the fact that $\pi_x(y) \in [0,e^\eps]$ we have that
\[ \E_{s' \sim \zo^\ell}[\pi_x(\R_\emptyset(G(s')))] = \int_{0}^{e^\eps} \pr_{s' \sim \zo^\ell}[\pi_x(\R_\emptyset(G(s'))) \geq \theta] d\theta \] and, similarly,
\[\E_{r' \sim \zo^t} \lb\pi_x(\R_\emptyset(r')) \rb = \int_{0}^{e^\eps} \pr_{r' \sim \zo^t} \lb\pi_x(\R_\emptyset(r')) \geq \theta \rb d\theta .\]
Thus, it follows that
\begin{align*}
\left|\E_{s' \sim \zo^\ell}[\pi_x(\R_\emptyset(G(s')))] \right.&\left.- \E_{r' \sim \zo^t} [\pi_x(\R_\emptyset(r')) ] \right| \\
&= \left|\int_{0}^{e^\eps} \left(\pr_{s' \sim \zo^\ell}[\pi_x(\R_\emptyset(G(s'))) \geq \theta] - \pr_{r' \sim \zo^t} \lb\pi_x(\R_\emptyset(r')) \geq \theta \rb\right) d\theta \right|\\
&\leq \int_{0}^{e^\eps} \left|\pr_{s' \sim \zo^\ell}[\pi_x(\R_\emptyset(G(s'))) \geq \theta] - \pr_{r' \sim \zo^t} \lb\pi_x(\R_\emptyset(r')) \geq \theta \rb\right| d\theta \\
&\leq e^\eps \beta,
\end{align*}
where in the last step, we have used the property of the pseudorandom generator that it fools $\D$, and the fact that for $\theta \in [0, e^\eps)$, $\frac{\nut_x(y)}{\pr[\R(\emptyset) = y]} < \theta$ if and only if $\frac{\nu_x(y)}{\pr[\R(\emptyset) = y]} < \theta$. The first part of the claim follows.

For the second part of the claim we first note that deletion $(\eps, \delta)$-DP of $\R$ implies that 
\begin{align*}
    \E_{r' \sim \zo^t} \lb  |1- e^\eps \pi_x(\R_\emptyset(r'))|_+  \rb &= \E_{y \sim \rho} \lb |1- e^\eps \pi_x(y)|_+ \rb \\
    &= \E_{y \sim \rho} \lb |1- e^\eps \pi_x(y)|_+ \rb \\
    &= \sum_{y \in Y} \rho(y) |1- e^\eps \pi_x(y)|_+ \\
    &= \sum_{y \in Y} |\rho(y) - e^\eps \nut_x(y)|_+ \\
    &= \sum_{y \in Y} |\rho(y) - e^\eps \nu_x(y)|_+ \leq \delta .
\end{align*}
Also note that 
\[  \E_{r' \sim \zo^t} \lb  |1- e^\eps \pi_x(\R_\emptyset(r'))|_+  \rb = \int_{0}^1 \pr_{r' \sim \zo^t} \lb  1- e^\eps \pi_x(\R_\emptyset(r')) \geq \theta \rb d\theta = 1 - \int_{0}^1 \pr_{r' \sim \zo^t} \lb  \pi_x(\R_\emptyset(r')) \geq \frac{\theta}{e^\eps} \rb d\theta .\]
Similarly,
\[  \E_{s' \sim \zo^\ell} \lb  |1- e^\eps \pi_x(\R_\emptyset(G(s')))|_+  \rb = 1 - \int_{0}^1 \pr_{s' \sim \zo^\ell} \lb  \pi_x(\R_\emptyset(G(s'))) \geq \frac{\theta}{e^\eps} \rb d\theta .\]

Thus by the same argument as before, the fact that $G$, $\beta$-fools $\D$ implies that 

\[\E_{s' \sim \zo^\ell} \lb \left| 1- e^\eps \pi_x(\R_\emptyset(G(s'))) \right|_+ \rb \leq   \E_{r' \sim \zo^t} \lb  |1- e^\eps \pi_x(\R_\emptyset(r'))|_+  \rb  + \beta \leq \delta + \beta .\]
\end{proof}

We can now give an analogue of Lemma~\ref{lem:prg-privacy} for deletion $(\eps,\delta)$-DP randomizers.
\begin{lem}
\label{lem:prg-privacy-ed}
For  a $t$-samplable deletion $(\eps,\delta)$-DP local randomizer $\R\colon X\to Y$ and $G \colon \zo^\ell \to \zo^t$, let $\D$ denote the following family of tests which take $r'\in \zo^t$ as an input:
 \[ \D \dfn \left\{\left. \ind{\frac{\pr[\R(x) = \R_\emptyset(r')]}{\pr_{r\sim \zo^t}[\R_\emptyset(r) = \R_\emptyset(r')]} \geq \theta}\ \right|\ x\in X,\theta \in [0,e^\eps]\right\} . \]
 If $G$ $\beta$-fools $\D$ where $\delta + e^\eps \beta < 1/2$ then $R[G]$ is a deletion $(\eps+ 2\delta+ 2 e^\eps \beta, \delta +\beta)$-DP local randomizer. Furthermore, for every $\gamma > 0$, $\R[G,\gamma]$ is a deletion $(\eps+ 2\delta+ 2 e^\eps \beta, \delta +\beta)$-DP local randomizer.
\end{lem}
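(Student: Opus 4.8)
The plan is to mirror the proof of Lemma~\ref{lem:prg-privacy}, now feeding in the two moment estimates from Lemma~\ref{lem:prg-ed-deletion}. First I would take the reference distribution $\rho_G$ to be uniform over $\zo^\ell$, and observe that the (truncated) Bernoulli acceptance probability used in Algorithm~\ref{alg:compressv2} on a seed $s$ producing $y=\R_\emptyset(G(s))$ is
\[ \min\lp 1,\ \frac{\pr[\R(x)=y]}{e^\eps\pr_{r\sim\zo^t}[\R_\emptyset(r)=y]}\rp = \frac{\min\lp \pr[\R(x)=y],\ e^\eps\pr_{r\sim\zo^t}[\R_\emptyset(r)=y]\rp}{e^\eps\pr_{r\sim\zo^t}[\R_\emptyset(r)=y]} = \frac{\pi_x(y)}{e^\eps}, \]
with $\pi_x$ exactly the truncated density ratio of Lemma~\ref{lem:prg-ed-deletion}. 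Hence the idealized randomizer $\R[G]$ (with the same truncation) outputs a seed $s$ with probability $\mu_x(s)$ proportional to $\pi_x(\R_\emptyset(G(s)))$, so that, writing $Z_x \dfn \E_{s'\sim\zo^\ell}[\pi_x(\R_\emptyset(G(s')))]$, we get $\mu_x(s)/\rho_G(s) = \pi_x(\R_\emptyset(G(s)))/Z_x$.

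Next I would check the two inequalities in the equivalent formulation of $(\eps',\delta')$-closeness from the Preliminaries, taking $\eps' = \eps + 2\delta + 2e^\eps\beta$. For $\sum_s |\rho_G(s) - e^{\eps'}\mu_x(s)|_+$: since $Z_x \le 1+e^\eps\beta$ (Lemma~\ref{lem:prg-ed-deletion}) we have $\eps' \ge \eps + e^\eps\beta \ge \eps + \ln Z_x$, so $e^{\eps'}/Z_x \ge e^\eps$ and, using $\pi_x \ge 0$ pointwise, this sum is at most $\E_{s'\sim\zo^\ell}[|1 - e^\eps\pi_x(\R_\emptyset(G(s')))|_+] \le \delta+\beta$, again by Lemma~\ref{lem:prg-ed-deletion}. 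For $\sum_s|\mu_x(s) - e^{\eps'}\rho_G(s)|_+$: using $\pi_x \le e^\eps$ pointwise and $Z_x \ge 1-\delta-e^\eps\beta > 0$ (which is where the hypothesis $\delta+e^\eps\beta<1/2$ enters) we get $\mu_x(s)/\rho_G(s) \le e^\eps/(1-\delta-e^\eps\beta)$, and the elementary inequality $-\ln(1-u)\le 2u$ for $u\in[0,1/2]$ applied with $u=\delta+e^\eps\beta$ gives $\ln\lp e^\eps/(1-\delta-e^\eps\beta)\rp \le \eps + 2\delta + 2e^\eps\beta = \eps'$, so every term of this sum vanishes. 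Together these show $\R[G](x)$ is $(\eps',\delta')$-close to $\rho_G$ with $\delta' = \delta+\beta$.

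Finally, for $\R[G,\gamma]$ I would argue as at the end of the proof of Lemma~\ref{lem:prg-privacy}: conditioned on some iteration accepting, its output is distributed precisely as $\R[G](x)$, and otherwise (all $J$ iterations reject) it outputs a uniform seed, i.e.\ a sample from $\rho_G$. So $\R[G,\gamma](x)$ is a mixture of $\R[G](x)$ and $\rho_G$, each of which is $(\eps',\delta')$-close to $\rho_G$; by quasi-convexity of $(\eps',\delta')$-closeness, $\R[G,\gamma](x)$ is $(\eps',\delta')$-close to $\rho_G$ as well.

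I expect the only real work to be bookkeeping: matching the truncated ratio of Algorithm~\ref{alg:compressv2} to the $\pi_x$ of Lemma~\ref{lem:prg-ed-deletion}, and keeping straight which of the two closeness inequalities absorbs the additive $\delta+\beta$ slack (the $\rho_G \le e^{\eps'}\mu_x$ direction) and which only needs multiplicative slack via $-\ln(1-u)\le 2u$ (the $\mu_x \le e^{\eps'}\rho_G$ direction). No idea beyond Lemma~\ref{lem:prg-ed-deletion} appears to be required.
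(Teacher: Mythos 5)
Your proposal is correct and takes essentially the same route as the paper's proof: the reference distribution is the uniform $\rho_G$ over seeds, $\mu_x(s)=\rho_G(s)\,\pi_x(\R_\emptyset(G(s)))/\E_{s'}[\pi_x(\R_\emptyset(G(s')))]$ with the truncated ratio $\pi_x$, the two estimates of Lemma~\ref{lem:prg-ed-deletion} give the two sides of $(\eps+2\delta+2e^\eps\beta,\delta+\beta)$-closeness (multiplicative slack via $1/(1-\delta-e^\eps\beta)$ on one side, the additive $\delta+\beta$ on the other), and the mixture/quasi-convexity argument handles $\R[G,\gamma]$. Your explicit verification that the truncated acceptance probability in Algorithm~\ref{alg:compressv2} makes the seed distribution proportional to $\pi_x$ is a bookkeeping detail the paper only asserts; otherwise the arguments coincide.
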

\begin{proof}
  As before, we let the reference distribution $\rho_G$ be the uniform distribution over $\zo^\ell$. Using the definitions in the proof of Lemma~\ref{lem:prg-ed-deletion} we observe that $\R[G](x)$ outputs $s$ with probability:
  
  \[ \mu_x(s) \dfn \frac{\pi(\R_\emptyset(G(s)))}{\sum_{s' \in \zo^\ell}\pi(\R_\emptyset(G(s')))} = \frac{ \frac{\nut_x(\R_\emptyset(G(s)))}{\rho(\R_\emptyset(G(s)))}}{2^{\ell} \cdot \E_{s' \sim \zo^\ell}[\pi_x(\R_\emptyset(G(s')))]} 
   =  \rho_G(s) \cdot \frac{ \frac{\nut_x(\R_\emptyset(G(s)))}{\rho(\R_\emptyset(G(s)))}}{\E_{s' \sim \zo^\ell}[\pi_x(\R_\emptyset(G(s')))]  } .\]
By the definition of $\nut_x$ we have that the numerator satisfies $\frac{\nut_x(\R_\emptyset(G(s)))}{\rho(\R_\emptyset(G(s)))} \leq e^\eps$. In addition, by Lemma~\ref{lem:prg-ed-deletion} the denominator $\E_{s' \sim \zo^\ell}[\pi_x(\R_\emptyset(G(s')))] \geq 1-\delta -e^\eps \beta$. Therefore \[\mu_x(s) \leq \rho_G(s) \cdot \frac{e^\eps}{1-\delta -e^\eps \beta} \leq e^{\eps + 2\delta + e^\eps \beta} \rho_G(s) .\]
For the other side of $(\eps,\delta)$-closeness we simply observe that by the Lemma~\ref{lem:prg-ed-deletion},
\alequn{
\sum_{s \in \zo^\ell} \left|\rho_G(s) - e^{\eps + e^\eps \beta} \mu_x(s)\right|_+ & = \sum_{s \in \zo^\ell} \left|\rho_G(s) - e^{\eps + e^\eps \beta} \rho_G(s) \cdot \frac{\pi_x(\R_\emptyset(G(s)))}{\E_{s' \sim \zo^\ell}[\pi_x(\R_\emptyset(G(s')))]} \right|_+ \\
& \leq \sum_{s \in \zo^\ell} \left|\rho_G(s) - e^{\eps} \rho_G(s) \cdot \pi_x(\R_\emptyset(G(s))) \right|_+ \\
& = \E_{s \sim \zo^\ell}\lb \left|1 - e^{\eps} \pi_x(\R_\emptyset(G(s))) \right|_+ \rb \leq \delta + \beta .
}
Finally to establish that $R[G,\gamma]$ is $(\eps+ 2\delta+ 2 e^\eps \beta, \delta +\beta)$ we, as before, appeal to quasi-convexity.
\end{proof}

To establish the utility guarantees for $\R[G,\gamma]$ we follow the same approach by establishing the utility guarantees for $\R[\ID_t,\gamma]$ and then using the properties of $G$. 
\begin{lem}
\label{lem:rejection-randomizer-ed}
Let $\R$ be a deletion $\eps$-DP $t$-samplable local randomizer. Then for the identity function $\ID_t \colon \zo^t \to \zo^t$ and any $\gamma > 0$ we have that $\R[\ID_t,\gamma]$ is a deletion $\eps$-DP local randomizer and for every $x\in \X$, $\TV(\R_\emptyset(\R[\ID_t,\gamma](x)),\R(x)) \leq \delta + \gamma$.
\end{lem}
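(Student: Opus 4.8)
The plan is to follow the proof of Lemma~\ref{lem:rejection-randomizer}, replacing exact rejection sampling by the truncated rejection sampling of Algorithm~\ref{alg:compressv2} and carrying along the extra $\delta$ terms. The starting observation is that with $G = \ID_t$ and $s$ uniform over $\zo^t$, the value $y = \R_\emptyset(G(s)) = \R_\emptyset(s)$ is distributed exactly according to the reference distribution $\rho$ of $\R$. Hence each iteration of Algorithm~\ref{alg:compressv2} draws $y \sim \rho$ and accepts it with probability $\min\!\big(1,\ \nu_x(y)/(e^\eps\rho(y))\big) = \nut_x(y)/(e^\eps\rho(y))$, where, as in the proof of Lemma~\ref{lem:prg-ed-deletion}, $\nu_x(y) \dfn \pr[\R(x)=y]$ and $\nut_x(y) \dfn \min(\nu_x(y), e^\eps\rho(y))$. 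In particular the per-iteration acceptance probability equals $\E_{y\sim\rho}[\nut_x(y)/(e^\eps\rho(y))] = \|\nut_x\|_1/e^\eps$, and the proof of Lemma~\ref{lem:prg-ed-deletion} already establishes $\|\nut_x\|_1 \ge 1-\delta$.

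For the privacy part, I would note that $\ID_t$ is a (degenerate) PRG whose output on a uniform seed is exactly uniform, hence it $\beta$-fools every family of tests with $\beta = 0$; applying Lemma~\ref{lem:prg-privacy-ed} with this $\beta$ then yields the stated deletion-DP guarantee for $\R[\ID_t,\gamma]$. The same conclusion can also be read directly off the explicit form of the output distribution described below.

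The main content is the total-variation bound. First, conditioned on some iteration accepting, the output seed $s$ has distribution over $\zo^t$ proportional to $\min(1,\nu_x(\R_\emptyset(s))/(e^\eps\rho(\R_\emptyset(s))))$, so its pushforward $\R_\emptyset(s)$ has law proportional to $\rho(y)\min(1,\nu_x(y)/(e^\eps\rho(y))) = \nut_x(y)/e^\eps$; that is, $\R_\emptyset(s)\sim \nut_x/\|\nut_x\|_1$. Since $\nut_x \le \nu_x$ pointwise and $\|\nut_x\|_1 \ge 1-\delta$, a one-line computation gives $\TV\!\big(\nut_x/\|\nut_x\|_1,\ \R(x)\big) \le 1 - \|\nut_x\|_1 \le \delta$. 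On the other hand, the overall output of $\R[\ID_t,\gamma](x)$ is a mixture of this ``accepted'' distribution and the conditional distribution of the seed from the last iteration in the event that all $J$ iterations reject; for the latter component I only use the trivial bound that its total-variation distance from $\R(x)$ is at most $1$. By joint convexity of total-variation distance, $\TV\!\big(\R_\emptyset(\R[\ID_t,\gamma](x)),\ \R(x)\big) \le \delta + p_{\mathrm{rej}}$, where $p_{\mathrm{rej}}$ is the probability that no iteration accepts. Finally $p_{\mathrm{rej}} \le (1-\|\nut_x\|_1/e^\eps)^J \le \exp(-(1-\delta)J/e^\eps) = \gamma$ by the choice $J = e^\eps\ln(1/\gamma)/(1-\delta)$, which gives the claimed bound $\delta + \gamma$.

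I expect the only mildly delicate point to be the handling of the all-reject event. In that case the output is the last rejected seed, whose pushforward through $\R_\emptyset$ is $\rho$ reweighted by the rejection probability and can be arbitrarily far from $\R(x)$ in total variation; the argument therefore has to isolate this event and bound its probability — using the lower bound $\|\nut_x\|_1 \ge 1-\delta$ together with the chosen value of $J$ — rather than trying to control the conditional output distribution there. Everything else is a direct adaptation of the corresponding steps in the proof of Lemma~\ref{lem:rejection-randomizer}.
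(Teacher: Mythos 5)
Your proof is correct and follows essentially the same route as the paper's: conditioned on acceptance the output (pushed through $\R_\emptyset$) is the truncated distribution $\nut_x/\|\nut_x\|_1$, which is within TV distance $1-\|\nut_x\|_1\le\delta$ of $\R(x)$, and the per-trial acceptance probability $\|\nut_x\|_1/e^\eps\ge(1-\delta)/e^\eps$ together with $J=e^\eps\ln(1/\gamma)/(1-\delta)$ bounds the all-reject probability by $\gamma$. Your explicit mixture/joint-convexity handling of the all-reject event (and the privacy remark via Lemma~\ref{lem:prg-privacy-ed} with $\beta=0$, which the paper's proof leaves implicit) is a slightly more careful writing of the same argument.
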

\begin{proof}
Conditioned on accepting a sample, $\R[\ID_t,\gamma]$ outputs a sample from the truncated version of the distribution of $\R(x)$. Specifically, $y$ is output with probability $\bar\nu_x(y) \dfn \frac{\nut(y)}{\sum_{y\in Y}\nut(y)}$, where $\nu_x(y) \dfn \pr[\R(x)=y]$ and $\nut_x(y) \dfn \min(\nu_x(y), e^{\eps}\pr[\R(\emptyset) = y])$.
From the proof of Lemma~\ref{lem:prg-ed-deletion}, we know that $\sum_{y \in Y} \nut_x(y) \geq 1-\delta$. Thus
\alequn{ \TV(\nu_x,\bar\nu_x) &= \fr{2} \sum_{y\in Y} | \nu_x(y) - \bar\nu_x(y) | \\
& \leq \fr{2} \sum_{y\in Y} (|\nu_x(y) - \nut_x(y)| + |\nut_x(y) - \bar\nu_x(y)|) \\
& = \fr{2} \sum_{y\in Y} (\nu_x(y) - \nut_x(y) + \bar\nu_x(y) - \nut_x(y)) \leq \delta .}

Truncation of the distribution also reduces the probability that a sample is accepted. Specifically,
\[
  \E_{y \sim \rho} \lb \frac{\nut_x(y)}{e^\eps \rho(y)} \rb =  \sum_{y\in Y} \frac{\nut_x(y)}{e^\eps} \geq \frac{1-\delta}{e^\eps}.
\]
$\R[G,\gamma]$ tries at least $e^\eps \ln(1/\gamma)/(1-\delta)$ samples and therefore, as in the proof of Lemma~\ref{lem:rejection-randomizer}, failure to accept any samples adds at most $\gamma$ to the total variation distance.
\end{proof}

From here we can directly obtain the analogues of Lemma~\ref{lem:prg-compute} and Theorem\ref{thm:main}.

Finally, to deal with the replacement version of $(\eps,\delta)$-DP we combine the ideas we used in Lemmas~\ref{lem:prg-privacy-replace} and \ref{lem:prg-privacy-ed}. The main distinction is a somewhat stronger test that we need to fool in this case.

\begin{lem}
\label{lem:prg-privacy-ed-replacement}
For  a $t$-samplable replacement $(\eps_r,\delta_r)$-DP and deletion $(\eps,\delta)$-DP local randomizer $\R\colon X\to Y$ and $G \colon \zo^\ell \to \zo^t$, let $\D$ and $\D_r$ denote the following families of tests which take $r'\in \zo^t$ as an input:
 \[ \D \dfn \left\{\left. \ind{\frac{\pr[\R(x) = \R_\emptyset(r')]}{\rho(\R_\emptyset(r'))} \geq \theta}\ \right|\ x\in X,\theta \in [0,e^\eps]\right\} ; \]
  \[ \D_r \dfn \left\{\left. \ind{\frac{\nut_x(\R_\emptyset(r'))}{\rho(\R_\emptyset(r'))}  - e^\eps \frac{\nut_{x'}(\R_\emptyset(r'))}{\rho(\R_\emptyset(r'))}  \geq \theta}\ \right|\ x,x'\in X,\theta \in [0,e^{\eps}]\right\} , \]
  where $\rho$ is the reference distribution of $\R$ and $\nut_x(y) \dfn \min(\pr[\R(x)=y], e^{\eps}\rho(y))$.
  If $G$ $\beta$-fools $\D \cup \D_r$ where $\delta + e^\eps \beta < 1/2$ then $R[G]$ is a replacement $(\eps_r + 2\delta+ 3 e^\eps \beta, 2\delta_r +2e^\eps \beta)$-DP local randomizer. Furthermore, for every $\gamma > 0$, $\R[G,\gamma]$ is a $(\eps_r + 2\delta+ 3 e^\eps \beta, 2\delta_r +2e^\eps \beta)$-DP local randomizer.
\end{lem}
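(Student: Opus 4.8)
The plan is to merge the normalizer‑ratio argument from the proof of Lemma~\ref{lem:prg-privacy-replace} with the truncation bookkeeping of Lemma~\ref{lem:prg-privacy-ed}. As there, take the reference distribution $\rho_G$ to be uniform over $\zo^\ell$, write $\pi_x(y)\dfn\nut_x(y)/\rho(y)$, and note that $\R[G](x)$ outputs $s$ with probability $\mu_x(s)=\rho_G(s)\cdot\pi_x(\R_\emptyset(G(s)))/Z_x$, where $Z_x\dfn\E_{s'\sim\zo^\ell}[\pi_x(\R_\emptyset(G(s')))]$. The first step is to apply Lemma~\ref{lem:prg-ed-deletion} to both $x$ and $x'$ — this is the place where fooling $\D$ (rather than $\D_r$) is used: it gives $Z_x,Z_{x'}\in[1-\delta-e^\eps\beta,\ 1+e^\eps\beta]$, and since $\delta+e^\eps\beta<1/2$ it also gives $Z_x,Z_{x'}>1/2$ and, by a routine logarithm estimate, $Z_{x'}/Z_x\le e^{2\delta+3e^\eps\beta}$.

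Next I would bound one side of replacement closeness, $\sum_s|\mu_x(s)-e^{\eps_r+2\delta+3e^\eps\beta}\mu_{x'}(s)|_+$. Pulling $1/Z_x$ out front rewrites this as $\frac{1}{Z_x}\,\E_{s\sim\zo^\ell}\bigl[\,|\pi_x(\R_\emptyset(G(s)))-(e^{\eps_r+2\delta+3e^\eps\beta}Z_x/Z_{x'})\,\pi_{x'}(\R_\emptyset(G(s)))|_+\,\bigr]$; by the bound on $Z_{x'}/Z_x$ the coefficient multiplying $\pi_{x'}$ is at least $e^{\eps_r}\ge e^\eps$, and since the positive part is nonincreasing in that coefficient we may lower it to $e^\eps$. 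The quantity $|\pi_x(\cdot)-e^\eps\pi_{x'}(\cdot)|_+$ lies in $[0,e^\eps]$, so writing it as $\int_0^{e^\eps}\ind{\pi_x(\R_\emptyset(G(s)))-e^\eps\pi_{x'}(\R_\emptyset(G(s)))\ge\theta}\,d\theta$ and using that $G$ $\beta$‑fools $\D_r$ (by the same integration‑over‑$\theta$ argument as in Lemma~\ref{lem:prg-ed-deletion}) replaces the pseudorandom average by $\E_{r'\sim\zo^t}[\,|\pi_x(\R_\emptyset(r'))-e^\eps\pi_{x'}(\R_\emptyset(r'))|_+\,]+e^\eps\beta=\sum_{y\in Y}|\nut_x(y)-e^\eps\nut_{x'}(y)|_+ + e^\eps\beta$.

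The crux is the inequality $\sum_{y}|\nut_x(y)-e^\eps\nut_{x'}(y)|_+\le\delta_r$, and this is exactly where the truncation earns its keep. I would argue pointwise: for $y$ with $\pr[\R(x')=y]\ge\rho(y)$, both arguments of the $\min$ defining $\nut_{x'}(y)$ are $\ge\rho(y)$, so $\nut_{x'}(y)\ge\rho(y)\ge e^{-\eps}\nut_x(y)$ and the positive part is $0$; for the remaining $y$, $\nut_{x'}(y)=\pr[\R(x')=y]$ and $\nut_x(y)\le\pr[\R(x)=y]$, so $|\nut_x(y)-e^\eps\nut_{x'}(y)|_+\le|\pr[\R(x)=y]-e^\eps\pr[\R(x')=y]|_+$; summing over these $y$ and invoking the replacement guarantee of $\R$ (taking $\rho$ to be the output distribution of $\R$ on a fixed input, so that its deletion exponent coincides with $\eps_r$) caps the total at $\delta_r$. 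Putting the pieces together yields $\sum_s|\mu_x(s)-e^{\eps_r+2\delta+3e^\eps\beta}\mu_{x'}(s)|_+\le(\delta_r+e^\eps\beta)/Z_x\le2\delta_r+2e^\eps\beta$, where the factor $2$ is precisely $1/Z_x$; the symmetric inequality is identical, so $\R[G]$ is replacement $(\eps_r+2\delta+3e^\eps\beta,\ 2\delta_r+2e^\eps\beta)$‑DP.

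Finally, for $\R[G,\gamma]$ I would use that on every input it is a mixture of $\R[G]$ and $\rho_G$, recall from Lemma~\ref{lem:prg-privacy-ed} that $\rho_G$ is $(\eps+2\delta+2e^\eps\beta,\ \delta+\beta)$‑close — hence a fortiori $(\eps_r+2\delta+3e^\eps\beta,\ 2\delta_r+2e^\eps\beta)$‑close — to $\R[G](x')$, and conclude with two applications of quasi‑convexity of $(\eps,\delta)$‑closeness, exactly as at the end of the proof of Lemma~\ref{lem:prg-privacy-replace}. The step I expect to be the main obstacle is the reduction in the third paragraph: the test family $\D_r$ controls only the $e^\eps$‑thresholded statistic $\pi_x-e^\eps\pi_{x'}$, so the whole argument must be routed through the truncated densities $\nut$, and it is this truncation at level $e^\eps\rho$ that simultaneously makes $\sum_y|\nut_x(y)-e^\eps\nut_{x'}(y)|_+$ collapse to the replacement guarantee and, through the $Z_x/Z_{x'}$ normalizer, is responsible for the $2\delta$ inflation of $\eps_r$ and the $2\delta_r$ (rather than $\delta_r$) in the additive term.
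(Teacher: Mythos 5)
Your overall architecture matches the paper's proof: control of the normalizers $Z_x,Z_{x'}$ via Lemma~\ref{lem:prg-ed-deletion} (this is where fooling $\D$ enters), absorption of the ratio $Z_x/Z_{x'}$ into the $2\delta+3e^\eps\beta$ slack, the integral-over-$\theta$ device to pass from pseudorandom seeds to true randomness, the factor $1/Z_x\le 2$ producing $2(\delta_r+e^\eps\beta)$, and quasi-convexity for $\R[G,\gamma]$. However, there is a genuine gap at your crux step. You lower the coefficient in $|\pi_x-c\,\pi_{x'}|_+$ from $c\ge e^{\eps_r}$ all the way down to $e^{\eps}$ so as to use $\D_r$ with the multiplier $e^{\eps}$, and you then need $\sum_{y}|\nut_x(y)-e^{\eps}\nut_{x'}(y)|_+\le\delta_r$. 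That inequality is false in general when $\eps<\eps_r$: for pure binary randomized response ($\delta=\delta_r=0$) with the uniform reference one has $\eps<\eps_r$, $\nut_x=\nu_x$, and $\sum_y|\nu_x(y)-e^{\eps}\nu_{x'}(y)|_+>0=\delta_r$. Your escape --- ``taking $\rho$ to be the output distribution of $\R$ on a fixed input, so that its deletion exponent coincides with $\eps_r$'' --- is not a legitimate move: $\rho$ is part of the data of the lemma, and it determines $\R_\emptyset$, the truncation $\nut$, both test families, and the compressed randomizer $\R[G]$ itself; moreover the entire point of the replacement section is to allow references with $\eps<\eps_r$ (e.g.\ $\eps=\eps_r/2$ for Laplace noise). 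So as written your argument establishes the lemma only in the special case $\eps=\eps_r$, $\delta=\delta_r$, not in the stated generality. (Your final ``a fortiori'' comparison $\delta+\beta\le 2\delta_r+2e^\eps\beta$ in the mixture step likewise relies on that restriction and is not automatic for general $\delta,\delta_r$.)

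The fix is small, and it is what the paper does: keep the multiplier at $e^{\eps_r}$ throughout. The $e^{\eps}$ in the displayed definition of $\D_r$ is evidently a typo --- the paper's proof applies the fooling hypothesis to the tests $\ind{\pi_x(\R_\emptyset(r'))-e^{\eps_r}\pi_{x'}(\R_\emptyset(r'))\ge\theta}$, and the threshold range $[0,e^\eps]$ remains correct because $\pi_x\le e^\eps$. Since your coefficient $e^{\eps'}Z_x/Z_{x'}$ is already at least $e^{\eps_r}$, lower it only to $e^{\eps_r}$; your own pointwise case analysis then proves exactly the paper's truncation inequality $\sum_y|\nut_x(y)-e^{\eps_r}\nut_{x'}(y)|_+\le\sum_y|\nu_x(y)-e^{\eps_r}\nu_{x'}(y)|_+\le\delta_r$ for every $\eps\le\eps_r$ (in your first case, $\nut_{x'}(y)\ge\rho(y)\ge e^{-\eps}\nut_x(y)\ge e^{-\eps_r}\nut_x(y)$; the paper splits instead on whether $\nu_{x'}(y)>e^\eps\rho(y)$, which is an inessential difference). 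With that change the rest of your derivation, including the $1/Z_x\le 2$ step and the quasi-convexity ending, goes through and coincides with the paper's argument.
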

\begin{proof}
  First we observe that $\R$ being $(\eps_r,\delta_r)$ replacement DP implies that $\nut_x$ and $\nut_{x'}$ are $(\eps_r,\delta_r)$ close in the following sense:
  \alequn{
  \E_{r'\sim\zo^t} \lb \left| \frac{\nut_x(\R_\emptyset(r'))}{\rho(\R_\emptyset(r'))}  - e^{\eps_r} \frac{\nut_{x'}(\R_\emptyset(r'))}{\rho(\R_\emptyset(r'))} \right|_+ \rb &= \E_{y\sim \rho} \lb \left| \frac{\nut_x(y)}{\rho(y)}  - e^{\eps_r} \frac{\nut_{x'}(y)}{\rho(y)} \right|_+ \rb \\ 
  &=  \sum_{y\in Y}\left| \nut_x(y)  - e^{\eps_r} \nut_{x'}(y)\right|_+\\ 
  & \leq \sum_{y\in Y}\left| \nu_x(y)  - e^{\eps_r} \nu_{x'}(y)\right|_+ \leq \delta_r,
  }
where we used the fact that if $\nu_{x'}(y) > \nut_{x'}(y)$ then $\nut_{x'}(y) = e^\eps \rho(y) \geq \nut_{x}(y)$ and so 
\[\left| \nut_x(y)  - e^{\eps_r} \nut_{x'}(y)\right|_+ = \left| \nut_x(y)  - e^{\eps_r} \nu_{x'}(y)\right|_+ .\]
Using the decomposition
  \[\E_{r'\sim\zo^t} \lb \left| \frac{\nut_x(\R_\emptyset(r'))}{\rho(\R_\emptyset(r'))}  - e^{\eps_r} \frac{\nut_{x'}(\R_\emptyset(r'))}{\rho(\R_\emptyset(r'))} \right|_+ \rb = \int_0^{e^\eps} \pr_{r'\sim\zo^t} \lb  \frac{\nut_x(\R_\emptyset(r'))}{\rho(\R_\emptyset(r'))}  - e^{\eps_r} \frac{\nut_{x'}(\R_\emptyset(r'))}{\rho(\R_\emptyset(r'))} \geq \theta \rb d\theta \]
and the fact that $G$ $\beta$ fools $\D_r$ we obtain that 
  \equ{ \E_{s'\sim\zo^\ell} \lb \left| \frac{\nut_x(\R_\emptyset(G(s')))}{\rho(\R_\emptyset(G(s')))}  - e^{\eps_r} \frac{\nut_{x'}(\R_\emptyset(G(s')))}{\rho(\R_\emptyset(G(s')))} \right|_+ \rb  \leq \delta_r + e^\eps \beta .\label{eq:get-delta}}
  By Lemma~\ref{lem:prg-ed-deletion} we have that for $\pi_x(y) \dfn \frac{\nut_x(y)}{\rho(y)}$ it holds that \[ \zeta_x \dfn \E_{s' \sim \zo^\ell}[\pi_x(\R_\emptyset(G(s')))] \in [1-\delta-e^\eps\beta, 1+e^\eps\beta] .\]
Following the notation in Lemma~\ref{lem:prg-privacy-ed}, we know that the distribution of $\R[G](x)$ is 
  \[\mu_x(s)=  \rho_G(s) \cdot \frac{ \frac{\nut_x(\R_\emptyset(G(s)))}{\rho(\R_\emptyset(G(s)))}}{\E_{s' \sim \zo^\ell}[\pi_x(\R_\emptyset(G(s')))]  }  = \frac{ \rho_G(s) \cdot \nut_x(\R_\emptyset(G(s)))}{\zeta_x \cdot \rho(\R_\emptyset(G(s)))}.\]
   Thus setting $\eps' = \eps_r + 2\delta + 3e^\eps\beta$ we obtain:
  \alequn{
  \sum_{s'\in\zo^\ell} | \mu_x(s')  - e^{\eps'} \mu_{x'}(s') |_+ &= \E_{s'\sim \zo^\ell} \lb \left| \frac{\mu_x(s')}{\rho_G(s')} - e^{\eps'} \frac{ \mu_{x'}(s')}{\rho_G(s')} \right|_+ \rb \\
  & = \E_{s'\sim\zo^\ell} \lb \left| \frac{\nut_x(\R_\emptyset(G(s')))}{\zeta_x \cdot \rho(\R_\emptyset(G(s')))}  - e^{\eps'} \frac{\nut_{x'}(\R_\emptyset(G(s')))}{\zeta_{x'} \cdot \rho(\R_\emptyset(G(s')))} \right|_+ \rb \\
  &= \fr{\zeta_x} \cdot \E_{s'\sim\zo^\ell} \lb \left| \frac{\nut_x(\R_\emptyset(G(s')))}{\rho(\R_\emptyset(G(s')))}  - e^{\eps'} \frac{\zeta_x \cdot \nut_{x'}(\R_\emptyset(G(s')))}{\zeta_{x'} \cdot \rho(\R_\emptyset(G(s')))} \right|_+ \rb \\
  & \leq \fr{1-\delta-e^\eps\beta} \cdot \E_{s'\sim\zo^\ell} \lb \left| \frac{\nut_x(\R_\emptyset(G(s')))}{\rho(\R_\emptyset(G(s')))}  - e^{\eps'} \frac{(1-\delta-e^\eps\beta) \nut_{x'}(\R_\emptyset(G(s')))}{(1+e^\eps\beta) \rho(\R_\emptyset(G(s')))} \right|_+ \rb \\
  &\leq \fr{1-\delta-e^\eps\beta} \cdot \E_{s'\sim\zo^\ell} \lb \left| \frac{\nut_x(\R_\emptyset(G(s')))}{\rho(\R_\emptyset(G(s')))}  - e^{\eps_r} \frac{ \nut_{x'}(\R_\emptyset(G(s')))}{ \rho(\R_\emptyset(G(s')))} \right|_+ \rb \\
  & \leq 2(\delta_r + e^\eps \beta),
  }
where we used that $\frac{1+e^\eps\beta}{1-\delta-e^\eps\beta} \leq e^{2\delta + 3e^\eps\beta}$ and $\fr{1-\delta-e^\eps\beta} \leq 2$ whenever $\delta + e^\eps \beta < 1/2$.
\end{proof}
\fi

\section{Frequency Estimation}
\label{sec:rappor}
In this section we apply our approach to the problem of frequency estimation over a discrete domain. In this problem on domain $X = [k]$, the goal is to estimate the frequency of each element $j \in [k]$ in the dataset. Namely, for $S=(x_1,\ldots,x_n) \in X^n$ we let $c(S) \in \{0,\ldots,n\}^k$ be the vector of the counts of each of the elements in $S$: $c(S)_j = |\{ i \cond x_i=j\}|$. In the frequency estimation problem the goal is to design a local randomizer and a decoding/aggregation algorithm that outputs a vector $\tilde c$ that is close to $c(S)$. Commonly studied metrics are (the expected) $\ell_\infty$, $\ell_1$ and $\ell_2$ norms of $\tilde c - c(S)$. In most regimes of interest, $n$ is large enough and all these errors are essentially determined by the variance of the estimate of each count produced by the randomizer and therefore the choice of the metric does not affect the choice of the algorithm.

The randomizer used in the RAPPOR algorithm \citep{erlingsson2014rappor} is defined by two parameters $\alpha_0$ and $\alpha_1$. The algorithm first converts the input $j$ to the indicator vector of $j$ (also referred to as one-hot encoding). 
It then randomizes each bit in this encoding: if the bit is 0 then $1$ is output with probability $\alpha_0$ (and 0 with probability $1-\alpha_0$) and if the bit is 1 then $1$ is output with probability $\alpha_1$.

For deletion privacy the optimal error is achieved by a symmetric setting $\alpha_0 = 1/(e^\eps + 1)$ and $\alpha_1 = e^\eps/(e^\eps + 1)$ \citep{ErlingssonFMRSTT2020}. This makes the algorithm equivalent to applying the standard binary randomized response to each bit. A simple analysis shows that this results in the standard deviation of each count being $\frac{\sqrt{n}e^{\eps/2}}{e^\eps -1}$ \citep{erlingsson2014rappor,WangBLJ:17}.
For replacement privacy the optimal error is achieved by an asymmetric version in which $\alpha_0 =1/(e^\eps+1)$ but $\alpha_1 = 1/2$. The resulting standard deviation for each count is dominated by $\frac{2\sqrt{n}e^{\eps/2}}{e^\eps -1}$ \citep{WangBLJ:17}. (We remark that several works analyze the symmetric RAPPOR algorithm in the replacement privacy. This requires setting $\alpha_0 = (1-\alpha_1) = 1/(e^{\eps/2} + 1)$ resulting in a substantially worse algorithm than the asymmetric version).

Note that the resulting encoding has $\approx n/(e^\eps+1)$ ones. A closely-related Subset Selection algorithm \citep{wang2016mutual,ye2018optimal} maps inputs to bit vectors of length $k$ with exactly $\lceil \approx n/(e^\eps+1) \rceil$ ones (that can be thought of as a subset of $[k]$). An input $j$ is mapped with probability $\approx 1/2$ to a random subset that contains $j$ and with probability $\approx 1/2$ to a random subset that does not. This results in essentially the same marginal distributions over individual bits and variance bounds as asymmetric RAPPOR. \ifconf\else (This algorithm can also be easily adapted to deletion privacy in which case the results will be nearly identical to symmetric RAPPOR).\fi

\subsection{Pairwise-independent RAPPOR}
While we can use our general result to compress communication in RAPPOR, in this section we exploit the specific structure of the randomizer. Specifically, the tests needed for privacy are fooled if the marginals of the PRG are correct. Moreover the accuracy is preserved as long as the bits are randomized in a pairwise independent way. Thus we can simply use a standard derandomization technique for pairwise independent random variables.
Specifically, to obtain a Bernoulli random variable with bias $\alpha_0$ we will use a finite field $X=\gf{p}$ of size $p$ such that $\alpha_0 p$ is an integer (or, in general, sufficiently close to an integer) and $p$ is a prime larger than $k$. This allows us to treat inputs in $[k]$ as non-zero elements of $\gf{p}$. We will associate all elements of the field that are smaller (in the regular order over integers) than $\alpha_0 p$ with $1$ and the rest with $0$.  We denote this indicator function of the event $z < \alpha_0 p$ by $\bool(z)$. Now for a randomly and uniformly chosen element $z \in \gf{p}$, we have that $\bool(z)$ is distributed as a Bernoulli random variable with bias $\alpha_0$.
\ifconf\else
We note that this approach is a special case of a more general approach is which we associate each $j$ with a non-zero element of an inner product space $\gf{q}^d$, where $q$ is a prime power. This more general approach \ifconf(that we describe in SM)  \else (that we describe in Section \ref{sec:gen-pi-rappor}) \fi allows to reduce some of the computation overheads in decoding.
\fi

As mentioned we will, associate each index $j\in [k]$ with the element $j$ in $\gf{p}$.
We can describe an affine function $\phi$ over $\gf{p}$ using its $2$ coefficients: $\phi_0$ and $\phi_1$ and for $z\in \gf{p}$ we define $\phi(z) = \phi_0 +  z \phi_1$, where addition and multiplication are in the field $\gf{p}$. Each such function encodes a vector in $\gf{p}^k$ as $\phi([k]) \dfn \phi(1),\phi(2),\ldots,\phi(k)$. Let $\Phi \dfn \{\phi \cond \phi \in \gf{p}^{2}\}$ be the family of all such functions. For a randomly chosen function from this family the values of the function on two distinct non-zero values are uniformly distributed and pairwise-independent: for any $j_1\neq j_2 \in [k]$  and $a_1,a_2 \in \gf{p}$ we have that
\ifconf
\begin{align*}
    \pr_{\phi \sim \Phi} &[\phi(j_1) = a_1 \mbox{ and } \phi(j_2) = a_2] = \\
    &\pr_{\phi \sim \Phi} [\phi(j_1) = a_1 ] \cdot \pr_{\phi \sim \Phi} [\phi(j_2) = a_2] = \frac{1}{p^2}.
\end{align*}
\else
\[\pr_{\phi \sim \Phi} [\phi(j_1) = a_1 \mbox{ and } \phi(j_2) = a_2] = \pr_{\phi \sim \Phi} [\phi(j_1) = a_1 ] \cdot \pr_{\phi \sim \Phi} [\phi(j_2) = a_2] = \frac{1}{p^2}.\]
\fi
In particular, if we use the encoding of $\phi$ as a boolean vector
\[ \bool(\phi[k]) \dfn \bool(\phi(1)),\bool(\phi(2)),\ldots,\bool(\phi(k))\] then we have that for $\phi \sim \Phi$ and any $j_1 \neq j_2 \in [k]$, $\bool(\phi(j_1))$ and $\bool(\phi(j_2))$ are independent Bernoulli random variables with bias $\alpha_0$.

Finally, for every index $j \in [k]$ and bit $b\in \zo$ we denote the set of functions $\phi$ whose encoding has bit $b$ in position $j$ by $\Phi_{j,b}$:
\equ{\Phi_{j,b} \dfn \{\phi \in \Phi \cond \bool(\phi(j)) = b \} .\label{eq:phi_j}}
We can now describe the randomizer, which we refer to as Pairwise-Independent (PI) RAPPOR for general $\alpha_1> \alpha_0$.
\begin{algorithm}[htb]
	\caption{PI-RAPPOR randomizer}\label{alg:pi-rappor}
	\begin{algorithmic}[1]
		\REQUIRE An index $j \in [k]$, $0<\alpha_0 <\alpha_1<1$, prime $p\geq k+1$ s.t.~$\alpha_0 p \in \mathbb{N}$
        \STATE Sample  $b$ from $\Bern(\alpha_1)$
        \STATE Sample randomly $\phi$ from $\Phi_{j,b}$ defined in eq.~\eqref{eq:phi_j}
		\STATE Send $\phi$
\end{algorithmic}
\end{algorithm}

The server side of the frequency estimation with pairwise-independent RAPPOR consists of a decoding step that converts $\phi$ to $\bool(\phi[k])$ and then the same debiasing and aggregation as for the standard RAPPOR. We describe it as a frequency oracle to emphasize that each count can be computed individually.

\begin{algorithm}[htb]
	\caption{Server-side frequency for PI-RAPPOR}\label{alg:agg-rappor}
	\begin{algorithmic}[1]
		\REQUIRE $0<\alpha_0 <\alpha_1<1$, $k$, index $j \in [k]$ and prime $p > k$. Reports $\phi^1,\ldots,\phi^n$ from $n$ users.
        \STATE $\mbox{sum} = 0$
		\FOR{$i \in [n]$}
            \STATE $\mbox{sum}+=\bool(\phi^i(j))$
        \ENDFOR
        \STATE $\tilde c_j = \frac{\mbox{sum} - \alpha_0 n}{\alpha_1-\alpha_0}$
        \STATE Return $\tilde c_j$
\end{algorithmic}
\end{algorithm}

We start by establishing several general properties of PI-RAPPOR. First we establish that the privacy guarantees for PI-RAPPOR are identical to those of RAPPOR.
\begin{lem}
\label{lem:pi-rappor-privacy}
PI-RAPPOR randomizer (Alg.~\ref{alg:pi-rappor}) is deletion $\max\left\{\frac{\alpha_1}{\alpha_0}, \frac{1-\alpha_0}{1-\alpha_1}\right\}$-DP and replacement $\frac{\alpha_1(1-\alpha_0)}{\alpha_0(1-\alpha_1)}$-DP.
\end{lem}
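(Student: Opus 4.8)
The plan is to compute the output distribution of the PI-RAPPOR randomizer on an input $j$, and then bound the density ratio between two inputs $j$ and $j'$ directly. The key observation is that the randomizer outputs a function $\phi \in \Phi$, and the only way $j$ enters the output distribution is through the event $\bool(\phi(j)) = b$. So I would first compute $\pr[\text{PI-RAPPOR}(j) = \phi]$ by conditioning on the bit $b$ sampled in line~1: this probability is $\alpha_1 \cdot \frac{1}{|\Phi_{j,1}|}$ if $\bool(\phi(j)) = 1$ and $(1-\alpha_1)\cdot \frac{1}{|\Phi_{j,0}|}$ if $\bool(\phi(j)) = 0$. Here I need the sizes $|\Phi_{j,b}|$, which by pairwise independence (in fact just uniformity of $\phi(j)$ over $\gf{p}$ for $\phi \sim \Phi$) are $|\Phi_{j,1}| = p \cdot \alpha_0 p$ and $|\Phi_{j,0}| = p \cdot (1-\alpha_0) p$, independent of $j$. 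Thus $\pr[\text{PI-RAPPOR}(j) = \phi]$ equals $\frac{\alpha_1}{\alpha_0 p^2}$ when $\bool(\phi(j))=1$ and $\frac{1-\alpha_1}{(1-\alpha_0)p^2}$ when $\bool(\phi(j))=0$.

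For deletion privacy I would exhibit a reference distribution $\rho$; the natural one is the uniform distribution over $\Phi$, i.e. $\rho(\phi) = 1/p^2$, which corresponds to $\alpha_0 = \alpha_1$ (pure randomized response with no signal). Then the density ratio $\pr[\text{PI-RAPPOR}(j)=\phi]/\rho(\phi)$ takes exactly two values, $\alpha_1/\alpha_0$ and $(1-\alpha_1)/(1-\alpha_0)$, over all $\phi$ and all $j$. Since $\alpha_1 > \alpha_0$, the first ratio is $>1$ and the second is $<1$, so the ratio lies in $[\,(1-\alpha_1)/(1-\alpha_0),\ \alpha_1/\alpha_0\,]$, and the multiplicative distance to $\rho$ (hence $(\eps,0)$-closeness) is controlled by $\max\{\alpha_1/\alpha_0,\ (1-\alpha_0)/(1-\alpha_1)\}$, giving the claimed deletion DP bound. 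For replacement privacy I would instead directly compare PI-RAPPOR$(j)$ with PI-RAPPOR$(j')$: for a fixed $\phi$, the ratio $\pr[\text{PI-RAPPOR}(j)=\phi]/\pr[\text{PI-RAPPOR}(j')=\phi]$ is a ratio of two numbers each in $\{\alpha_1/\alpha_0,\ (1-\alpha_1)/(1-\alpha_0)\}$ (depending on $\bool(\phi(j))$ and $\bool(\phi(j'))$ respectively); the worst case is numerator $=\alpha_1/\alpha_0$ and denominator $=(1-\alpha_1)/(1-\alpha_0)$, yielding $\frac{\alpha_1(1-\alpha_0)}{\alpha_0(1-\alpha_1)}$, and symmetrically its reciprocal on the other side. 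This is exactly the claimed replacement DP parameter.

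The only genuine content beyond bookkeeping is verifying $|\Phi_{j,b}| = \alpha_0 p^2$ (resp.\ $(1-\alpha_0)p^2$) uniformly in $j$, which reduces to the claim that for $\phi \sim \Phi$ and any nonzero $j \in \gf{p}$, $\phi(j) = \phi_0 + j\phi_1$ is uniform over $\gf{p}$ — immediate since for each fixed $\phi_1$ the map $\phi_0 \mapsto \phi_0 + j\phi_1$ is a bijection of $\gf{p}$. The assumption $\alpha_0 p \in \mathbb{N}$ ensures the count of field elements with $\bool(z)=1$ is exactly $\alpha_0 p$ so that the bias is exactly $\alpha_0$; without it one picks up a small rounding term. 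I would also note these probabilities are well-defined: $|\Phi_{j,b}| \geq 1$ since $0 < \alpha_0 < 1$ and $p > k \geq 1$, so the conditional "sample randomly $\phi$ from $\Phi_{j,b}$" in line~2 is always feasible.

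I do not anticipate a real obstacle here — the main point to get right is that the density ratios take only the two stated values regardless of $j$ and $\phi$, so that the privacy parameters are exactly (not just up to lower-order terms) those of standard RAPPOR/randomized response; everything else is a one-line field-theoretic fact plus the definition of $(\eps,0)$-closeness.
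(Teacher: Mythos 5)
Your proposal is correct, but it takes a different route from the paper. You analyze PI-RAPPOR directly: you compute the exact output distribution $\pr[\mathrm{PI\mbox{-}RAPPOR}(j)=\phi]$ (using that $\phi(j)$ is uniform over $\gf{p}$ for $\phi\sim\Phi$, so $|\Phi_{j,1}|=\alpha_0 p^2$ and $|\Phi_{j,0}|=(1-\alpha_0)p^2$ independently of $j$), take the uniform distribution over $\Phi$ as the reference for deletion DP, and bound the replacement ratio by comparing the two inputs pointwise; the worst-case ratios $\max\{\alpha_1/\alpha_0,(1-\alpha_0)/(1-\alpha_1)\}$ and $\frac{\alpha_1(1-\alpha_0)}{\alpha_0(1-\alpha_1)}$ come out exactly as claimed, and your normalization check (the computed masses sum to $1$) makes the argument airtight. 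The paper instead deliberately routes the proof through its general compression framework: it observes that PI-RAPPOR is exactly $\R[G]$ for the RAPPOR randomizer $\R$ with $G(\phi)=(\phi(1),\dots,\phi(k))$, notes that the density-ratio tests of Lemma~\ref{lem:prg-privacy} depend only on the single coordinate $\bool(z'_j)$, and argues that $G$ $0$-fools them because $\bool(\phi(j))$ has the correct marginal; deletion privacy then follows from Lemma~\ref{lem:prg-privacy} and replacement privacy from Lemma~\ref{lem:prg-privacy-replace}, both with $\beta=0$. (The paper even remarks that the direct analysis is easy and that the framework route is chosen as instructive.) Your version buys a shorter, self-contained, and exact calculation that needs no machinery beyond counting; the paper's version buys the conceptual point of the section — that only the marginal distribution of the decoded bit needs to be fooled, so a pairwise-independent generator suffices as a lossless instance of the general $\R[G]$ scheme — and reuses lemmas already proved. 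One cosmetic remark: the lemma states the privacy parameters multiplicatively (as $e^{\eps}$ rather than $\eps$); you handle this consistently, as does the paper.
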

\ifconf\else
\begin{proof}
While it is easy to analyze the privacy guarantees of PI-RAPPOR directly it is instructive to show that these guarantees follow from our general compression technique. Specifically, there is a natural way to sample from the reference distribution of RAPPOR relative to which our pairwise PRG fools the density tests given in Lemma \ref{lem:prg-privacy}.

To sample from the reference distribution of RAPPOR we pick $k$ values $z_1,\ldots,z_k$ randomly independently and uniformly from $\gf{p}$ and then output $\bool(z_1),\bool(z_2),\ldots,\bool(z_k)$ (we note that samplability is defined using uniform distribution over binary strings  length $t$ as an input but any other distribution can be used instead). By our choice of parameter $p$ and definition of $\bool$, this gives $k$ i.i.d.~samples from $\Bern(\alpha_0)$, which is the reference distribution for RAPPOR. Let $\R$ denote the RAPPOR randomizer. For any $j \in [k]$ and $z' \in \gf{p}^k$ the ratio of densities at $z'$ satisfies:
\[\frac{\pr[\R(j) = \R_\emptyset(z')]}{\pr_{z\sim \gf{p}^k}[\R_\emptyset(z) = \R_\emptyset(z')]}= \begin{cases}
                                                                                                    \frac{\alpha_1}{\alpha_0}, & \mbox{if } \bool(z'_j) = 1 \\
                                                                                                    \frac{1-\alpha_1}{1-\alpha_0}, & \mbox{otherwise}.
                                                                                                  \end{cases} .\]
With probability $\alpha_1$, PI-RAPPOR algorithm  samples $\phi$ uniformly from $\Phi_{j,1}$ and with probability $1-\alpha_1$ PI-RAPPOR algorithm  samples $\phi$ uniformly from $\Phi_{j,0}$. This means that PI-RAPPOR is exactly equal to $\R[G]$, where $G\colon \gf{p}^2 \to \gf{p}^k$ is defined as $G(\phi) = \phi(1),\phi(2),\ldots,\phi(k)$.

Now to prove that PI-RAPPOR has the same deletion privacy guarantees as RAPPOR it suffices to prove that $G$ $0$-fools the tests based on the ratio of densities above. This follows immediately from the fact that $\bool(\phi(j))$ for $\phi \sim \Phi$ is distributed in the same way as $\bool(z_j)$ for $z\sim \gf{p}^k$. 

To prove that PI-RAPPOR has the same replacement privacy guarantees as RAPPOR we simply use the same reference distribution and apply Lemma~\ref{lem:prg-privacy-replace}.
\end{proof}
\fi

Second we establish that the utility guarantees of PI-RAPPOR are identical to those of RAPPOR. This follows directly from the fact that the utility is determined by the variance of the estimate of each individual count in each user's contribution. The variance of the estimate of $c(S)_j$ is a sum of $c(S)_j$ variances for randomization of $1$ and $n-c(S)_j$ variances of randomization of $0$. These variances are identical for RAPPOR and PI-RAPPOR leading to identical \emph{exact} bounds. \ifconf\else Standard results on concentration of sums of independent random variables imply that the bounds on the variance can be translated easily into high probability bounds and also into bounds on the expectation of $\ell_\infty$, $\ell_1$, $\ell_2$ errors.
\fi
\ifconf
\begin{lem}
\label{lem:pi-rappor-utility}
For any dataset $S\in [k]^n$, the estimate $\tilde c$ computed by PI-RAPPOR algorithm (Algs.~\ref{alg:pi-rappor},\ref{alg:agg-rappor}) satisfies
$\E[\tilde c] = c(S)$, and for all $j\in [k]$,
\[\Var[\tilde c_j] =  c(S)_j \frac{1-\alpha_0 - \alpha_1}{\alpha_1-\alpha_0} + n \frac{\alpha_0(1-\alpha_0)}{(\alpha_1-\alpha_0)^2} \]
For the symmetric case $\alpha_0=1-\alpha_1$ this simplifies to $\Var[\tilde c_j] = n \frac{\alpha_0(1-\alpha_0)}{(1-2\alpha_0)^2}$.
In addition, the expected $\ell_2$ squared error is
\[\E \lb \|\tilde c - c(S)\|_2^2\rb = n \frac{1-\alpha_0 - \alpha_1}{\alpha_1-\alpha_0} + n k \frac{\alpha_0(1-\alpha_0)}{(\alpha_1-\alpha_0)^2} .\]
\end{lem}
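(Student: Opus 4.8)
The plan is to reduce everything to a per-user, per-coordinate analysis of the decoded bit $\bool(\phi^i(j))$ produced on query $j$ from user $i$'s report $\phi^i$, and then assemble the two moments by linearity and by independence across users.

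\textbf{Step 1: the marginal law of a decoded bit.} Fix a user $i$ with input $x_i\in[k]$ and a query coordinate $j\in[k]$. If $j=x_i$, then Algorithm~\ref{alg:pi-rappor} samples $\phi^i$ uniformly from $\Phi_{x_i,b}$ with $b\sim\Bern(\alpha_1)$, so $\bool(\phi^i(j))=b\sim\Bern(\alpha_1)$ exactly. If $j\neq x_i$, observe that for distinct $j,x_i$ the map $\phi\mapsto(\phi(x_i),\phi(j))$ is a bijection of $\gf{p}^2$ (the associated matrix $\left(\begin{smallmatrix}1 & x_i\\ 1 & j\end{smallmatrix}\right)$ has determinant $j-x_i\neq 0$). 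Hence, conditioned on $\bool(\phi^i(x_i))=b$ --- equivalently, on $\phi^i(x_i)$ lying in a fixed subset of $\gf{p}$ --- the value $\phi^i(j)$ is still uniform on $\gf{p}$, and since $\alpha_0 p\in\mathbb{N}$ we get $\bool(\phi^i(j))\sim\Bern(\alpha_0)$ regardless of $b$. Thus $\E[\bool(\phi^i(j))]=\alpha_1$ when $x_i=j$ and $\alpha_0$ otherwise, and $\Var[\bool(\phi^i(j))]=\alpha_1(1-\alpha_1)$ when $x_i=j$ and $\alpha_0(1-\alpha_0)$ otherwise.

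\textbf{Step 2: mean and variance of $\tilde c_j$.} Write $\mathrm{sum}_j=\sum_{i\in[n]}\bool(\phi^i(j))$ as in Algorithm~\ref{alg:agg-rappor}. By Step 1 and linearity, $\E[\mathrm{sum}_j]=\alpha_1 c(S)_j+\alpha_0(n-c(S)_j)=\alpha_0 n+(\alpha_1-\alpha_0)c(S)_j$, so $\E[\tilde c_j]=(\E[\mathrm{sum}_j]-\alpha_0 n)/(\alpha_1-\alpha_0)=c(S)_j$, i.e.\ $\E[\tilde c]=c(S)$. Since the reports $\phi^1,\dots,\phi^n$ are generated with independent randomness, the bits $\bool(\phi^1(j)),\dots,\bool(\phi^n(j))$ are independent (the merely pairwise independence \emph{within} one report across coordinates never enters, as we query a single coordinate $j$), so $\Var[\mathrm{sum}_j]=c(S)_j\,\alpha_1(1-\alpha_1)+(n-c(S)_j)\,\alpha_0(1-\alpha_0)$ and $\Var[\tilde c_j]=\Var[\mathrm{sum}_j]/(\alpha_1-\alpha_0)^2$. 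The claimed closed form follows from the elementary identity $\alpha_1(1-\alpha_1)-\alpha_0(1-\alpha_0)=(\alpha_1-\alpha_0)(1-\alpha_0-\alpha_1)$; the symmetric case is immediate on substituting $\alpha_0=1-\alpha_1$, which kills the first term and gives $\alpha_1-\alpha_0=1-2\alpha_0$. Finally, unbiasedness and the coordinatewise structure give $\E[\|\tilde c-c(S)\|_2^2]=\sum_{j\in[k]}\Var[\tilde c_j]$, and plugging in the variance formula together with $\sum_{j\in[k]}c(S)_j=n$ yields the last displayed bound.

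\textbf{Main obstacle.} There is no deep difficulty here --- the computation is essentially the same as for ordinary RAPPOR --- but the one point needing care is the $j\neq x_i$ case of Step 1: we must know that conditioning the affine map on (a function of) its value at $x_i$ leaves its value at $j$ uniform. This is the pairwise independence established before Algorithm~\ref{alg:pi-rappor}, but it is needed for the \emph{conditional} distribution, which is why invoking the bijection $\phi\mapsto(\phi(x_i),\phi(j))$ directly is cleaner than merely citing "the marginals are correct." One should also record that the hypothesis $\alpha_0 p\in\mathbb{N}$ is what makes $\bool(z)\sim\Bern(\alpha_0)$ hold \emph{exactly} for uniform $z\in\gf{p}$, so that the stated variance is an exact identity rather than an approximation.
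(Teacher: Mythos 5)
Your proof is correct and follows essentially the same route as the paper's: reduce to the marginal law of the single decoded bit $\bool(\phi^i(j))$, which is exactly $\Bern(\alpha_1)$ when $j=x_i$ and $\Bern(\alpha_0)$ otherwise, then use linearity and independence across users. The only cosmetic difference is that you re-derive the pairwise-independence fact via the bijection $\phi\mapsto(\phi(x_i),\phi(j))$, whereas the paper simply cites the pairwise independence of $\bool(\phi(x_i))$ and $\bool(\phi(j))$ established before Algorithm~\ref{alg:pi-rappor}.
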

\else
\begin{lem}
\label{lem:pi-rappor-utility}
For any dataset $S\in [k]^n$, the estimate $\tilde c$ computed by PI-RAPPOR algorithm (Algs.~\ref{alg:pi-rappor},\ref{alg:agg-rappor}) satisfies:
\begin{itemize}
\item $\E[\tilde c] = c(S)$
\item For all $j\in [k]$,
\[\Var[\tilde c_j] =  c(S)_j \frac{1-\alpha_0 - \alpha_1}{\alpha_1-\alpha_0} + n \frac{\alpha_0(1-\alpha_0)}{(\alpha_1-\alpha_0)^2} \]
For the symmetric case $\alpha_0=1-\alpha_1$ this simplifies to $\Var[\tilde c_j] = n \frac{\alpha_0(1-\alpha_0)}{(1-2\alpha_0)^2}$.
\end{itemize}
In particular, the expected $\ell_2$ squared error is
\[\E \lb \|\tilde c - c(S)\|_2^2\rb = n \frac{1-\alpha_0 - \alpha_1}{\alpha_1-\alpha_0} + n k \frac{\alpha_0(1-\alpha_0)}{(\alpha_1-\alpha_0)^2} .\]
\end{lem}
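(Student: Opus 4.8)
The plan is to reduce the claim to a per-coordinate, per-user second-moment computation that mirrors the standard analysis of RAPPOR. Fix an index $j \in [k]$ and, for each user $i \in [n]$ with input $x_i$, let $Y_i \dfn \bool(\phi^i(j))$ be the $j$-th bit that the server reads off from that user's report. By the definition of Algorithm~\ref{alg:agg-rappor} we have $\tilde c_j = (\sum_{i\in[n]} Y_i - \alpha_0 n)/(\alpha_1-\alpha_0)$, so it suffices to determine the mean and variance of $\sum_i Y_i$. Since each user runs the randomizer with fresh, independent randomness, the $Y_i$ are mutually independent, so both moments decompose over users and the only remaining task is to identify the marginal law of a single $Y_i$.

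First I would pin down these marginals. If $x_i = j$, then Algorithm~\ref{alg:pi-rappor} draws $b \sim \Bern(\alpha_1)$ and then $\phi$ uniformly from $\Phi_{j,b}$, so by the definition of $\Phi_{j,b}$ we have $\bool(\phi(j)) = b$ and hence $Y_i \sim \Bern(\alpha_1)$. If $x_i = j' \neq j$, the user instead draws $\phi$ uniformly from $\Phi_{j',b}$; here I would invoke the pairwise-independence property of $\Phi$, namely that $(\phi(j'),\phi(j))$ is uniform over $\gf{p}^2$ for $\phi\sim\Phi$. Conditioning on the event $\{\bool(\phi(j')) = b\}$ (equivalently, $\phi(j')$ lies in a fixed subset of $\gf{p}$ of size $\alpha_0 p$) leaves $\phi(j)$ uniform over $\gf{p}$, so $\bool(\phi(j)) \sim \Bern(\alpha_0)$ regardless of $b$; thus $Y_i \sim \Bern(\alpha_0)$. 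These are exactly the per-user, per-coordinate distributions produced by ordinary RAPPOR, which is why the utility comes out identical.

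With the marginals in hand, the rest is bookkeeping. Summing, $\E[\sum_i Y_i] = c(S)_j \alpha_1 + (n - c(S)_j)\alpha_0 = \alpha_0 n + c(S)_j(\alpha_1-\alpha_0)$, which gives $\E[\tilde c_j] = c(S)_j$ after rescaling. By independence, $\Var[\sum_i Y_i] = c(S)_j\,\alpha_1(1-\alpha_1) + (n-c(S)_j)\,\alpha_0(1-\alpha_0)$, and dividing by $(\alpha_1-\alpha_0)^2$ together with the identity $\alpha_1(1-\alpha_1) - \alpha_0(1-\alpha_0) = (\alpha_1-\alpha_0)(1-\alpha_0-\alpha_1)$ yields the stated formula for $\Var[\tilde c_j]$; substituting $\alpha_1 = 1-\alpha_0$ kills the $c(S)_j$ term and recovers the symmetric special case. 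Finally, since $\tilde c$ is unbiased, $\E\|\tilde c - c(S)\|_2^2 = \sum_{j\in[k]} \Var[\tilde c_j]$, and $\sum_{j} c(S)_j = n$ collapses this to the claimed $\ell_2$ bound. The only step needing genuine care — as opposed to routine algebra — is the marginal computation for $x_i \neq j$, where one must correctly handle the conditioning on $\bool(\phi(j')) = b$ and appeal to pairwise independence of $\Phi$; everything else is a straightforward sum of $n$ independent Bernoulli contributions.
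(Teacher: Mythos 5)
Your proposal is correct and follows essentially the same route as the paper: decompose $\tilde c_j$ into independent per-user contributions, show the read-off bit is $\Bern(\alpha_1)$ when $x_i=j$ and, via pairwise independence of $\phi(x_i)$ and $\phi(j)$ under $\phi\sim\Phi$, is $\Bern(\alpha_0)$ when $x_i\neq j$, then sum moments and use unbiasedness with $\sum_j c(S)_j = n$ for the $\ell_2$ bound. The only difference is bookkeeping (you rescale after summing the raw bits rather than normalizing each user's term first), which is immaterial.
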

\begin{proof}
We first note that
\[\tilde c_j = \sum_{i\in [n]} \frac{\bool(\phi^i(j)) - \alpha_0}{\alpha_1 -\alpha_0}, \] where $\phi^i$ is the output of the PI-RAPPOR randomizer on input $x_i$. Thus to prove the claim about the expectation it is sufficient to prove that for every $i$,
\[ \E\lb \frac{\bool(\phi^i(j)) - \alpha_0}{\alpha_1 -\alpha_0} \rb = \ind{x_i = j} \]
and to prove the claim for variance it is sufficient to prove that
\[ \Var\lb \frac{\bool(\phi^i(j)) - \alpha_0}{\alpha_1 -\alpha_0} \rb = \ind{x_i = j} \frac{1-\alpha_0 - \alpha_1}{\alpha_1-\alpha_0} + \frac{\alpha_0(1-\alpha_0)}{(\alpha_1-\alpha_0)^2} .\]

If $x_i = j$ then both of these claims follow directly from the fact that, by definition of PI-RAPPOR randomizer, in this case the distribution of $\bool(\phi^i(j))$ is $\Bern(\alpha_1)$.

If, on the other hand $x_i \neq j$, we use pairwise independence of $\bool(\phi(x_i))$ and $\bool(\phi(j))$ for $\phi \sim \Phi$ to infer that conditioning the distribution $\bool(\phi(x_i)) = b$ (for any $b$) does not affect the distribution of $\bool(\phi(x_i))$. Thus, if $x_i \neq j$ then $\bool(\phi^i(j))$ is distributed as $\Bern(\alpha_0)$ and we can verify the desired property directly.

Finally,
\[\E \lb \|\tilde c - c(S)\|_2^2\rb = \E \lb  \sum_{j\in[k]} (\tilde c_j - c(S)_j)^2 \rb = \sum_{j\in[k]} \Var[\tilde c_j] =  n \frac{1-\alpha_0 - \alpha_1}{\alpha_1-\alpha_0} + n k \frac{\alpha_0(1-\alpha_0)}{(\alpha_1-\alpha_0)^2} \]
\end{proof}
\fi

\ifconf
Plugging $\alpha_0 =1/(e^\eps+1)$ and $\alpha_1 = 1/2$ for replacement privacy and  $\alpha_0 = 1-\alpha_1 =1/(e^\eps+1)$ for deletion privacy gives the following utility bounds for $\eps$-DP versions of PI-RAPPOR.
\else
For RAPPOR these bounds are stated in \citep{WangBLJ:17} who also demonstrate that optimizing $\alpha_0$ and $\alpha_1$ subject to the replacement privacy parameter being $\eps$ while ignoring the first term in the variance (since it is typically dominated by the second term) leads to the asymmetric version  $\alpha_0 =1/(e^\eps+1)$ and $\alpha_1 = 1/2$. For deletion privacy the optimal setting of  $\alpha_0 = 1-\alpha_1 =1/(e^\eps+1)$ follows from standard optimization of the binary randomized response.
Thus we obtain the following utility bounds for $\eps$-DP versions of PI-RAPPOR.
\fi
\begin{cor}
\label{cor:pi-rappor-utility-dp-deletion}
For any $\eps > 0$ and a setting of $p$ that ensures that $p/(e^\eps+1) \in \mathbb{N}$ we have that PI-RAPPOR for $\alpha_0 = 1-\alpha_1 =1/(e^\eps+1)$ satisfies deletion $\eps$-DP and for every dataset $S\in [k]^n$, the estimate $\tilde c$ computed by PI-RAPPOR satisfies:
$\E[\tilde c] = c(S)$, for all $j\in [k]$, $\Var[\tilde c_j] = n \frac{e^\eps}{(e^\eps -1)^2}$ and $\E \lb \|\tilde c - c(S)\|_2^2\rb = n k \frac{e^\eps}{(e^\eps -1)^2}$.
\end{cor}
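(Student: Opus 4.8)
The plan is to obtain this purely by specializing Lemma~\ref{lem:pi-rappor-privacy} and Lemma~\ref{lem:pi-rappor-utility} to the symmetric parameter choice $\alpha_0 = 1-\alpha_1 = 1/(e^\eps+1)$, so $\alpha_1 = e^\eps/(e^\eps+1)$. First I would check the preconditions of the PI-RAPPOR randomizer: it requires $0 < \alpha_0 < \alpha_1 < 1$, which holds since $1/(e^\eps+1) < 1/2 < e^\eps/(e^\eps+1)$ for $\eps>0$; and it requires $\alpha_0 p \in \mathbb{N}$, which is exactly the hypothesis $p/(e^\eps+1) \in \mathbb{N}$. For privacy, I would invoke Lemma~\ref{lem:pi-rappor-privacy}, which bounds the deletion privacy loss by $\max\{\alpha_1/\alpha_0,\ (1-\alpha_0)/(1-\alpha_1)\}$; under our choice both ratios equal $e^\eps$, so PI-RAPPOR is deletion $\eps$-DP.

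For the utility statement I would apply Lemma~\ref{lem:pi-rappor-utility} directly. Unbiasedness $\E[\tilde c] = c(S)$ is immediate from that lemma. Since $\alpha_0 = 1-\alpha_1$ we are in the symmetric case, so the data-dependent term $c(S)_j\,\frac{1-\alpha_0-\alpha_1}{\alpha_1-\alpha_0}$ vanishes and $\Var[\tilde c_j] = n\,\frac{\alpha_0(1-\alpha_0)}{(1-2\alpha_0)^2}$. Substituting $\alpha_0 = 1/(e^\eps+1)$ gives $\alpha_0(1-\alpha_0) = e^\eps/(e^\eps+1)^2$ and $1-2\alpha_0 = (e^\eps-1)/(e^\eps+1)$, so the factors of $(e^\eps+1)^2$ cancel and $\Var[\tilde c_j] = n\,e^\eps/(e^\eps-1)^2$. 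Summing the per-coordinate variance over $j\in[k]$ (which is legitimate because, in the symmetric case, this variance does not depend on $c(S)$) yields $\E[\,\|\tilde c - c(S)\|_2^2\,] = nk\,e^\eps/(e^\eps-1)^2$, again exactly as in Lemma~\ref{lem:pi-rappor-utility} with the first summand dropped.

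There is essentially no hard step here: the corollary is a substitution into two already-proved lemmas. The only points that merit a line of care are (i) observing that the integrality condition imposed on $p$ is precisely the condition $\alpha_0 p\in\mathbb{N}$ needed by the randomizer and by the reference-sampling construction, and (ii) noting that the symmetric choice $\alpha_0 = 1-\alpha_1$ annihilates the data-dependent variance term, which is what makes the stated $\ell_2$ bound hold uniformly over all datasets $S\in[k]^n$.
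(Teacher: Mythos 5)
Your proposal is correct and follows exactly the paper's route: the corollary is obtained by plugging the symmetric choice $\alpha_0 = 1-\alpha_1 = 1/(e^\eps+1)$ into Lemma~\ref{lem:pi-rappor-privacy} (both ratios equal $e^\eps$, giving deletion $\eps$-DP) and Lemma~\ref{lem:pi-rappor-utility} (the data-dependent term vanishes, and the algebra yields $n e^\eps/(e^\eps-1)^2$ per coordinate and $nk e^\eps/(e^\eps-1)^2$ in squared $\ell_2$). Your two points of care, the integrality condition $\alpha_0 p \in \mathbb{N}$ and the annihilation of the data-dependent term, are exactly what makes the substitution legitimate.
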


\begin{cor}
\label{cor:pi-rappor-utility-dp-replace}
For any $\eps > 0$ and a setting of $p$ that ensures that $p/(e^\eps+1) \in \mathbb{N}$ we have that PI-RAPPOR for $\alpha_0 =1/(e^\eps+1)$ and $\alpha_1 = 1/2$ is replacement $\eps$-DP and for every dataset $S\in [k]^n$, the estimate $\tilde c$ computed by PI-RAPPOR satisfies:
$\E[\tilde c] = c(S)$, for all $j\in [k]$, $\Var[\tilde c_j] = c(S)_j + n \frac{4 e^\eps}{(e^\eps -1)^2}$ and $\E \lb \|\tilde c - c(S)\|_2^2\rb = n + nk \frac{4 e^\eps}{(e^\eps -1)^2}$.
\end{cor}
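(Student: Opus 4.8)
The plan is to read both parts of the corollary off the two general statements already proved for PI-RAPPOR, namely Lemma~\ref{lem:pi-rappor-privacy} for privacy and Lemma~\ref{lem:pi-rappor-utility} for utility, by substituting the specific parameters $\alpha_0 = 1/(e^\eps+1)$ and $\alpha_1 = 1/2$. First I would verify that this choice is admissible: since $\eps > 0$ we have $e^\eps > 1$, hence $0 < \alpha_0 = 1/(e^\eps+1) < 1/2 = \alpha_1 < 1$, so the precondition $0 < \alpha_0 < \alpha_1 < 1$ of Algorithm~\ref{alg:pi-rappor} holds; moreover the hypothesis $p/(e^\eps+1) \in \mathbb{N}$ is exactly the requirement $\alpha_0 p \in \mathbb{N}$ needed to define $\bool$ and the family $\Phi_{j,b}$.

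For the privacy claim I would invoke Lemma~\ref{lem:pi-rappor-privacy}, which gives that PI-RAPPOR is replacement $\frac{\alpha_1(1-\alpha_0)}{\alpha_0(1-\alpha_1)}$-DP. With the chosen values, $1-\alpha_0 = e^\eps/(e^\eps+1)$ and $1-\alpha_1 = 1/2$, so $\frac{\alpha_1(1-\alpha_0)}{\alpha_0(1-\alpha_1)} = \frac{(1/2)\cdot e^\eps/(e^\eps+1)}{(1/(e^\eps+1))\cdot(1/2)} = e^\eps$, i.e.\ the randomizer is replacement $\eps$-DP.

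For the utility claims I would plug the same values into the formulas of Lemma~\ref{lem:pi-rappor-utility}. The statement $\E[\tilde c] = c(S)$ transfers verbatim. The key cancellation is that $\alpha_1 - \alpha_0 = \frac{1}{2} - \frac{1}{e^\eps+1} = \frac{e^\eps-1}{2(e^\eps+1)}$ and $1-\alpha_0-\alpha_1 = \frac{1}{2} - \frac{1}{e^\eps+1} = \frac{e^\eps-1}{2(e^\eps+1)}$ are equal, so the coefficient $\frac{1-\alpha_0-\alpha_1}{\alpha_1-\alpha_0}$ multiplying $c(S)_j$ is exactly $1$. Likewise $\alpha_0(1-\alpha_0) = \frac{e^\eps}{(e^\eps+1)^2}$ and $(\alpha_1-\alpha_0)^2 = \frac{(e^\eps-1)^2}{4(e^\eps+1)^2}$, so $\frac{\alpha_0(1-\alpha_0)}{(\alpha_1-\alpha_0)^2} = \frac{4e^\eps}{(e^\eps-1)^2}$. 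This yields $\Var[\tilde c_j] = c(S)_j + n\frac{4e^\eps}{(e^\eps-1)^2}$, and summing over $j \in [k]$ together with $\sum_{j} c(S)_j = n$ gives $\E[\|\tilde c - c(S)\|_2^2] = n + nk\frac{4e^\eps}{(e^\eps-1)^2}$.

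There is no real obstacle here: the entire argument is a short algebraic specialization of the preceding lemmas, and the only point that genuinely requires a sentence is the admissibility check (that $\alpha_0 < \alpha_1$, which forces $\eps>0$, and that $\alpha_0 p$ is an integer, which is assumed). For completeness I note that Corollary~\ref{cor:pi-rappor-utility-dp-deletion} is obtained in the same way, using instead the symmetric simplification $\Var[\tilde c_j] = n\frac{\alpha_0(1-\alpha_0)}{(1-2\alpha_0)^2}$ from Lemma~\ref{lem:pi-rappor-utility} with $\alpha_0 = 1-\alpha_1 = 1/(e^\eps+1)$, together with the deletion-privacy bound $\max\{\alpha_1/\alpha_0,(1-\alpha_0)/(1-\alpha_1)\} = e^\eps$.
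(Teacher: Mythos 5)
Your proposal is correct and matches the paper's route exactly: the corollary is obtained by specializing Lemma~\ref{lem:pi-rappor-privacy} (giving replacement parameter $\frac{\alpha_1(1-\alpha_0)}{\alpha_0(1-\alpha_1)} = e^\eps$) and Lemma~\ref{lem:pi-rappor-utility} (where $1-\alpha_0-\alpha_1 = \alpha_1-\alpha_0$ makes the $c(S)_j$ coefficient $1$ and $\frac{\alpha_0(1-\alpha_0)}{(\alpha_1-\alpha_0)^2} = \frac{4e^\eps}{(e^\eps-1)^2}$) at $\alpha_0 = 1/(e^\eps+1)$, $\alpha_1 = 1/2$, which is precisely the substitution the paper intends. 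Your algebra and the admissibility check are all correct.
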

\ifconf\else
Note that in the setting where $S$ is sampled i.i.d.~from some distribution over $[k]$ defined by frequencies $f_1,\ldots,f_k$, the term $c(S)_j$ in the variance is comparable to sampling variance. This is true since $c(S)_j \approx n f_j$ and for a sum of $n$ Bernoulli random variable with bias $f_j \ll 1$, the variance is $n f_j(1-f_j) \approx nf_j$. In most practical regimes of frequency estimation with LDP, sampling error is much lower than error introduced by the local randomizers. This justifies optimization of parameters based on the second term alone.\fi

\ifconf
Finally, we analyze the computational and communication cost of PI-RAPPOR. Clearly, the communication cost of PI-RAPPOR is $2 \lceil \log_2 p\rceil $ bits. In addition, it is not hard to see that all computations performed by PI-RAPPOR can be implemented in essentially the same time as single multiplication in $\gf{p}$. The analysis of the running time of decoding and aggregation is similarly straightforward since decoding every bit of message takes time that is dominated by the time of a single multiplication in $\gf{p}$.

We defer the details to SM. As these complexities depend on $\log p$ we also need to discuss the choice of $p$. It is not hard to show (see SM for details) that $p \geq c_1 \max\{k, e^\eps,1/\eps\}$ for a sufficiently large constant $c_1$ ensures that PI-RAPPOR will have essentially the same guarantees as RAPPOR. This means that the communication cost of PI-RAPPOR is $2\log_2(\max\{k, e^\eps,1/\eps\}) + O(1)$. Also we are typically interested in compression when $k \gg \max\{e^\eps, 1/\eps\}$ and in such case the communication cost is $2\log_2(k) + O(1)$.
\else
Finally, we analyze the computational and communication costs of PI-RAPPOR. We first bound these for the client.
\begin{lem}
\label{lem:rappor-communication-computation}
PI-RAPPOR randomizer (Alg.~\ref{alg:pi-rappor}) can be implemented in $\tilde O(\log p)$ time and uses $2 \lceil \log_2 p\rceil $ bits of communication.
\end{lem}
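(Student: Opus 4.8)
The communication bound is immediate, and I would dispatch it first: the randomizer outputs $\phi = (\phi_0,\phi_1) \in \gf{p}^2$, so it suffices to observe that each field element fits in $\lceil \log_2 p\rceil$ bits, for a total of $2\lceil \log_2 p\rceil$ bits.

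For the running-time bound the only step that needs thought is Line~2, sampling $\phi$ uniformly from $\Phi_{j,b}$. My plan is to show this can be done with one modular multiplication plus $O(\log p)$ random-bit generation, as follows: pick $\phi_1$ uniformly from $\gf{p}$; pick a value $v$ uniformly from $\{z \in \gf{p} \cond \bool(z) = b\}$, which by the choice of $p$ is an integer interval of known integer length ($\alpha_0 p$ when $b=1$, and $(1-\alpha_0)p$ when $b=0$), so this is just sampling a uniform integer from an interval; and set $\phi_0 = v - j\phi_1$ with arithmetic in $\gf{p}$, returning $\phi = (\phi_0,\phi_1)$. I would then verify correctness: for each fixed $\phi_1$ the map $\phi_0 \mapsto \phi(j) = \phi_0 + j\phi_1$ is a bijection of $\gf{p}$, so conditioned on $\phi_1$ the output $\phi_0$ is uniform among those with $\bool(\phi(j)) = b$, and the number of such $\phi_0$ does not depend on $\phi_1$; averaging over the uniform $\phi_1$ yields the uniform distribution on $\Phi_{j,b}$, matching Algorithm~\ref{alg:pi-rappor}.

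Finally I would tally the costs: sampling a uniform element of $\gf{p}$, a uniform integer in an interval of length at most $p$, and the bit $b \sim \Bern(\alpha_1)$ each take $O(\log p)$ random bits and, in expectation, a constant number of rejections, hence $\tilde O(\log p)$ time; the multiplication $j\phi_1$ and the subtraction modulo $p$ take $\tilde O(\log p)$ time; the remaining operations are bookkeeping. Summing gives the claimed $\tilde O(\log p)$ running time, dominated by the single field multiplication. The only mildly delicate point — and the one I would be most careful about — is justifying uniformity of this sampling scheme for $\Phi_{j,b}$ (and, pedantically, handling the case where $\alpha_1 p$ is not an integer, where $\Bern(\alpha_1)$ must be simulated to sufficient precision using $O(\log p)$ extra bits); neither introduces real difficulty.
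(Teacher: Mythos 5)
Your proposal is correct and follows essentially the same route as the paper's proof: the communication bound from the two-coefficient representation, and the running-time bound by sampling $\phi_1$ uniformly and then choosing $\phi_0$ from the set determined by the single multiplication $j\phi_1$ (your ``sample $v$ with $\bool(v)=b$ and set $\phi_0 = v - j\phi_1$'' is just the paper's shifted interval $\{-j\phi_1,\ldots,-j\phi_1+\alpha_0 p-1\}$ described differently). Your explicit bijection argument for uniformity over $\Phi_{j,b}$ is a welcome elaboration of a step the paper leaves implicit, but it is not a different approach.
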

\ifconf\else \begin{proof}
  Any function $\phi \in \Phi$ is represented by two elements from $\gf{p}$ which implies the claimed bound on the communication cost. The running time of PI-RAPPOR is dominated by the time to pick a random and uniform element in $\Phi_{j,b}$. This can be done by picking $\phi_1 \in \gf{p}$ randomly and uniformly. We then need to pick $\phi_0$ randomly and uniformly from the set
  $\{\phi_0 \cond \bool(\phi(j)) = b \}$. Given the result of multiplication $j \phi_1$ this can be done in $O(\log p)$ time. For example for $b=1$ this set is equal to $\{ - j \phi_1, - j \phi_1 +1, \ldots, - j \phi_1 + \alpha_0 p - 1\}$ where all arithmetic operations are in $\gf{p}$. The set consists of at most two contiguous ranges of integers and thus a random and uniform element can be chosen in $O(\log p)$ time.   Multiplication in $\gf{p}$ can be done in $O(\log(p)\cdot (\log\log p)^2)$ (\eg \cite{menezes2018handbook}) but in most practical settings standard Montgomery modular multiplication that takes $O(\log^2(p))$ time would be sufficiently fast.
\end{proof}\fi

The analysis of the running time of decoding and aggregation is similarly straightforward since decoding every bit of message takes time that is dominated by the time of a single multiplication in $\gf{p}$.
\begin{lem}
\label{lem:rappor-decode-computation}
For every $j\in k$, the server-side of PI-RAPPOR (Alg.~\ref{alg:agg-rappor}) computes $\tilde c_j$ in time $\tilde O(n\log p)$. In particular, the entire histogram is computed in time $\tilde O(kn\log p)$.
\end{lem}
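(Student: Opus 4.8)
The plan is a straightforward operation count for Algorithm~\ref{alg:agg-rappor}, carried out first for a single coordinate $j$ and then multiplied over $j\in[k]$.

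First I would fix $j\in[k]$ (so $j$ is a nonzero element of $\gf{p}$) and analyze the loop. Each report $\phi^i$ is stored as its coefficient pair $(\phi^i_0,\phi^i_1)\in\gf{p}^2$, so decoding bit $j$ of $\phi^i$ amounts to (i) evaluating $\phi^i(j)=\phi^i_0+j\cdot\phi^i_1$ in $\gf{p}$, which is one multiplication and one addition, and (ii) computing $\bool$ of the result, i.e.\ comparing the field element --- viewed as an integer in $\{0,\dots,p-1\}$ --- against the integer $\alpha_0 p$, which is well-defined since $\alpha_0 p\in\mathbb{N}$; this is a single comparison of $O(\log p)$-bit integers. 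Invoking the same bound on $\gf{p}$-arithmetic already used in the proof of Lemma~\ref{lem:rappor-communication-computation} --- a multiplication in $\gf{p}$ costs $O(\log(p)(\log\log p)^2)=\tilde O(\log p)$, or $O(\log^2 p)$ with Montgomery multiplication --- each of the $n$ loop iterations costs $\tilde O(\log p)$. I would also note that the accumulator stays in $\{0,\dots,n\}$, so the $n$ additions into it cost $O(\log n)$ each, contributing $\tilde O(n)$ in total.

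Next I would handle the final line $\tilde c_j=(\mathrm{sum}-\alpha_0 n)/(\alpha_1-\alpha_0)$: keeping $\alpha_0,\alpha_1$ as exact rationals with $O(\log p)$-bit numerator and denominator, this is $O(1)$ arithmetic operations on numbers of bit-length $O(\log n+\log p)$, hence $\tilde O(\log p+\log n)$. Adding the three contributions gives $n\cdot\tilde O(\log p)+\tilde O(n)+\tilde O(\log p+\log n)=\tilde O(n\log p)$ to produce a single $\tilde c_j$, where $\tilde O$ suppresses factors polylogarithmic in $n$ and $p$. For the full histogram the server simply runs this independently for each of the $k$ coordinates, for a total of $k\cdot\tilde O(n\log p)=\tilde O(kn\log p)$ (one could do slightly better by evaluating $\phi^i(1),\dots,\phi^i(k)$ incrementally --- one multiplication plus $k$ additions per report --- but this is not needed for the stated bound).

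I do not expect a real obstacle here. The only points that need care are (a) pinning down the model of computation so that a $\gf{p}$-multiplication is charged $\tilde O(\log p)$ rather than $O(1)$, consistently with Lemma~\ref{lem:rappor-communication-computation}, and (b) checking that the $O(\log n)$-per-step cost of accumulating into the running sum and the cost of the final affine rescaling are dominated by the $n$ field multiplications and are therefore absorbed into $\tilde O(n\log p)$.
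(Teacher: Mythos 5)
Your proposal is correct and follows essentially the same argument as the paper, which simply notes that decoding each bit of a message is dominated by one multiplication in $\gf{p}$ (costed as in Lemma~\ref{lem:rappor-communication-computation}), giving $\tilde O(n\log p)$ per coordinate and $\tilde O(kn\log p)$ for the full histogram. Your additional bookkeeping of the accumulator additions and the final rescaling is fine and only confirms that these costs are absorbed into the stated bound.
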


Note that the construction of the entire histogram on the server is relatively expensive. In Section~\ref{sec:gen-pi-rappor} we show an alternative algorithm that runs faster when $k \gg n$. For comparison we note that aggregation in the compression schemes in \citep{Acharya:2019} and \citep{chen2020breaking} can be done in $\tilde O(n+k)$. However these schemes require $\Omega(k)$ computation on each client and thus the entire system also performs $\Omega(nk)$ computation. They also do not give a frequency oracle since the decoding time of even a single message is linear in $k$.

Finally we need to discuss how to pick $p$. In addition to the condition that is $p$ a prime larger than $k$, our algorithm requires that $\alpha_0 p$ be an integer. We observe that while, in general, we cannot always guarantee that $\alpha_0 = p/(e^\eps +1)$, by picking $p$ that is a sufficiently large multiple of $\max\{e^\eps,1/\eps\}$ we get an $\eps'$-DP PI-RAPPOR algorithm for $\eps'$ that is slightly smaller than $\eps$ (which also implies that its utility is slightly worse). We make this formal below.
\begin{lem}
\label{lem:pi-rappor-approx}
There exists a constant $c_0$ such that for any $\eps > 0$, $k\in \mathbb{N}$, $\Delta > 0$ and any prime $p \geq c_0 \max\{e^\eps, 1/\eps\}/\Delta$ we have that symmetric PI-RAPPOR with parameter $\alpha_0 = \lceil p/(e^\eps + 1) \rceil/p$ satisfies deletion $\eps$-DP and outputs an estimate that satisfies:
for all $j\in [k]$, $\Var[\tilde c_j] \leq n \frac{(1+\Delta) e^\eps }{(e^\eps -1)^2}$. Further, PI-RAPPOR with $\alpha_0 = \lceil p/(e^\eps + 1) \rceil/p$ and $\alpha_1 =1/2$ satisfies replacement $\eps$-DP and outputs an estimate that satisfies:
for all $j\in [k]$, $\Var[\tilde c_j] = c(S)_j + n \frac{4 (1+\Delta) e^\eps}{(e^\eps -1)^2}$.
\end{lem}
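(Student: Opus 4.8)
The plan is to split the statement into its privacy and its utility halves, in each case using only that the chosen parameter $\alpha_0 = \lceil p/(e^\eps+1)\rceil/p$ satisfies $\alpha_0^* \le \alpha_0 < \alpha_0^* + 1/p$, where $\alpha_0^* \dfn 1/(e^\eps+1)$ is the ``ideal'' value; write $\eta \dfn \alpha_0 - \alpha_0^* \in [0,1/p)$, and note that $\alpha_0 p = \lceil p/(e^\eps+1)\rceil \in \mathbb{N}$ by construction.

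For the privacy claims I would simply invoke Lemma~\ref{lem:pi-rappor-privacy}. In the symmetric case $\alpha_1 = 1-\alpha_0$, both arguments of the maximum in the deletion-privacy factor, $\alpha_1/\alpha_0$ and $(1-\alpha_0)/(1-\alpha_1)$, equal $(1-\alpha_0)/\alpha_0 = 1/\alpha_0 - 1$; likewise with $\alpha_1 = 1/2$ the replacement-privacy factor $\frac{\alpha_1(1-\alpha_0)}{\alpha_0(1-\alpha_1)}$ equals $1/\alpha_0 - 1$. Since $\alpha_0 \ge \alpha_0^* = 1/(e^\eps+1)$ we have $1/\alpha_0 \le e^\eps+1$, hence $1/\alpha_0 - 1 \le e^\eps$, so both randomizers are $\eps$-DP in the respective sense. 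I would also check that for $c_0$ large enough $1/p$ is small relative to $\frac12 - \alpha_0^* = \frac12\cdot\frac{e^\eps-1}{e^\eps+1}$, so that $\alpha_0 < 1/2 < \alpha_1$ and the precondition $\alpha_0 < \alpha_1$ of Algorithm~\ref{alg:pi-rappor} holds.

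For the utility claims I would first use Lemma~\ref{lem:pi-rappor-utility} to reduce everything to the single function $f(\alpha) \dfn \frac{\alpha(1-\alpha)}{(1-2\alpha)^2}$: in the symmetric case $\Var[\tilde c_j] = n f(\alpha_0)$, and for $\alpha_1 = 1/2$ the $c(S)_j$-coefficient $\frac{1-\alpha_0-\alpha_1}{\alpha_1-\alpha_0}$ collapses to $1$ while $\frac{\alpha_0(1-\alpha_0)}{(1/2-\alpha_0)^2} = 4 f(\alpha_0)$, so $\Var[\tilde c_j] = c(S)_j + 4n f(\alpha_0)$. A short computation gives $f(\alpha_0^*) = \frac{e^\eps}{(e^\eps-1)^2}$, so both claimed variance bounds follow once I establish $f(\alpha_0) \le (1+\Delta) f(\alpha_0^*)$.

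That inequality is the only real work. The plan is a first-order bound: the quotient rule gives the clean identity $f'(\alpha) = (1-2\alpha)^{-3}$, which is positive and increasing on $(0,1/2)$, so by the mean value theorem $f(\alpha_0) - f(\alpha_0^*) \le \eta\, f'(\alpha_0) = \eta/(1-2\alpha_0)^3$. Choosing $c_0$ large enough that $\eta < 1/p \le \frac14(1-2\alpha_0^*)$ forces $1-2\alpha_0 \ge \frac12\cdot\frac{e^\eps-1}{e^\eps+1}$, hence $f(\alpha_0)-f(\alpha_0^*) \le \frac{8\eta\,(e^\eps+1)^3}{(e^\eps-1)^3}$; dividing by $f(\alpha_0^*)$ leaves $\frac{f(\alpha_0)-f(\alpha_0^*)}{f(\alpha_0^*)} \le \frac{8\eta\,(e^\eps+1)^3}{e^\eps(e^\eps-1)}$. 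The last ingredient — and the source of the factor $\max\{e^\eps,1/\eps\}$ in the hypothesis — is the elementary two-case bound $\frac{(e^\eps+1)^3}{e^\eps(e^\eps-1)} \le C\max\{e^\eps,1/\eps\}$ for a universal $C$: for $\eps \ge 1$ use $e^\eps+1\le 2e^\eps$ and $e^\eps-1\ge e^\eps/2$, and for $\eps<1$ use $e^\eps+1\le e+1$, $e^\eps\ge 1$, and $e^\eps-1\ge\eps$. Plugging in $\eta < 1/p \le \Delta/(c_0\max\{e^\eps,1/\eps\})$ then bounds the relative error by $8C\Delta/c_0$, which is at most $\Delta$ once $c_0 \ge 8C$ (with $c_0$ also large enough to validate the auxiliary estimates on $\eta$ used above). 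The main obstacle is thus not conceptual but the bookkeeping of choosing a single constant $c_0$ that simultaneously handles $\alpha_0 < 1/2$, the perturbation estimates, and the final comparison; I would also remark that the first-order estimate is sharp only in the regime where $\Delta$ is bounded above by a constant (the natural regime of interest, e.g.\ $\Delta \le 1$), which is why the statement keeps $\Delta$ general but pays a $1/\Delta$ factor in the required size of $p$.
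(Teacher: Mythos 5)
Your proposal is correct and follows essentially the same route as the paper: privacy comes from $\alpha_0 \ge 1/(e^\eps+1)$ making the privacy factor at most $e^\eps$, and the variance bound comes from a perturbation analysis of $\alpha\mapsto\frac{\alpha(1-\alpha)}{(1-2\alpha)^2}$ around $\alpha_0^*$ using $|\alpha_0-\alpha_0^*|<1/p$, with the $\eps\le 1$ versus $\eps>1$ dichotomy producing the $\max\{e^\eps,1/\eps\}$ scale exactly as in the paper's proof. Your mean-value-theorem step with the identity $f'(\alpha)=(1-2\alpha)^{-3}$ is a slightly cleaner packaging of the paper's direct numerator/denominator manipulation, and your closing caveat about large $\Delta$ (and the $\alpha_0<1/2$ check) is an implicit assumption the paper's own proof shares.
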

\begin{proof}
We first note that by our definition, $\alpha_0 p = \lceil p/(e^\eps + 1) \rceil$ and therefore is an integer (as required by PI-RAPPOR).
We denote by $\eps' = \ln(1-p/\alpha_0)$  (so that $\alpha_0 = 1/(e^{\eps'} + 1)$ and note that
$\eps' \leq \eps$. Thus the symmetric PI-RAPPOR satisfies $\eps$-DP. We now note that $|1/(e^{\eps'} + 1) - 1/(e^{\eps} + 1)| \leq 1/p$. This implies that the bound on variance of PI-RAPPOR satisfies:
\ifconf
\begin{align*}
    \Var[\tilde c_j] &= n \frac{\alpha_0(1-\alpha_0)}{(1-2\alpha_0)^2}  = n \frac{\fr{e^{\eps'}+1}(1-\fr{e^{\eps'}+1})}{(1-2\fr{e^{\eps'}+1} )^2} \\
    &\leq  n \frac{ ( \fr{e^{\eps}+1} + \fr{p}) (1-\fr{e^{\eps}+1})}{(1-2\fr{e^{\eps'}+1} -\frac{2}{p} )^2} .
\end{align*}
\else
\[\Var[\tilde c_j] = n \frac{\alpha_0(1-\alpha_0)}{(1-2\alpha_0)^2}  = n \frac{\fr{e^{\eps'}+1}(1-\fr{e^{\eps'}+1})}{(1-2\fr{e^{\eps'}+1} )^2} \leq  n \frac{ ( \fr{e^{\eps}+1} + \fr{p}) (1-\fr{e^{\eps}+1})}{(1-2\fr{e^{\eps'}+1} -\frac{2}{p} )^2} . \]
\fi
If $\eps \leq 1$ then $\fr{e^{\eps'}+1}\geq \fr{e+1}$ and $1-2\fr{e^{\eps'}+1} \geq \frac{\eps}{e+1}$. Thus the addition/subtraction of $1/p$ to these quantities for $p \geq c_0/(\eps\Delta)$ increases the bound by at most a multiplicative factor $(1+\Delta)$ (for a sufficiently large constant $c_0$).

Otherwise (if $\eps > 1$), then $\fr{e^{\eps'}+1}\geq \fr{e^\eps}$ and $1-2\fr{e^{\eps'}+1} \geq \frac{e-1}{e+1}$. Thus the addition/subtraction of $1/p$ to these quantities for $p \geq c_0 e^\eps/\Delta$ increases the bound by at most a multiplicative factor $(1+\Delta)$ (for a sufficiently large constant $c_0$).

The analysis for replacement DP is analogous.
\end{proof}
In practice, setting $\Delta = 1/100$ will make the loss of accuracy insignificant. Thus we can conclude that PI-RAPPOR with $p \geq c_1 \max\{k, e^\eps,1/\eps\}$ for a sufficiently large constant $c_1$ achieves essentially the same guarantees as RAPPOR. This means that the communication cost of PI-RAPPOR is $2\log_2(\max\{k, e^\eps,1/\eps\}) + O(1)$. Also we are typically interested in compression when $k \gg \max\{e^\eps, 1/\eps\}$ and in such case the communication cost is $2\log_2(k) + O(1)$.
\fi

\section{Mean Estimation}
\label{sec:mean}
In this section, we consider the problem of mean estimation in $\ell_2$ norm, for $\ell_2$-norm bounded vectors. Formally, each client has a vector $\vx_i \in \mathbb{B}^d$, where $\mathbb{B}^d \dfn \{\vx \in \R^d \cond \|\vx\|_2 \leq 1 \}$. Our goal is to compute the mean of these vectors privately, and we measure our error in the $\ell_2$ norm. In the literature this problem is often studied in the statistical setting where $\vx_i$'s are sampled i.i.d.~from some distribution supported on $\mathbb{B}^d$ and the goal is to estimate the mean of this distribution. In this setting, the expected squared $\ell_2$ distance between the mean of the distribution and the mean of the samples is at most $1/n$ and is dominated by the privacy error in the regime that we are interested in ($\eps < d$).

In the absence of communication constraints and $\eps < d$, the optimal $\eps$-LDP protocols for this problem achieve an expected squared $\ell_2$ error of $\Theta(\frac{d}{n \min(\eps, \eps^2)})$ \citep{duchi2018minimax,DuchiR19}. Here and in the rest of the section we focus on the replacement DP both for consistency with existing work and since for this problem the dependence on $\eps$ is linear (when $1 <\eps < d$) and thus the difference between replacement and deletion is less important.

If one is willing to relax to $(\eps,\delta)$ or concentrated differential privacy \citep{DworkR16,BunS16,Mironov17} guarantees, then standard Gaussian noise addition achieves the asymptotically optimal bound. When $\eps \leq 1$, the randomizer of~\citet{duchi2018minimax} (which we refer to as $\PrivHemi$) also achieves the optimal $O(\frac{d}{n\eps^2})$ bound.
Recent work of~\citet{ErlingssonFMRSTT2020} gives a low-communication version of $\PrivHemi$. Specifically, in the context of federated optimization they show that $\PrivHemi$ is equivalent to sending a single bit and a randomly and uniformly generated unit vector. This vector can be sent using a seed to a PRG. \citet{bhowmick2019protection} describe the $\PrivUnit$ algorithm that achieves the optimal bound also when $\eps > 1$. Unfortunately, $\PrivUnit$ has high communication cost of $\Omega(d)$.

By applying Theorem~\ref{thm:main} to $\PrivUnit$ or Gaussian noise addition, we can immediately obtain a low communication algorithm with negligible effect on privacy and utility. This gives us an algorithm that communicates a single seed, and has the asymptotically optimal privacy utility trade-off. Implementing $\PrivUnit$ requires sampling uniformly from a spherical cap $\{\vv \cond \|\vv\|_2 =1, \la \tilde \vx, \vv \ra \geq \alpha \}$ for $\alpha\approx \sqrt{\eps/d}$. Using standard techniques this can be done with high accuracy using $\tilde O (d)$ random bits and $\tilde O (d)$ time. Further, for every $\vx$ the resulting densities can be computed easily given the surface area of the cap. Overall rejection sampling can be computed in $\tilde O (d)$ time. Thus this approach to compression requires time $\tilde O(e^\eps d)$. This implies that given an exponentially strong PRG $G$, we can compress $\PrivUnit$ to $O(\log (dn) + \eps)$ bits with negligible effects on utility and privacy.  In most settings of interest, the computational cost $\tilde O(e^\eps d)$ is not much larger than the typical cost of computing the vector itself, e.g. by back propagation in the case of gradients of neural networks (e.g.~$\eps=8$ requires $\approx 3000$ trials in expectation).

We can further reduce this computational overhead. We show a simple reduction from the general case of $\eps > 1$ to a protocol for $\eps' = \eps/m$ that preserves asymptotic optimality, where $m \leq 2 \eps$ is an integer. The algorithm simply runs $m$ copies of the $\eps'$-DP randomizer and sends all the reports. The estimates produced from these reports are averaged by the server. This reduces the expected number of rejection sampling trials to $m e^{\eps/m}$.
\ifconf We describe the formal details of this reduction in SM.\else
Below we describe the reduction and state the resulting guarantees.

\begin{lem}
\label{lem:repetition}
Assume that for some $\eps > 0$ there exists a local $\eps$-DP randomizer $\R_\eps\colon \mathbb{B}^d \to Y$ and a decoding procedure $\decode \colon Y \to \R^d$ that for all $\vx \in \mathbb{B}^d$, satisfies:
$\E[\decode(\R_\eps(\vx))] = \vx$ and $\E[\|\decode(\R_\eps(\vx)) - \vx\|_2^2] \leq \alpha_\eps$. Further assume that $\R_\eps$ uses $\ell$ bits of communication and runs in time $T$. Then for every integer $m\geq 2$ there is a local $(m\eps)$-DP randomizer $\R_{\eps}^m\colon \mathbb{B}^d \to Y^m$ and decoding procedure $\decode^m\colon Y^m \to \R^d$ that uses $m \ell$ bits of communication, runs in time $m T$ and for every $\vx \in \mathbb{B}^d$ satisfies:
$\E[\decode^m(\R_\eps'(\vx))] = \vx$ and $\E[\|\decode^m(\R_\eps^m(\vx)) - \vx\|_2^2] \leq \frac{\alpha_\eps}{m}$.

In particular, if for every $\eps \in (1/2,1]$, $\alpha_\eps \leq \frac{c d}{\eps^2}$ for some constant $c$, then for every $\eps > 0$ there is a local $\eps$-DP randomizer $\R_\eps'$ and decoding procedure $\decode'$ that uses $\lceil \eps \rceil \ell$ bits of communication, runs in time $\lceil \eps \rceil T$ and for every $\vx \in \mathbb{B}^d$ satisfies:
$\E[\decode'(\R_\eps'(\vx))] = \vx$ and $\E[\|\decode'(\R_\eps'(\vx)) - \vx\|_2^2] \leq \frac{2 c d}{\min\{\eps,\eps^2\}}$.
\end{lem}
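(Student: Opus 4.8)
The plan is to first establish the general repetition statement and then obtain the ``in particular'' claim by a short case analysis on $\eps$.

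For the repetition claim I would define $\R_\eps^m\colon \mathbb{B}^d\to Y^m$ to run $m$ independent copies of $\R_\eps$ on the \emph{same} input $\vx$ and output the tuple $(Y_1,\dots,Y_m)$, and define $\decode^m(y_1,\dots,y_m)\dfn \frac1m\sum_{i=1}^m \decode(y_i)$. The communication ($m\ell$ bits) and running time ($mT$) bounds are immediate. Privacy is basic composition: for any $\vx,\vx'$ the ratio of the densities of $(y_1,\dots,y_m)$ under $\R_\eps^m(\vx)$ and $\R_\eps^m(\vx')$ factors into $m$ ratios each in $[e^{-\eps},e^\eps]$, so the product lies in $[e^{-m\eps},e^{m\eps}]$ and $\R_\eps^m$ is $(m\eps)$-DP. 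Unbiasedness of $\decode^m$ is linearity of expectation. For the variance, write $Z_i\dfn \decode(Y_i)-\vx$; the $Z_i$ are i.i.d., have mean $\mathbf 0$ by unbiasedness of $\decode$, and satisfy $\E\|Z_i\|_2^2\le \alpha_\eps$. Since $\decode^m(\R_\eps^m(\vx))-\vx = \frac1m\sum_i Z_i$, expanding the square and using $\E Z_i=\mathbf 0$ to kill the cross terms gives $\E\|\frac1m\sum_i Z_i\|_2^2 = \frac1{m^2}\sum_i \E\|Z_i\|_2^2 \le \alpha_\eps/m$.

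For the ``in particular'' part when $\eps\ge 1/2$: for $\eps\in(1/2,1]$ simply take $\R'_\eps\dfn\R_\eps$ and $\decode'\dfn\decode$, so $\alpha_\eps\le cd/\eps^2 = 2cd/\min\{\eps,\eps^2\}$ with room to spare and $\lceil\eps\rceil=1$ matches the bounds. For $\eps>1$ set $m\dfn\lceil\eps\rceil$ and $\eps'\dfn\eps/m$; then $\eps \le m < 2\eps$ gives $\eps'\in(1/2,1]$, so the hypothesis applies to $\R_{\eps'}$, and the repetition construction yields an $(m\eps')=\eps$-DP randomizer with squared error at most $\alpha_{\eps'}/m \le \frac{cd}{m\,\eps'^2} = \frac{cdm}{\eps^2} < \frac{2cd}{\eps} = \frac{2cd}{\min\{\eps,\eps^2\}}$, with communication $m\ell=\lceil\eps\rceil\ell$ and running time $mT=\lceil\eps\rceil T$.

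The remaining case $\eps\le 1/2$ is the only one not covered by repetition (which can only \emph{raise} the privacy parameter), and it is where the real work lies. Here I would instead \emph{weaken} privacy by mixing with a data-independent reference: fix any $\eps'\in(1/2,1]$ (taken close to $1/2$), let $\rho\dfn\R_{\eps'}(\mathbf 0)$ (note $\mathbf 0\in\mathbb{B}^d$, so $\E[\decode(\rho)]=\mathbf 0$), set the mixing weight $q\dfn\frac{e^{\eps}-1}{e^{\eps'}-1}\in(0,1)$, and define $\R'_\eps(\vx)$ to output a draw from $\R_{\eps'}(\vx)$ with probability $q$ and a draw from $\rho$ with probability $1-q$, decoded by $\decode'(y)\dfn\frac1q\decode(y)$. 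A one-line optimization shows that the worst-case output likelihood ratio between $\R'_\eps(\vx)$ and $\R'_\eps(\vx')$ is $1+q(e^{\eps'}-1)=e^{\eps}$, so $\R'_\eps$ is $\eps$-DP; unbiasedness follows from $\E[\decode(\R'_\eps(\vx))]=q\vx+(1-q)\mathbf 0=q\vx$ and the rescaling by $1/q$; and a variance computation that again uses unbiasedness of $\decode$ to cancel cross terms gives $\E\|\decode'(\R'_\eps(\vx))-\vx\|_2^2 \le \frac{\alpha_{\eps'}+q(1-q)}{q^2} \le \frac{cd}{\eps'^2 q^2} + \frac1q$. The main obstacle is the constant bookkeeping: mixing loses a constant factor, so one takes $\eps'\to 1/2^+$ and uses $e^{\eps}-1\ge\eps$ (hence $q=\Theta(\eps)$ and $1/q=O(1/\eps)$) to push this below $\frac{2cd}{\eps^2}=\frac{2cd}{\min\{\eps,\eps^2\}}$; the lower-order $1/q$ term is absorbed since $c$ is bounded below by an absolute constant in this problem (by the $\Omega(d/(n\eps^2))$ lower bound), and in full generality one obtains $O(cd/\min\{\eps,\eps^2\})$. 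Communication is $\ell=\lceil\eps\rceil\ell$ and the running time is a single call to $\R_{\eps'}$ plus $O(1)$, i.e.\ $\lceil\eps\rceil T$ up to lower-order terms. (Alternatively, for the concrete randomizers of interest the family $\{\R_{\eps}\}$ is available for all $\eps\in(0,1]$, in which case the $\eps\le 1/2$ case is handled directly as in the $(1/2,1]$ case.)
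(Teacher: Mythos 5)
Your proof of the repetition claim and of the $\eps>1$ case of the ``in particular'' part coincides with the paper's: run $m$ independent copies of $\R_\eps$ on the same input, average the decodings, use basic composition for privacy and the vanishing cross terms for the variance, and for $\eps>1$ take $m=\lceil\eps\rceil$, $\eps'=\eps/m\in(1/2,1]$ to get $\alpha_{\eps'}/m\le cdm/\eps^2<2cd/\eps$. The only divergence is the case $\eps\le 1/2$, which repetition cannot reach: the paper simply sets $\R'_\eps=\R_\eps$ and $\decode'=\decode$ for all $\eps\le 1$, implicitly reading the variance hypothesis as holding on all of $(0,1]$ (true for the randomizers of interest such as $\PrivHS$ and $\PrivUnit$), which is exactly your closing parenthetical. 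Your additional mixing construction for $\eps\le 1/2$ --- output $\R_{\eps'}(\vx)$ with probability $q=(e^\eps-1)/(e^{\eps'}-1)$ and $\rho=\R_{\eps'}(\mathbf 0)$ otherwise, and rescale the decoder by $1/q$ --- is a genuinely different route that derives the small-$\eps$ case from the stated $(1/2,1]$ hypothesis alone; its privacy claim is correct (the optimization yielding worst-case ratio $1+q(e^{\eps'}-1)=e^\eps$ must use both the replacement constraint $p_{\vx}\le e^{\eps'}p_{\vx'}$ and the bounds relative to $\rho$, which are available precisely because $\mathbf 0\in\mathbb{B}^d$ makes $\rho$ one of the randomizer's own output distributions), and the variance bound $(\alpha_{\eps'}+q(1-q))/q^2$ checks out. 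The price, as you acknowledge, is the constant: in full generality this gives $O(cd/\min\{\eps,\eps^2\})$ rather than the stated $2cd/\min\{\eps,\eps^2\}$ unless $cd$ is bounded below by an absolute constant, so to match the lemma literally one falls back on the paper's (and your alternative) reading that the hypothesis extends to all of $(0,1]$; with that reading your argument and the paper's agree everywhere.
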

\begin{proof}
The randomizer $\R_\eps^m(\vx)$ runs $\R_{\eps}(\vx)$ $m$ times independently to obtain $y_1,\ldots, y_m$ and outputs these values. To decode we define $\decode^m(y_1,\ldots,  y_m) \dfn \fr{m} (\decode(y_1)+\cdots+\decode(y_m))$.
By (simple) composition of differential privacy, $\R^m_\eps$ is $(\eps m)$-DP. The utility claim follows directly from linearity of expectation and independence of the estimates:
\[\E[\|\decode^m(\R_\eps^m(\vx)) - \vx\|_2^2]\\ = \fr{m} \cdot \E[\|\decode(\R_\eps(\vx)) - \vx\|_2^2] \leq \frac{\alpha_\eps}{m} .\]

For the second part of the claim we define $\R'_\eps$ as follows. For $\eps \leq 1$, $\R_\eps'(\vx)$ just outputs $\R_\eps(\vx)$ and in this case $\decode'$ is the same as $\decode$. For $\eps > 1$, we let $m = \lceil \eps \rceil$ and apply the lemma to $\R_{\eps'}$ for $\eps' = \eps / \lceil \eps \rceil$. Note that  $\eps' \in (1/2,1)$ and therefore the resulting bound on variance is
 \[ \E[\|\decode'(\R_\eps'(\vx)) - \vx\|_2^2]  \leq \fr{\lceil \eps \rceil} \frac{c d}{\eps'^2} = \frac{c d}{\eps \eps'} \leq \frac{2 c d}{\eps}.\]
\end{proof}

For example, by using the reduction in Lemma~\ref{lem:repetition}, we can reduce the computational cost to $\tilde O(\lceil \eps \rceil d)$ while increasing the communication to $O(\lceil \eps \rceil \log d)$. The server side reconstruction now requires sampling and averaging $n \lceil \eps \rceil$ $d$-dimensional vectors. Thus the server running time is $\tilde O(n d \eps)$.

\fi

This reduction allows one to achieve different trade-offs between computation, communication, and closeness to the accuracy of the original randomizer. As an additional benefit, we no longer need an LDP randomizer that is optimal in the $\eps> 1$ regime. We can simply use $m = \lceil\eps \rceil$ and get an asymptotically optimal algorithm for $\eps > 1$ from any algorithm that is asymptotically optimal for $\eps' \in [1,1/2]$. In particular, instead of $\PrivUnit$ we can use the low communication version of $\PrivHemi$ from \citep{ErlingssonFMRSTT2020}. This bypasses the need for our compression algorithm and makes the privacy guarantees unconditional.

\begin{rem}
We remark that the compression of $\PrivUnit$ can be easily made unconditional.  The reference distribution $\rho$ of $\PrivUnit$ is uniform over a sphere of some radius $B(d,\eps)=O(\sqrt{d/\min(\eps,\eps^2)})$. It is not hard to see that both the privacy and utility guarantees of $\PrivUnit$ are preserved by any PRG $G$ which preserves $\pr_{\vv \sim \rho}[\langle \vx, \vv \rangle \geq \theta]$ for every vector $\vx$ sufficiently well (up to some $1/\mbox{poly}(d,n,e^\eps/\eps)$ accuracy). Note that these tests are halfspaces and have VC dimension $d$. Therefore by the standard $\epsilon$-net argument, a random sample of size $O(d B(d,\eps)/\gamma^2)$ from the reference distribution will, with high probability, give a set of points $S$ that $\gamma$-fools the test (for any $\gamma > 0$). By choosing $\gamma = 1/\mbox{poly}(d,n,e^\eps/\eps)$ we can ensure that the effect on privacy and accuracy is negligible (relative to the error introduced due to privacy). Thus one can compress the communication to $\log_2(|S|) = O(\log (dn/\eps) + \eps)$ bits unconditionally (with negligible effect on accuracy and privacy).
\end{rem}
\ifconf

While the reduction  preserves the accuracy asymptotically, it is natural to ask if it results in a worse constant in practice.
In SM we investigate this question empirically by comparing the accuracy of original $\PrivUnit$ to the algorithm we get by varying the number of repetitions $m$.
Our results show that, for example using 2 calls to an $\eps=4$ $\PrivUnit$ randomizer instead of a single call to $\PrivUnit$ with $\eps = 8$ increases the expected squared error by a modest $38\%$ while speeding up the randomizer by a factor of $\approx 27$. Thus while there is some trade-off between computation and accuracy here, we can get close to optimal accuracy at a small computational cost.
\else

\newcommand{\SQKR}{\ensuremath{\texttt{SQKR}}}
\subsection{Empirical Comparison of Mean Estimation Algorithms}
\eat{
There are at least two known algorithms for $\ell_2$ mean estimation that have low communication. \citet{ErlingssonFMRSTT2020} show that the $\PrivHemi$ algorithm~\citep{duchi2018minimax} can be compressed so as to send one bit (and a seed). Further, \citet{chen2020breaking} show an algorithm called $\SQKR$ that communicates $\eps$ bits (and a seed). In comparison, our approach allows us to compress any algorithm, and hence we can losslessly compress the $\PrivUnit$ algorithm~\citep{bhowmick2019protection}, which is asymptotically optimal. In this section, we empirically compare these algorithms.

We consider the performance of these algorithms on vectors of norm exactly $1$. This is essentially without loss of generality for high-dimensional vectors as a vector inside the unit ball in $\Re^d$ can be lifted to a unit vector in $\Re^{d+1}$, and projecting back a report to $d$ dimensions preserves the mean and only reduces the variance.

}

  While $\PrivUnit$, $\SQKR$ and the repeated version of $\PrivHS$ (using Lemma~\ref{lem:repetition}) are asymptotically optimal, the accuracy they achieve in practice may be different. Therefore we empirically compare these algorithms. 
In our first comparison we consider four algorithms. The $\PrivHemi$ algorithm outputs a vector whose norm is fully defined by the parameters $d, \eps$: the output vector has norm $B(d,\eps) = \frac{e^{\eps}+1}{e^\eps-1}\frac{\sqrt{\pi}}{2} \frac{d \Gamma(\frac{d-1}{2} + 1)}{\Gamma(\frac{d}{2}+1)}$. The variance is then easily seen to be $(B^2 + 1 \pm 2B)/n$ when averaging over $n$ samples. For large dimensional settings of interest, $B \gg 1$ so this expression is very well approximated by $B^2/n$ and we use this value as a proxy. As an example, for $d=2000, \eps=8$, $B^2 \approx 3145$ so that the proxy is accurate up to $3.5\%$ (and the proxy is even more accurate when $d$ is significantly larger which is the typical setting where compression).
   For $\SQKR$, we use the implementation provided by the authors at \citep{KashinImplement2021} (specifically, second version of the code that optimizes some of the parameters). We show error bars for the empirical squared error based on 20 trials.

    The $\PrivUnit$ algorithm internally splits its privacy budget $\eps$ into two parts $\eps_0, \eps_1 = 1-\eps_0$. As in the case of $\PrivHemi$, the output of $\PrivUnit$ (for fixed $d, \eps_0, \eps_1$) has a fixed squared norm which is the proxy we use for variance. We first consider the default split used in the experiments in~\citep{bhowmick2019protection} and refer to it as $\PrivUnit$. In addition, we optimize the splitting so as to minimize the variance proxy, by evaluating the expression for the variance proxy as a function of the $\theta = \eps_0 / \eps$, for $101$ values of  $\theta = 0.00, 0.01, 0.02,\ldots, 0.99, 1.0$. We call this algorithm $\PrivUnitOpt$. Note that since we are optimizing $\theta$ to minimize the norm proxy, this optimization is data-independent and need only be done once for a fixed $\eps$. For both variants of $\PrivUnit$, we use the norm proxy in our evaluation; as discussed above, in high-dimensional settings of interest, the proxy is nearly exact.


Figure~\ref{fig:l2error} (Left) compares the expected squared error of these algorithms for $d=1,000$, $n=10,000$ and $\eps$ taking integer values from $1$ to $8$. These plots show both $\PrivUnit$ and $\PrivUnitOpt$ are more accurate than $\PrivHemi$ and $\SQKR$ in the whole range of parameters
While $\PrivHemi$ is competitive for small $\eps$, it does not get better with $\eps$ for large $\eps$. $\SQKR$ consistently has about $5\times$ higher expected squared error than $\PrivUnitOpt$ and about $2.5\times$ higher error compared to $\PrivUnit$. Thus in the large $\eps$ regime, the ability to compress $\PrivUnitOpt$ gives a $5\times$ improvement in error compared to previous compressed algorithms. We also observe that $\PrivUnitOpt$ is noticeably better than $\PrivUnit$. Our technique being completely general, it will apply losslessly to any other better local randomizers that may be discovered in the future.

As discussed earlier, one way to reduce the computational cost of compressed $\PrivUnitOpt$ is to use Lemma~\ref{lem:repetition}. 
For instance, instead of running $\PrivUnitOpt$ with $\eps=8$, we may run it twice with $\eps=4$ and average the results on the server. Asymptotically, this gives the same expected squared error and we empirically evaluate the effect of such splitting on the expected error. Figure~\ref{fig:l2error} (Right) shows the results for $\PrivHemi, \PrivUnit$ and $\PrivUnitOpt$. We plot the single repetition version of $\SQKR$ for comparison. The $\SQKR$ algorithm does not get more efficient for smaller $\eps$ and thus splitting it makes it worse in every aspect. As its error grows quickly with splitting, we do not plot the split version of $\SQKR$ in these plots. The results demonstrate that splitting does have some cost in terms of expected squared error, and going from $\eps=8$ to two runs of $\eps=4$ costs us about $2\times$ in expected squared error, and that the error continues to increase as we split more. These results can inform picking an appropriate point on the computation cost-error tradeoff and suggest that for $\eps$ around $8$, the choice in most cases will be between not splitting and splitting into two mechanisms. Note that even with two or three repetitions, $\PrivUnitOpt$ has $2-3\times$ smaller error compared to $\PrivHemi$ and $\SQKR$. For $\PrivHemi$, the sweet spot seems to be splitting into multiple mechanisms each with $\eps \approx 2$. 


\begin{figure}
\begin{subfigure}{.5\textwidth}
  \centering
  \includegraphics[width=.8\linewidth]{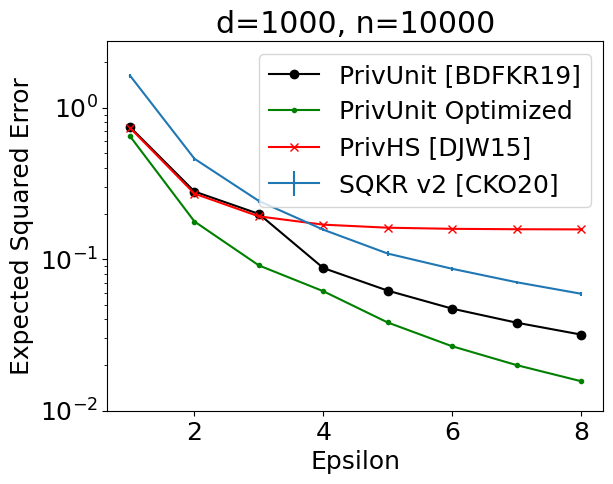}
\end{subfigure}%
\begin{subfigure}{.5\textwidth}
  \centering
  \includegraphics[width=.8\linewidth]{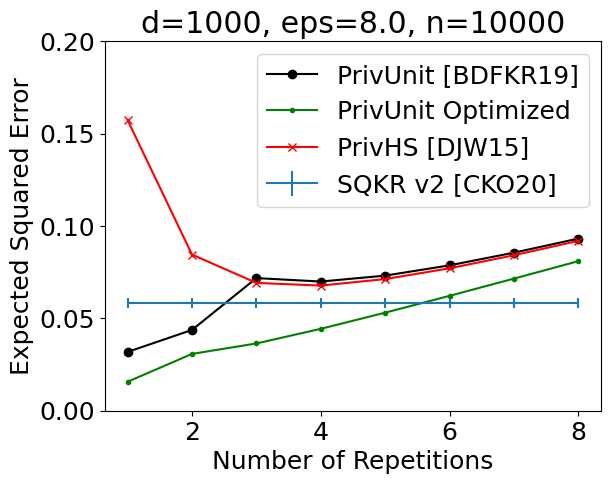}
\end{subfigure}%
    \caption{(Left) Expected $\ell_2^2$ error of mechanisms $\PrivHemi, \PrivUnit, \PrivUnitOpt$ and $\SQKR$ for values of $\eps$ between $1$ and $8$. (Right) Expected $\ell_2^2$ error of mechanisms $\PrivHemi$, $\PrivUnit$ and $\PrivUnitOpt$ for a total $\eps=8$, as a function of the number of repetitions of the mechanism with a proportionately smaller $\eps$. The $\SQKR$ v2 line is for a single run with $\eps=8$ without splitting. Both plots use $n=10,000, d=1,000$, and $95\%$ error bars for $\SQKR$ are computed based on 10 trials.}
    \label{fig:l2error}
\end{figure}


\fi


\printbibliography
\appendix

\section{A generalization of PI-RAPPOR}
\label{sec:gen-pi-rappor}
In the more general version of RAPPOR we let $q$ by any prime power and assume that $\alpha_0 q$ is an integer. We will rely on some arbitrary order on the elements of $\gf{q}$ and denote $\alpha_0 q$ smallest elements as $F_1$. We denote the indicator function of the event $z\in F_1$ by $\bool(z)$. As before, for a randomly and uniformly chosen element $z \in F$ we have that $\bool(z)$ is distributed as Bernoulli random variable with bias $\alpha_0$. For efficiency, the order needs to be chosen in a way that allows computing $\bool(z)$ and generating a random element of the field in some range in $O(\log q)$ time.

We will associate each index $j\in [k]$ with a distinct \emph{non-zero} element of $z(j) \in \gf{q}^d$, where  $d \dfn \lceil \log_q (k+1) \rceil$ (in particular, $q^d \leq k+1 < q^{d+1}$). We can describe an affine function $\phi$ over $\gf{q}^d$ using the vector of its $d+1$ coefficients: $\phi_0,\ldots,\phi_d$ and for $z\in \gf{q}^d$ we define $\phi(z) = \phi_0 + \sum_{u \in [d]} z_u \phi_u$, where addition and multiplication are in the field $\gf{q}$. For brevity we will also overload $\phi(j) \dfn \phi(z(j))$.
Each such function encodes a vector in $\gf{q}^k$ as $\phi([k]) \dfn \phi(1),\phi(2),\ldots,\phi(k)$.
The family of functions defined by all $d+1$ tuples in $\gf{q}$ is defined as $\Phi \dfn \{\phi \cond \phi \in \gf{q}^{d+1}\}$. For a randomly chosen function from this family the values of the function on two distinct non-zero values uniformly distributed and pairwise independent: for any $j_1\neq j_2 \in [k]$  and $a_1,a_2 \in \gf{q}$ we have that
\[\pr_{\phi \sim \Phi} [\phi(j_1) = a_1 \mbox{ and } \phi(j_2) = a_2] = \pr_{\phi \sim \Phi} [\phi(j_1) = a_1 ] \cdot \pr_{\phi \sim \Phi} [\phi(j_2) = a_2] = \frac{1}{q^2}.\]

For every index $j \in [k]$ and bit $b\in \zo$ we denote the set of functions $\phi$ whose encoding has bit $b$ in position $j$ by $\Phi_{j,b}$:
\equ{\Phi_{j,b} \dfn \{\phi \in \Phi \cond \bool(\phi(j)) = b \} .\label{eq:phi_j-gen}}
The generalization of the PI-RAPPOR randomizer is described below.
\begin{algorithm}[htb]
	\caption{General PI-RAPPOR randomizer}\label{alg:pi-rappor-gen}
	\begin{algorithmic}[1]
		\REQUIRE An index $j \in [k]$, $0<\alpha_0 <\alpha_1<1$, prime power $q$ s.t.~$\alpha_0 q \in \mathbb{N}$
        \STATE $d = \lceil \log_q (k+1) \rceil$
        \STATE Sample Bernoulli $b$ with bias $\alpha_1$
        \STATE Sample randomly $\phi$ from $\Phi_{j,b}$ defined in eq.~\eqref{eq:phi_j-gen}
		\STATE Send $\phi$
\end{algorithmic}
\end{algorithm}

The server side of the frequency estimation can be done exactly as before. We can convert $\phi$ to $\bool(\phi(j))$ and then aggregate the results. In addition, we describe an alternative algorithm that runs in time $\tilde O(k \|\Phi\| + n)$. This algorithm is faster than direct computation when $|\Phi| < n$. In this case we can first count the number of times each $\phi$ is used and then decode each $\phi$ only once.
This approach is based on an idea in \citep{bassily2020practical} which also relies on pairwise independence to upper bound the total number of encodings. We note that in the simpler version of PI-RAPPOR $|\Phi| \geq (k+1)^2$, whereas in the generalized version $|\Phi|$ can be as low $(k+1) \cdot (e^\eps + 1)$.
\begin{algorithm}[htb]
	\caption{Private histograms with PI-RAPPOR}\label{alg:agg-rappor-gen}
	\begin{algorithmic}[1]
		\REQUIRE $0<\alpha_0 <\alpha_1<1$, prime power $q$
		\STATE Receive $\phi^1,\ldots,\phi^n$ from $n$ users.
		\FOR{$\phi \in \Phi$}
            \STATE $n_\phi = 0$
        \ENDFOR
		\FOR{$i \in [n]$}
            \STATE $n_{\phi^i} += 1$
        \ENDFOR

        \STATE $\mbox{sum} = -\alpha_0,\ldots,-\alpha_0$
		\FOR{$\phi \in \Phi$}
            \STATE $\mbox{sum} += n_\phi \cdot \bool(\phi^i)$
        \ENDFOR
        \STATE $\tilde c = \frac{1}{\alpha_1-\alpha_0} \mbox{sum}$
        \STATE Return $\tilde c$
\end{algorithmic}
\end{algorithm}

It is easy to see that pairwise-independence implies that privacy and utility guarantees of the generalized PI-RAPPOR are the same as for the simpler version we described before. The primary difference is in the computational costs.
\begin{lem}
\label{lem:rappor-communication-computation-gen}
PI-RAPPOR randomizer (Alg.~\ref{alg:pi-rappor-gen}) can be implemented in $\tilde O(\log k)$ time and uses $\lceil \log_2 |\Phi|\rceil \leq \log_2 k + 2\log_2 q + 1$ bits of communication.
\end{lem}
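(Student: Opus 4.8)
The lemma has two parts—a communication bound and a running-time bound—and the plan is to treat them separately, with essentially all of the work going into the running-time bound, since the communication bound is a one-line counting argument.

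For communication: the randomizer sends a single $\phi \in \Phi$, and $\phi$ is specified by its $d+1$ coefficients in $\gf{q}$, so $|\Phi| = q^{d+1}$ and $\phi$ is encoded in $\lceil \log_2 |\Phi|\rceil = \lceil (d+1)\log_2 q\rceil$ bits. Since $d = \lceil \log_q(k+1)\rceil$ we have $q^{d-1} < k+1$, and as $q^{d-1}$ is a positive integer this yields $q^{d-1} \le k$, hence $(d-1)\log_2 q \le \log_2 k$; adding $2\log_2 q$ and rounding up gives $\lceil (d+1)\log_2 q\rceil \le \log_2 k + 2\log_2 q + 1$, which is the stated bound.

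For the running time I would walk through the three steps of Algorithm~\ref{alg:pi-rappor-gen}. Computing $d$ and the encoding $z(j)\in \gf{q}^d\setminus\{0\}$ of the input—taking $z(j)$ to be the base-$q$ representation of $j$, which is nonzero since $j\ge 1$—costs $\tilde O(\log k)$. Sampling $b\sim\Bern(\alpha_1)$ costs $O(\log q)$: in the cases of interest $\alpha_1 q$ is an integer, so $b$ is $\bool$ (with the complementary threshold) applied to a uniform field element, and a uniform element of $\gf{q}$ is produced in $O(\log q)$ time. The bulk of the computation is sampling $\phi$ uniformly from $\Phi_{j,b}$: here I would first draw $\phi_1,\ldots,\phi_d$ independently and uniformly from $\gf{q}$ (using $d\lceil\log_2 q\rceil = O(\log k)$ random bits) and compute $c \dfn \sum_{u\in[d]} z(j)_u \phi_u$, which is $d$ multiplications in $\gf{q}$, each $\tilde O(\log q)$, for a total of $\tilde O(d\log q) = \tilde O(\log k)$; then, given $c$, the constraint $\bool(\phi_0 + c) = b$ forces $\phi_0$ into the set $\{a - c \cond \bool(a) = b\}$, i.e. a shift by $-c$ of $F_1$ (if $b=1$) or of its complement (if $b=0$), from which a uniform element is drawn in $O(\log q)$ time, determining $\phi_0$ and hence $\phi$. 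Summing the steps gives total running time $\tilde O(\log k)$.

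The only non-routine point—the main obstacle—is producing a uniform element of $\{a - c \cond \bool(a) = b\}$ in $\tilde O(\log q)$ time, and this is precisely where the hypothesis stated just before the algorithm is used: the order on $\gf{q}$ is chosen so that membership in $F_1$ and uniform sampling from an order-interval of $\gf{q}$ are each computable in $O(\log q)$ time. Under that hypothesis the set $\{a - c \cond \bool(a) = b\}$ is a union of at most two order-contiguous ranges (a shift can split one range into two by wrap-around), which are sampled from in proportion to their sizes; for $q$ prime this is concrete via the identification of $\gf{q}$ with $\{0,1,\ldots,q-1\}$ and $F_1 = \{0,\ldots,\alpha_0 q - 1\}$, and for a general prime power $q$ one fixes an analogous lexicographic-type order on the vector-space representation of $\gf{q}$ with the same two properties.
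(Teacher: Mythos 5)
Your proposal is correct and follows essentially the same route as the paper: the running time is reduced to sampling uniformly from $\Phi_{j,b}$ by drawing $\phi_1,\ldots,\phi_d$ uniformly, computing $c=\sum_{u\in[d]} z(j)_u\phi_u$ in $d\cdot\tilde O(\log q)=\tilde O(\log k)$ time, and then sampling the constrained $\phi_0$ in $O(\log q)$ time via the order hypothesis, while the communication bound is the routine count $|\Phi|=q^{d+1}$ with $q^{d-1}\le k$ (a step the paper leaves implicit). One small caveat: your claim that the shifted set $\{a-c \cond \bool(a)=b\}$ consists of at most two order-contiguous ranges is guaranteed only when $q$ is prime (which is the case the paper's argument actually spells out); for a genuine prime power with a lexicographic-type order, coordinate-wise subtraction can fragment an order-interval into more than two pieces, but the bound is unaffected because you can instead draw $a$ uniformly from the order-interval $\{a \cond \bool(a)=b\}$ and set $\phi_0=a-c$, which uses only the two assumed properties of the order plus one subtraction in $\gf{q}$.
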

\begin{proof}
  The running time of PI-RAPPOR is dominated by the time to pick a random and uniform element in $\Phi_{j,b}$. This can be done by picking $\phi_1,\ldots,\phi_d \in \gf{p}$ randomly and uniformly. We then need to pick $\phi_0$ randomly and uniformly from the set
  $\{\phi_0 \cond \bool(\phi(j)) = b \}$. Given the result of inner product $\sum_{i\in [d]} z(j)_i \phi_i$ this can be done in $O(\log p)$ time as explained in the proof of Lemma~\ref{lem:rappor-communication-computation}. The computation of the inner product can be done in $d \cdot \tilde O(\log q) = \tilde O(\log k)$.
\end{proof}

The analysis of the running time of the aggregation algorithm Alg.~\ref{alg:agg-rappor-gen} follows from the discussion above.
\begin{lem}
\label{lem:rappor-decode-computation-gen}
The server-side of the histogram construction for generalized PI-RAPPOR (Alg.~\ref{alg:agg-rappor-gen}) can be done in time $\tilde O(n + k |\Phi| \log k)$.
\end{lem}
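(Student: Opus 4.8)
The plan is to bound the running time of Algorithm~\ref{alg:agg-rappor-gen} phase by phase: (i) reading the $n$ reports and building the histogram $\{n_\phi\}_{\phi\in\Phi}$ of how many times each function is reported; (ii) iterating over all $\phi\in\Phi$, decoding the bit vector $\bool(\phi[k])$ once per $\phi$ and accumulating $n_\phi\cdot\bool(\phi[k])$ into the running sum; and (iii) the final rescaling by $1/(\alpha_1-\alpha_0)$. The whole point of the histogram step is that each distinct function is decoded only once, so the total decoding work is $O(|\Phi|)$ evaluations of the $k$-coordinate encoding rather than $O(n)$ of them; this is what makes the algorithm preferable to Alg.~\ref{alg:agg-rappor} when $|\Phi|<n$.

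For phase (i), I would store the counts $n_\phi$ in a dictionary (hash table or balanced search tree) keyed by the encoding of $\phi$, which is a tuple in $\gf{q}^{d+1}$ of bit-length $O(d\log q)=O(\log k+\log q)$. Each of the $n$ reports then triggers one lookup and one increment, for a total of $\tilde O(n)$ time; alternatively, a plain array of size $|\Phi|$ initialized to $0$ works and costs $O(|\Phi|)$, which is dominated by the phase-(ii) bound below. Initializing the length-$k$ output vector $\mbox{sum}$ to $(-\alpha_0,\dots,-\alpha_0)$ costs $O(k)$.

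For phase (ii), fix $\phi\in\Phi$. Computing $\bool(\phi[k])$ requires evaluating $\phi(j)=\phi_0+\sum_{u\in[d]}z(j)_u\phi_u$ for every $j\in[k]$; as argued in the proof of Lemma~\ref{lem:rappor-communication-computation-gen}, each such evaluation is an inner product of two $d$-dimensional vectors over $\gf{q}$ and hence costs $d\cdot\tilde O(\log q)=\tilde O(\log k)$ (using $d\log q=\Theta(\log k)$), after which applying $\bool$ costs $O(\log q)$. So $\bool(\phi[k])$ is obtained in $\tilde O(k\log k)$ time. Adding $n_\phi\cdot\bool(\phi[k])$ into $\mbox{sum}$ is then $k$ conditional additions of integers bounded by $n$, i.e.\ $\tilde O(k)$ time. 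Summing over all $\phi\in\Phi$ gives $\tilde O(k|\Phi|\log k)$ for phase (ii).

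Phase (iii) rescales a $k$-vector with entries bounded by $n$, costing $\tilde O(k)$. Adding the three phases yields $\tilde O(n)+O(k)+\tilde O(k|\Phi|\log k)+\tilde O(k)=\tilde O(n+k|\Phi|\log k)$, as claimed. There is no real obstacle here; the only points that need care are (a) using the histogram so that each $\phi$ is decoded exactly once, and (b) accounting correctly for the $d$ field multiplications inside each evaluation of $\phi$, which is precisely the source of the $\log k$ factor. One could shave this factor (when $q$ is prime) by enumerating the $z(j)$ along a mixed-radix counter over $\gf{q}^d$ and updating each inner product incrementally in amortized $O(1)$ field operations, giving $\tilde O(n+k|\Phi|)$; but this refinement is not needed for the stated bound.
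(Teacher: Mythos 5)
Your proof is correct and follows essentially the same argument the paper intends: count each distinct $\phi$ once in $\tilde O(n)$ time, then decode each $\phi\in\Phi$ exactly once at cost $\tilde O(k\log k)$ per function (via Lemma~\ref{lem:rappor-communication-computation-gen}), which sums to the stated $\tilde O(n+k|\Phi|\log k)$ bound. The paper leaves this accounting implicit (``follows from the discussion above''), and your phase-by-phase write-up simply makes it explicit, with the incremental-evaluation remark at the end being a nice but unneeded extra.
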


The running time depends on $|\Phi| \leq k q^2$ and thus also depends on the choice of $q$. The effect of the choice of $q$ on the utility of Algorithm~\ref{alg:agg-rappor-gen} is the same as the effect of the choice of $p$ on the utility of the simple PI-RAPPOR (given in Lemma~\ref{lem:pi-rappor-approx}). Thus we can assume that $q = O(\max\{e^\eps,1/\eps\})$. 

\end{document}